\documentclass[12pt,a4paper]{article}
\usepackage[T1]{fontenc}
\usepackage[top=1in,left=1in,right=1in,bottom=1.4in]{geometry}
\usepackage{graphicx}
\usepackage{xspace}
\usepackage{anyfontsize}
\usepackage[fontsize=11.8pt]{fontsize}
\usepackage{helvet}

\usepackage{mathpazo,newpxtext}
\usepackage[x11names,dvipsnames]{xcolor}
\definecolor{myblue}{HTML}{15348e}
\definecolor{linkblue}{HTML}{0050bf}
\usepackage{titlesec}
\titleformat{\section}
    {\color{myblue}\Large\bfseries}
    {\thesection}
    {1em}
    {}
\titleformat{\subsection}
    {\color{myblue}\large\bfseries}
    {\thesubsection}
    {1em}
    {}
\titleformat{\subsubsection}
    {\color{myblue}\normalsize}
    {\thesubsubsection}
    {1em}
    {}
\titleformat{\paragraph}[runin]
  {\bfseries\normalsize\color{myblue}}
  {}
  {0em}
  {}
  [. \hspace{0.5em}]

\titlespacing*{\section}
  {0pt}    {2.5ex plus 1ex minus .2ex} {1.5ex plus .2ex}           

\titlespacing*{\subsection}
  {0pt}    {1.5ex plus .5ex minus .2ex} {1ex plus .2ex}           

\usepackage[bottom]{footmisc}
\usepackage{amsmath,amssymb,bm,mathtools}
\allowdisplaybreaks
\usepackage{amsthm}
\usepackage{thmtools}
\definecolor{theorems}{rgb}{.6,.1,0}
\declaretheoremstyle[headfont=\color{theorems}\bfseries,bodyfont=\itshape]{colortheorem}
\declaretheoremstyle[headfont=\color{theorems}\bfseries]{colorothers}

\declaretheorem[style=colortheorem]{theorem}
\declaretheorem[style=colortheorem]{proposition}
\declaretheorem[style=colortheorem]{corollary}
\declaretheorem[style=colortheorem]{lemma}
\declaretheorem[style=colortheorem]{observation}
\declaretheorem[style=colortheorem]{claim}
\colorlet{colorrm}{Black}
\declaretheoremstyle[headfont=\color{theorems}\bfseries,bodyfont=\normalfont]{colorrm}
\declaretheorem[style=colorrm]{example}
\declaretheorem[style=colorrm]{remark}

\newtheoremstyle{continuedexample}
  {\topsep}       {\topsep}       {}      {}              {\color{colorrm}\bfseries}     {.}             {.5em}          {\thmnote{#3}}  \theoremstyle{continuedexample}
\newtheorem*{continuedexampleinner}{}
\theoremstyle{plain}

\makeatletter
\renewenvironment{proof}[1][\proofname]{\par\pushQED{\qed}\normalfont
  \topsep6\p@\@plus6\p@\relax
  \trivlist
  \item[\hskip\labelsep
        \bfseries #1\@addpunct{.}]\ignorespaces
}{\popQED\endtrivlist\@endpefalse
}
\makeatother

\usepackage{subcaption}
\usepackage{tikz}
\usepackage{pgfplots}
\usetikzlibrary{arrows.meta,automata}
\pgfplotsset{compat=1.18}
\usepackage{enumitem}
\setlist[itemize]{itemsep=3pt}
\setlist[enumerate]{itemsep=3pt,label=(\alph*)}
\usepackage{natbib}
\usepackage[setpagesize=false,bookmarks=true,colorlinks=true,
    linkcolor=linkblue,urlcolor=linkblue,citecolor=linkblue,
    breaklinks=true]{hyperref}

\usepackage[capitalise]{cleveref}
\crefname{fig}{Figure}{Figures}
\crefname{tab}{Table}{Tables}
\crefname{assumption}{Assumption}{Assumptions}
\crefname{appendix}{Appendix}{Appendices}
\Crefname{appendix}{Appendix}{Appendices}
\crefname{observation}{Observation}{Observations}
\crefname{footnote}{Footnote}{Footnotes}
\crefname{claim}{Claim}{Claims}
\crefformat{footnote}{#2#1#3}
\crefformat{enumi}{#2#1#3}
\Crefformat{enumi}{#2#1#3}

\usepackage{etoolbox}
\newcommand{\addQEDstyle}[2]{\AtBeginEnvironment{#1}{\pushQED{\qed}\renewcommand{\qedsymbol}{#2}}\colorlet{QEDColor}{theorems}
  \AtEndEnvironment{#1}{\popQED}
}
\addQEDstyle{example}{{\color{QEDColor}$\blacklozenge$}}
\addQEDstyle{remark}{{\color{QEDColor}$\blacklozenge$}}
\renewcommand{\qedsymbol}{$\blacksquare$}

\makeatletter
\patchcmd{\maketitle}
 {\def\@makefnmark}
 {\def\@makefnmark{}\def\useless@macro}
 {}{}
\makeatother

\newcommand{\diag}[1]{\operatornamewithlimits{diag}[#1]}

\newcommand{\argmax}{\operatornamewithlimits{arg\,max}}

\newcommand{\sgn}{\operatorname{sgn}}

\providecommand{\Mult}{\operatorname{Mult}}
\providecommand{\Var}{\operatorname{Var}}
\providecommand{\trace}{\operatorname{trace}}
\providecommand{\la}{\langle}
\providecommand{\ra}{\rangle}
\providecommand{\Supp}{\operatorname{support}}
\providecommand{\Int}[1]{\operatorname{int}(#1)}

\providecommand{\OL}[1]{\overline{#1}}
\providecommand{\WH}[1]{\widehat{#1}}

\providecommand{\BbRe}{\mathbb{R}_{\ge0}}
\providecommand{\BbRg}{\mathbb{R}_{> 0}}
\providecommand{\BbZe}{\mathbb{Z}_{\ge 0}}

\providecommand{\br}{\operatorname{br}}
\providecommand{\BR}{\operatorname{BR}}

\providecommand{\TlL}{{\widetilde{L}}}
\providecommand{\TlR}{{\widetilde{R}}}
\providecommand{\TlF}{{\widetilde{F}}}
\providecommand{\TlG}{{\widetilde{G}}}
\providecommand{\Tlomega}{{\widetilde{\omega}}}
\providecommand{\Olu}{{\overline{u}}}

\providecommand{\Fix}{{\operatorname{FP}}}
\providecommand{\SfC}{{\mathsf{C}}}

\providecommand{\Htv}{\widehat{v}}
\providecommand{\Htq}{\widehat{q}}
\providecommand{\Htsigma}{\widehat{\sigma}}

\providecommand{\UlHeading}[1]{\medskip\noindent{{\textbf{#1}}.\hspace{.5em}}}

\providecommand{\Toz}{\downarrow0}

\usepackage{quoting}
\usepackage{setspace}

\providecommand{\Vt}[1]{\bm{#1}}

\providecommand{\Rme}{\mathrm{e}}

\providecommand{\Rmi}{\mathrm{i}}

\providecommand{\RmH}{\mathrm{H}}

\providecommand{\ClF}{\mathcal{F}}

\providecommand{\ClL}{\mathcal{L}}

\providecommand{\BbC}{\mathbb{C}}

\providecommand{\BbE}{\mathbb{E}}

\providecommand{\BbP}{\mathbb{P}}

\providecommand{\BbR}{\mathbb{R}}

\providecommand{\Htq}{\hat{q}}

\providecommand{\Htv}{\hat{v}}

\providecommand{\Htsigma}{\hat{\sigma}}

\providecommand{\Brx}{\bar{x}}

\providecommand{\Tlf}{\tilde{f}}

\providecommand{\Tlu}{\tilde{u}}

\providecommand{\Tlx}{\tilde{x}}

\providecommand{\TlF}{\tilde{F}}
\providecommand{\TlG}{\tilde{G}}

\providecommand{\TlL}{\tilde{L}}

\providecommand{\TlR}{\tilde{R}}

\providecommand{\Tlomega}{\tilde{\omega}}

\providecommand{\Dtx}{\dot{x}}
\providecommand{\Dty}{\dot{y}}

\providecommand{\Is}{\equiv}

\begin{document}

\setstretch{1.25}

\title{\large\bfseries{Sampling Logit Equilibrium and Endogenous Payoff Distortion}}

\author{\shortstack[c]{{Minoru Osawa}\\[.8em]\small \itshape Institute of Economic Research, Kyoto University}\\\small\ttfamily \href{mailto:osawa.minoru@gmail.com}{\color{myblue} osawa.minoru@gmail.com}}
\date{\vskip 1em\normalsize\today}
\maketitle 

\begingroup
\renewcommand\thefootnote{}
\footnotetext{\hspace{-2.15em}Website: {\ttfamily\href{https://www.minoruosawa.com/}{www.minoruosawa.com}}. I thank Srinivas Arigapudi, Chiaki Hara, Jonathan Newton, Daisuke Oyama, Yannick Viossat, and Dai Zusai for their comments. I also thank the editor and the referees for their valuable comments and suggestions.
I am grateful to the late Professor Sandholm for his beautiful book, ``\emph{Population Games and Evolutionary Dynamics},'' which has been a constant source of inspiration.
Financial support from JSPS KAKENHI JP22K04353 and JP26K07706 is gratefully acknowledged. The author acknowledges the use of ChatGPT 5.6 Sol and Claude Fable 5 in the course of this research. All mathematical arguments and proofs in the final manuscript were checked and written by the author.} 
\endgroup

\vspace{-2em}

\vspace{-1.5em}

\renewcommand{\abstractname}{}
\begin{abstract}
\noindent 
We introduce \emph{sampling logit equilibrium} (SLE) for population games in which agents infer payoffs from a finite sample of $k$ opponents' actions and respond according to a logit choice rule. We find that sampling error systematically distorts incentives. Actions whose inferred payoffs are relatively more variable earn a \emph{variance premium}, while nonlinear payoffs generate a \emph{curvature premium} through Jensen effects. For large samples, an SLE is approximated by a logit equilibrium of a \emph{virtual game} whose payoffs include these premiums. For linear games, when sampling error and logit noise vanish at comparable rates, their interaction around completely mixed Nash equilibria is reduces to a Gaussian--Gumbel limiting model. We also establish exact finite-$k$ uniqueness and stability results. In particular, in two-action coordination games with a $1/k$-dominant action, the unique SLE converges to the risk-dominant equilibrium as logit noise vanishes.
\end{abstract}
\bigskip

{\small

\noindent \textbf{Keywords: } Evolutionary game dynamics, bounded rationality, quantal response equilibria, sampling equilibrium, equilibrium selection. 

\noindent \textbf{JEL Classification: }C72, C73. }

\clearpage 

\section{Introduction}

Decision making in strategic environments often departs from perfect rationality through two distinct channels. 
One is \emph{stochastic choice}: even when agents observe payoffs correctly, their responses may be probabilistic due to idiosyncratic shocks or cognitive noise. 
The other is an \emph{informational constraint}: agents frequently evaluate actions using only limited observations of the environment, forming payoff assessments from a small sample of interactions. 
These mechanisms introduce noise in different ways. 
Stochastic choice perturbs the decision rule, whereas finite sampling distorts the evaluation of payoffs themselves.

These channels are often studied separately. 
Quantal-response models combine full-information payoff evaluation with stochastic choice \citep{McKelvey-Palfrey-GEB1995}, whereas sampling models typically combine limited observation with deterministic best response \citep{Osborne-Rubinstein-GEB2003}. 
This separation leaves open a natural question: how does behavior change when agents both observe only a finite sample of the environment and respond stochastically to the resulting payoff signals?

This study analyzes this interaction in a simple large-population framework combining finite sampling with logit stochastic choice. 
Under the $(k,\eta)$-\emph{sampling logit choice rule}, each agent draws $k$ independent samples of opponents' actions from the population, evaluates the resulting sample-based payoffs, and then selects an action according to a logit rule with noise level $\eta>0$. 
A \emph{sampling logit equilibrium} (SLE) is defined as a fixed point of this process.

The framework connects naturally to existing models in two limiting cases. 
For fixed $\eta>0$, sampling logit choice approaches the standard logit choice rule as the sample size $k$ grows. 
For fixed $k$, reducing $\eta$ concentrates choice on best responses to each sampled population state, linking the model to the \emph{sampling equilibrium} of \citet{Osborne-Rubinstein-GEB2003} and the sampling best-response dynamics studied by \citet{Oyama-Sandholm-Tercieux-TE2015}. 
By varying $(k,\eta)$, the model therefore connects fully informed stochastic choice with limited-information deterministic best response.

Our first main result shows that finite sampling and stochastic choice systematically distort incentives, and that these distortions have a simple deterministic representation. 
When the sample size $k$ is sufficiently large, the SLE of a game can be approximated by the logit equilibrium of a \emph{virtual game} with modified payoffs. 
The modification consists of two components. 
A \emph{variance premium} favors actions with greater variability in inferred payoffs relative to competing actions. 
A \emph{curvature premium} captures Jensen-type shifts produced when sampling variation passes through nonlinear payoff functions. 
Together, they show that finite sampling does not merely add randomness to behavior but systematically alters the incentives agents face.

The virtual-game representation is accurate when sampling noise is small relative to logit noise. 
In this case, however, sampling can only modify full-information logit behavior at the margin. 
Our second main result characterizes the critical regime in which the two frictions remain of the same order, so that neither reduces to a perturbation of the other.
Specifically, we assume that the noise level falls as the sample size grows, and $\sqrt{k}\eta$ converges to a positive constant $\tau$. 
When we consider this limit in linear population games, the SLE near a completely mixed Nash equilibrium converges to that equilibrium at rate $1/\sqrt{k}$, and we can characterize its scaled displacement as the solution of an exact limiting model with Gaussian--Gumbel noise structure.
The Gaussian component is the limiting sampling error in the inferred population state, correlated across actions through the payoff matrix, whereas the Gumbel component represents logit choice noise.
Also, the large-$\tau$ expansion of the limiting model recovers the variance premium discussed above. 

Beyond the asymptotic results for large $k$, we establish several exact results in selected small-$k$ environments. 
When each agent observes one opponent, the equilibrium is unique and globally attracting in any population game. 
When each agent observes two opponents, the same conclusion holds for two-action games. 
Moreover, in the class of two-action coordination games satisfying the \emph{$1/k$-dominance} condition, the equilibrium is unique for every $\eta>0$ and converges to the risk-dominant Nash equilibrium as logit noise vanishes, in line with earlier selection results for sampling-based dynamics \citep{Sandholm-IJGT2001,Kreindler-Young-GEB2013,Oyama-Sandholm-Tercieux-TE2015}. 
These results illustrate how finite sampling can sharpen equilibrium selection when agents operate under limited information.

\subsection*{Related literature}

The present study relates to four strands of literature: extensions of logit QRE, models of finite sampling and limited observation, stochastic evolutionary dynamics, and the virtual-payoff approach.

First, the framework extends the \emph{quantal response equilibrium} (QRE) paradigm, which incorporates stochastic choice into equilibrium analysis through random utility models \citep{McKelvey-Palfrey-GEB1995,McKelvey-Palfrey-EE1998,Goeree-Holt-Palfrey-EE2005}. 
In QRE, agents evaluate expected payoffs under full information and randomness arises from idiosyncratic payoff shocks. 
The present study instead introduces stochasticity through finite sampling of the environment. 
Payoff evaluations are based on a small number of observations, so randomness enters through distorted payoff estimates rather than purely idiosyncratic payoff noise. 
Experimental comparisons of stationary concepts have shown that models incorporating sampling often fit observed play better than QRE \citep{Selten-Chmura-AER2008}, suggesting that sampling may represent an important source of behavioral noise.

Second, the analysis builds on the literature studying limited observation in games. 
Early contributions include \citet{Osborne-Rubinstein-AER1998,Osborne-Rubinstein-GEB2003}, who introduced the concept of sampling equilibrium. 
Related work examines strategic behavior when agents observe only partial information about the environment or other players' actions \citep[e.g.,][]{Spiegler-ET2006,Spiegler-REStud2006,Salant-Cherry-ECTA2020,Danenberg-Spiegler-JPEM2025}. 
Learning dynamics under such informational constraints have also been studied extensively \citep[e.g.,][]{Sethi-GEB2000,Sandholm-IJGT2001,Oyama-Sandholm-Tercieux-TE2015,Heller-Mohlin-REStud2018,Mantilla-Sethi-Cardenas-JPET2020,Sawa-Wu-GEB2023,Arigapudi-etal-AEJMicro2024,Heller-Arigapudi-DP2026}. 
The framework here is particularly close to \citet{Oyama-Sandholm-Tercieux-TE2015}, who analyze dynamics in which agents occasionally sample opponents and best respond to the resulting empirical distribution. 
The present study replaces deterministic best responses with logit responses, yielding a smoother dynamic and enabling a tractable approximation and interpretation in terms of a virtual game.

Third, the model connects to the literature on \emph{stochastic evolutionary dynamics} and stochastic stability in games \citep{Foster-Young-TPB1990,Young-ECTA1993,Kandori-Mailath-Rob-ECTA1993,Ellison-ECTA1993}. 
In these models, random experimentation or errors drive transitions between states, and equilibrium selection is studied through the long-run behavior of the induced Markov process. 
A complementary stochastic-approximation approach is provided by \citet{Benaim-Weibull-ECTA2003}, who derive deterministic game dynamics as the finite-time expected behavior of stochastic revision processes in finite populations. 
Logit choice has been widely used as a modeling device in finite-population stochastic evolutionary models \citep[e.g.,][]{Blume-GEB1993,Blume-GEB1995,Alos-Ferrer-Netzer-GEB2010,Marden-Shamma-GEB2012}. 
The study most closely related to the present analysis is \citet{Kreindler-Young-GEB2013}, who analyze a finite-population model in which agents observe a random sample of opponents before making a logit choice. 
In the canonical two-action coordination game, their partial-information mean-field dynamic coincides exactly with the learning dynamic associated with the sampling logit choice rule. 
Their focus is finite-population hitting times in this two-action environment. 
The present study formulates a static equilibrium concept for general $n$-action population games and derives its virtual-payoff decomposition. 

Finally, the analysis relates to the \emph{virtual payoff} approach for representing stochastic choice models through deterministic payoff perturbations. 
In large-population games, \citet{Hofbauer-Sandholm-JET2007} show that stationary behavior under stochastic choice can often be represented as if agents were responding to deterministically perturbed payoffs. 
Related ideas appear in earlier discussions of logit-based models of product differentiation \citep{Anderson-dePalma-Thisse-Book1992}. 
We derive an explicit virtual payoff representation that captures the distortions generated by finite sampling under stochastic choice. 
In equilibrium, agents behave \emph{as if} they were responding to these modified incentives.

\section{Model}
\label{sec:model}
\subsection{Population games}
We focus on a large-population game played by a single homogeneous population. 
There is a unit mass of anonymous agents, each of whom chooses a pure action, where $S \Is \{1,2,\hdots, n\}$ denotes the common, finite set of available actions. 
The simplex $X \Is \{x \in \BbRe^n : \sum_{i\in S} x_i = 1\}$ represents the set of \emph{population states} where $x_i \in [0,1]$ is the fraction of agents playing action $i$. 
A state at which all agents play action $i$ is called a \emph{pure population state} and denoted by $e_i$. 
The function $F:X \to \BbR^n$ describes a game's payoffs, with $F_i(x)$ being the payoff obtained at state $x$ by agents playing action $i \in S$. 
Where $S$ is understood, $F$ identifies a \emph{population game}. 

The most basic instances of population games are those generated by random matching in normal form games. 
Suppose that, given the population state, agents are randomly matched to play a normal form game with payoff matrix $A = [a_{ij}] \in \BbR^{n \times n}$, where $a_{ij}$ is the payoff for an agent playing action $i \in S$ matched to another agent playing action $j \in S$. 
Then, the expected payoffs are given by $F(x) = Ax$, which we may call a \emph{linear population game}. 
We can identify a linear population game with its base payoff matrix $A$. 
    
Below, we introduce four choice rules in population games relevant for our discussion. 
For each rule, the associated game dynamic is defined as the expected motion of the population state \citep[see][Ch.4]{Sandholm-Book2010}. 

\subsection{Best response}
Given a population game $F$, the pure and mixed best response correspondences $\br:X \Rightarrow S$ and $\BR: X \Rightarrow X$ are respectively defined as
\begin{align}
    & \br(x) \Is \argmax_{i\in S} F_i(x)
    \quad \text{and } \\
    & \BR(x) \Is \argmax_{y \in X} \la y, F(x) \ra
    = \left\{y \in X : \Supp(y) \subseteq \br(x) \right\}, 
\end{align}
where we define $\la u, v \ra \Is \sum_{i} u_i v_i$ for vectors. 
A population state $x \in X$ is a \emph{Nash equilibrium} of $F$ if every agent is playing a pure action that is optimal given the others' behavior, in which case $x \in \BR(x)$. 
The \emph{best response dynamic} \citep{Gilboa-Matsui-ECTA1991,Hofbauer-1995} is defined as the following differential inclusion:
\begin{align}
    \Dtx \in \BR(x) - x. 
    \tag{\textsf{BRD}}
    \label{eq:BRD}
\end{align}
See \citet[Appendix A.1]{Oyama-Sandholm-Tercieux-TE2015} for a concise summary of differential inclusions. 
This approach assumes each agent has perfect information about the current population state and is able to choose an optimal action, which can be demanding in some contexts. 

\subsection{Best response under finite sampling}

\citet{Oyama-Sandholm-Tercieux-TE2015} considers an alternative large-population model that imposes milder informational requirements. 
When an agent receives a revision opportunity, the agent first observes $k \ge 1$ independent samples of opponents' play from the population. 
The set of possible outcomes of samples of size $k$ is $Z^k \Is \{ z \in \BbZe^n : \sum_{i\in S} z_i = k \}$. 
The \emph{empirical population state} according to a sample $z \in Z^k$ is $w = \frac{1}{k} z \in X$. 
Under uniform random sampling with replacement, the probability that $z \in Z^k$ is drawn at $x\in X$ follows the multinomial distribution $\Mult(k\!\mid\!x)$. 
That is, the probability mass of drawing $z\in Z^k$ at $x \in X$ is 
\begin{align}
    & M^k(z\!\mid\! x)= \binom{k}{z_1, z_2, \cdots, z_n} \cdot 
    x_1^{z_1} \cdot x_2^{z_2} \cdot \cdots \cdot x_n^{z_n}
    \quad \in [0,1]. 
\label{eq:mult}
\end{align}
The empirical population state $w$ is the maximum likelihood estimator of the population state under this sampling model. 
The agent then evaluates payoffs based on $w$ and plays a best response. 
For each $k$, the \emph{$k$-sampling best response correspondence} $\BR^k: X \Rightarrow X$ is 
\begin{align}
    \BR^k(x) \Is \BbE\big[\!\BR(\tfrac{1}{k} z)\big] = \sum_{z \in Z^k} M^k(z\!\mid\! x)\cdot \BR(\tfrac{1}{k} z). 
\end{align}
More precisely, $y \in \BR^k(x)$ if and only if $y = \sum_{z \in Z^k} M^k(z\!\mid\! x)\cdot \alpha(z)$ where $\alpha(z) \in \BR(\frac{1}{k} z)$ for each $z \in Z^k$. 
A $k$-\emph{sampling equilibrium} \citep{Osborne-Rubinstein-GEB2003} is a fixed point of $\BR^k$ satisfying $x \in \BR^k(x)$. 
The \emph{$k$-sampling best response dynamic} is defined as the following differential inclusion: 
\begin{align}
    \Dtx \in \BR^k(x) - x. 
    \tag{\textsf{SBRD}}
    \label{eq:sBRD}
\end{align}
While demanding less information about the population state, this model still assumes agents are rational responders.\footnote{Sampling equilibrium is a special case of \emph{sampling equilibrium with statistical inference} (SESI) \citep{Salant-Cherry-ECTA2020}. \cite{Sawa-Wu-GEB2023} extensively discusses Bayesian large-population dynamics corresponding to SESI.} 

\subsection{Logit choice} 
\label{sec:logit} 

We next recall the logit choice and the logit dynamic. 
For a payoff vector $u \in \BbR^n$, the \emph{$\eta$-logit map} $\Lambda^\eta: \BbR^n \to \Int{X}$ with \emph{noise level} $\eta > 0$ is defined as follows: 
\begin{align}
    & \Lambda_i^\eta (u) 
    \Is 
    \frac
    {\exp\left(\eta^{-1} u_i \right)}
    {\sum_{j \in S} \exp\left(\eta^{-1} u_j\right)}
    \quad \in (0,1)
    & \forall i \in S. 
    \label{eq:Lambda}
\end{align}
Then, at each $x \in X$, the \emph{$\eta$-logit choice rule} $P^\eta : X \to \Int{X}$ with noise level $\eta > 0$ is
\begin{align}
    & P_i^\eta (x) 
    \Is 
    \Lambda_i^\eta(F(x))
    =
    \frac
    {\exp\left(\eta^{-1} F_i(x) \right)}
    {\sum_{j \in S} \exp\left(\eta^{-1} F_j(x)\right)}
    \quad \in (0,1)
    & \forall i \in S. 
    \label{eq:P}
\end{align}
A population state $x$ is a \emph{$\eta$-logit equilibrium} if it is a fixed point of the rule \eqref{eq:P}: $x = P^\eta(x)$. 
Logit equilibrium is by far the most common formulation of QRE. 
The large-population \emph{$\eta$-logit dynamic} \citep[][Ch.4]{Fudenberg-Levine-Book1998} is defined by\footnote{Logit revision in finite-population settings yields a Markov chain or process rather than a differential equation, as considered in, e.g., \cite{Blume-GEB1993,Blume-GEB1995,Blume-GEB2003,Hofbauer-Sandholm-JET2007,Alos-Ferrer-Netzer-GEB2010,Marden-Shamma-GEB2012}, among others. Another foundation for the logit dynamic can be found in stochastic fictitious play \citep{Hofbauer-Sandholm-ECTA2002}.} 
\begin{align}
    \Dtx = P^\eta(x) - x. 
    \tag{\textsf{LD}}
    \label{eq:LD}
\end{align}

For any $\eta > 0$, logit equilibria are positive since $P^\eta$ is positive. 
The $\eta$-logit equilibria for a game $F$ coincide with the Nash equilibria for the deterministically distorted game $F+H$, where $H_i(x) \Is -\eta \log(x_i)$. 
In this sense, $F+H$ is a ``virtual'' payoff for the $\eta$-logit equilibrium problem \citep[][Appendix]{Hofbauer-Sandholm-JET2007}.\footnote{More specifically, requiring $x^* \in X$ and $\lambda = F_i(x^*)+H_i(x^*)$ for all $i\in S$ implies $\lambda = \eta \log \sum_{j\in S} \exp(\eta^{-1} F_j(x^*))$ and $x^* = P^\eta(x^*)$. See \cite{Behrens-Murata-JUE2021} for an application in spatial economics, where $H_i(x)$ represents material congestion externalities incurred by the households residing in \emph{location} $i\in S$.}

\subsection{Logit choice under finite sampling}
Our framework is a natural synthesis of sampling best response and logit choice. 
Assume that, after observing a sample $z\in Z^k$ and adopting $w = \frac{1}{k} z$ as the estimate for the population state, each agent follows the $\eta$-logit choice rule $P^\eta$ instead of the mixed best response $\BR$. 
The induced aggregate choice rule $L^{k,\eta}: X \to \Int{X}$ defines the \emph{$(k,\eta)$-sampling logit choice rule}:  
\begin{align}
    L^{k,\eta}(x) \Is \BbE \big[P^\eta(\tfrac{1}{k} z)\big] = 
    \sum_{z \in Z^k} M^k(z\!\mid\! x)\cdot P ^\eta (\tfrac{1}{k} z). 
    \label{eq:sL-rule}
\end{align}
A \emph{$(k,\eta)$-sampling logit equilibrium} (SLE) is a fixed point of $L^{k,\eta}$ satisfying $x = L^{k,\eta}(x)$. 
The \emph{$(k,\eta)$-sampling logit dynamic} can be defined as
\begin{align}
    \Dtx = L^{k,\eta} (x) - x. 
    \tag{\textsf{SLD}}
    \label{eq:sLD}
\end{align} 

Brouwer's fixed point theorem implies that an SLE exists for any $1 \le k < \infty$ and $\eta > 0$. 
The dynamic \eqref{eq:sLD} admits a unique global solution for every initial population state in $X$ as $L^{k,\eta}$ is continuous and globally Lipschitz on $X$. 
All SLE are necessarily positive because $L^{k,\eta}$ is strictly positive. 
For example, a pure population state cannot be a fixed point of $L^{k,\eta}$ because $L^{k,\eta}(e_i) = P^\eta(e_i) > \Vt0$ for any $i \in S$ under any $k \ge 1$ and $\eta > 0$.

\begin{figure}[t!]
    \centering
    \begin{subfigure}[c]{.49\textwidth}
        \centering
        \includegraphics[width=.8\hsize]{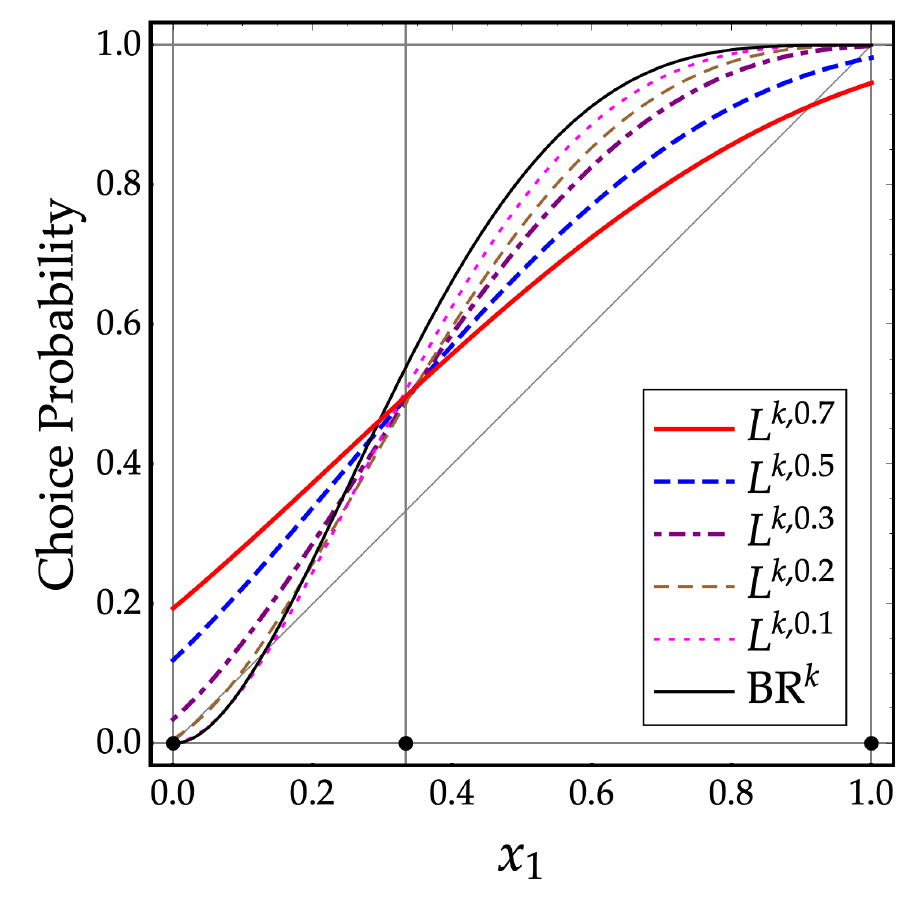}
        \caption{$L_1^{k,\eta}$ for different $\eta$ and $\BR_1^k$ ($k = 5$)}
        \label{fig:brs}
    \end{subfigure}
    \begin{subfigure}[c]{.49\textwidth}
        \centering
        \includegraphics[width=.8\hsize]{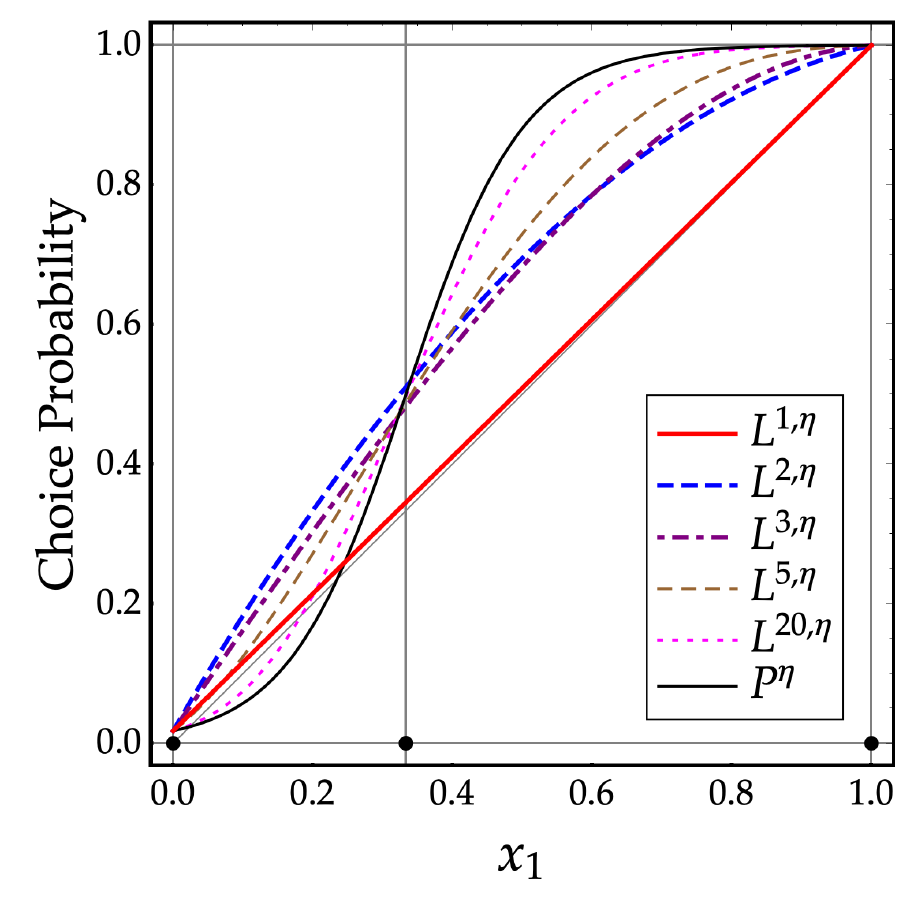}
        \caption{$L_1^{k,\eta}$ for different $k$ and $P_1^{\eta}$ ($\eta = 0.25$)}
        \label{fig:logits}
    \end{subfigure}

    \caption{Choice probability of action $1$ in the game \eqref{eq:A.2x2.coord-ex} under different rules. Black markers on the $x_1$ axis indicate the Nash equilibrium values.} 
    \label{fig:choice-probs}
\end{figure}
\begin{example}
\Cref{fig:choice-probs} compares the choice probability of action $1$ under different choice rules for the following linear two-action population game: 
\begin{align}
    &  
    A = 
    \begin{bmatrix}
        2 & 0 \\ 
        0 & 1 
    \end{bmatrix}.   
    \label{eq:A.2x2.coord-ex}
\end{align}
The Nash equilibria satisfy $x_1 \in \{0,1/3, 1\}$. 
This is a special instance of the coordination game \eqref{eq:A.2x2.coord} to be discussed in \cref{sec:small-k-selection}. 
For a fixed $k$, $L^{k,\eta}$ tends toward $\BR^k$ as $\eta$ decreases. 
Likewise, for a fixed $\eta$, it tends toward $P^\eta$ as $k$ increases. 
For each curve, the intersections with the 45-degree line are the fixed points for the corresponding case. 
\end{example}

\section{Large \texorpdfstring{$k$}{k} approximation}
\label{sec:approximation}

For fixed $\eta > 0$, as the sample size $k$ grows, the sampling logit rule approaches the full-information logit rule. 
This section examines the role of sampling in producing systematic distortion in this regime. 
A second-order approximation identifies the two endogenous premiums discussed in the introduction, variance and curvature premiums, and translates them into a virtual payoff representation. 
Section~\ref{sec:joint-limit} considers the critical scale at which sampling and logit noise remain comparable, and Section~\ref{sec:examples} considers the complementary regime in which $k$ is fixed and small. 

Below, we suppress $k$ and $\eta$ from notation whenever they are understood from the context: $L = L^{k,\eta}$, $P = P^\eta$, and so on. 
We use Landau's order notation to summarize approximation errors and convergence rates. 
In a stated limit, $f=O(g)$ means that $|f/g|$ remains bounded, whereas $f=o(g)$ means that $f/g\to0$, so $f$ becomes negligible relative to $g$. 
For vector-valued functions, absolute value is replaced by any fixed norm, and a statement that is uniform over $X$ applies to the corresponding supremum over $X$.

\subsection{Approximation and virtual payoffs}

We employ the \emph{delta method} to approximate the expectation over samples for large $k$ \citep[see, e.g.,][Ch.3]{vanderVarrt-Book2000}. 
For the empirical population state $w = \frac{1}{k}z$, we have $\BbE[w]=x$ and $\Var[w]=\frac{1}{k}\Sigma(x)$, where $\Sigma(x) \Is \diag{x}-xx^\top$ is the covariance matrix for $\Mult(k\!\mid\!x)$. 
The delta method then allows us to approximate $L(x) = \BbE[P(w)]$ by $x$, $\Sigma(x)$, and the derivatives of $P$ and $F$. 

We introduce some convenient notation. 
For a collection $\{y_i\}_{i\in S}$ of scalar-, vector-, or matrix-valued functions defined on $X$, let their logit-weighted average $\OL y$ and their corresponding deviations $\{\WH y_i\}_{i\in S}$ be 
\begin{align}
    & \OL{y}(x) \Is \sum_{l\in S} P_l(x) y_l(x) 
    \quad \text{and} \quad 
    \widehat{y}_i(x) \Is y_i(x) - \OL{y}(x)
    & \forall x \in X. 
    \label{eq:bar-hat} 
\end{align}
Note that the logit weights are state dependent. 
By construction, hatted variables satisfy $\sum_{i\in S}P_i(x)\,\WH y_i(x) = \sum_{i\in S}P_i(x)\bigl(y_i(x)-\OL y(x)\bigr) =0$. 
Also, given matrices $A$ and $B$, let $\la A, B \ra \Is \trace[A^\top B] = \sum_{k,l} a_{kl} b_{kl}$ where $a_{kl}$ and $b_{kl}$ denote the entries in the $k$th row and $l$th column of $A$ and $B$, respectively. 

With this notation, the second-order delta method yields the following approximation of $L$. All proofs are in \cref{app:derivations}. 
\begin{theorem}[Approximated choice rule]
\label{thm:approximation}
Assume that $F$ admits a $C^2$ extension to an open neighborhood of $X$. 
The $(k,\eta)$-sampling logit choice rule is approximated by the following corrected $\eta$-logit rule: 
\begin{align}
    & \TlL_i(x) 
    = \left(1 + \widehat v_i(x) + \widehat q_i(x) \right) P_i(x) 
    && \forall i \in S,\ x \in X,  
    \label{eq:TlL}
\end{align}
where the functions $v: X \to \mathbb R_{\ge 0}^n$ and $q: X \to \mathbb R^n$ are defined by  
\begin{align}
    v_i(x) 
    &= 
    \frac{1}{2k\eta^2} 
    \; 
    \WH{F_i'\,}(x)^\top \Sigma(x)\; \WH{F_i'\,}(x)
    \quad
    \text{and}
    \quad 
    q_i(x) 
    = 
    \frac{1}{2k\eta} 
    \langle F_i''(x), \Sigma(x) \rangle
    \label{eq:def.delta} 
\end{align} 
with $\Sigma(x) = \diag{x} - x x ^\top$. 
The approximation formula satisfies the following: 
\begin{enumerate}
\item \label{enum:validity}
For any $\eta > 0$, there is $k^* < \infty$ such that $\TlL(x) \in X$ for all $x \in X$ and all $k > k^*$. 
That is, the approximated rule $\TlL$ is a valid choice rule for sufficiently large $k$. 
\item \label{enum:error-bound} 
If $F$ admits a $C^4$ extension to an open neighborhood of $X$, then
\begin{align*}
    L
    =
    \TlL
    +O\!\left(\frac{1}{k^2}\sum_{m=1}^4\eta^{-m}\right)
\end{align*}
uniformly on $X$. 
More explicitly, for any fixed norm, there is a constant $C>0$, independent of $k$ and $\eta$, such that
\begin{align}
& 
    \Delta
    \Is
    \sup_{x\in X}
    \|L(x)-\TlL(x)\|
    \le
    \begin{cases}
        Ck^{-2}\eta^{-4} & \text{if } 0 < \eta \le 1 \\ 
        Ck^{-2}\eta^{-1} & \text{if } \eta > 1
    \end{cases}
&
    \forall k\ge1.
    \label{eq:theorem-error-bound}
\end{align}
\end{enumerate}
\end{theorem}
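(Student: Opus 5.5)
The plan is a fourth-order Taylor expansion (the delta method) of the smooth map $G_i \Is \Lambda_i^\eta\circ F$ around the true state $x$, evaluated at the empirical state $w=\frac1k z$, followed by taking expectations term by term. Since $X$ is convex and $w\in X$, the segment from $x$ to $w$ stays in $X$. The $C^2$ (resp.\ $C^4$) extension of $F$ is used only to make sense of derivatives up to the boundary. The moments of $w-x$ under $\Mult(k\!\mid\!x)$ are:
\begin{itemize}
\item first moment: $\BbE[w-x]=0$;
\item second moment: $\BbE[(w-x)(w-x)^\top]=\frac1k\Sigma(x)$;
\item third moment: the third central moment tensor equals $\frac1{k^2}$ times the third cumulant tensor of a single categorical draw, which is bounded uniformly on $X$;
\item fourth moment: $\BbE\|w-x\|^4=O(k^{-2})$ uniformly on $X$, again by the multinomial cumulant structure.
\end{itemize}
Hence $L_i(x)=P_i(x)+\frac12\la \nabla^2 G_i(x),\Sigma(x)\ra/k + R_i$, where $R_i$ collects the cubic term and the quartic Taylor remainder.

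\textbf{Step 1: the Hessian of $G_i$.} From $\partial\Lambda_i/\partial u_j=\eta^{-1}\Lambda_i(\delta_{ij}-\Lambda_j)$ we get $\nabla G_i=\eta^{-1}P_i\,\WH{F_i'}$, with $F_i'$ the gradient (row) of $F_i$ and the hat defined as in \eqref{eq:bar-hat}. Differentiating once more uses two facts. First, $\nabla P_i=\eta^{-1}P_i\WH{F_i'}$. Second, $\nabla\OL{F'}=\sum_l P_lF_l''+\eta^{-1}\sum_l P_l F_l'^{\top}\WH{F_l'}$, and in the last sum $F_l'$ may be replaced by $\WH{F_l'}$ because $\sum_l P_l\WH{F_l'}=0$. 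Combining these gives
\[
\nabla^2G_i=\eta^{-2}P_i\Bigl(\WH{F_i'}^{\top}\WH{F_i'}-\textstyle\sum_lP_l\WH{F_l'}^{\top}\WH{F_l'}\Bigr)+\eta^{-1}P_i\Bigl(F_i''-\textstyle\sum_lP_lF_l''\Bigr).
\]
Pairing this with $\Sigma/(2k)$ yields exactly $P_i(v_i-\OL v)+P_i(q_i-\OL q)=P_i(\WH v_i+\WH q_i)$. This identifies $\TlL$ in \eqref{eq:TlL}.

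\textbf{Step 2: part \cref{enum:validity}.} Summation is automatic: $\sum_iP_i(1+\WH v_i+\WH q_i)=1$ because hatted quantities have zero $P$-weighted mean. For positivity, fix $\eta$. By compactness and continuity, $v$, $q$, and hence $\WH v,\WH q$ are bounded by $C_\eta/k$ on $X$. So for $k>k^*\Is 2C_\eta$ we have $1+\WH v_i+\WH q_i>0$, and $P_i>0$ already, so $\TlL(x)\in X$.

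\textbf{Step 3: part \cref{enum:error-bound}, the main obstacle.} The hard part is tracking the explicit $\eta$-dependence of the third and fourth derivatives of $G_i$ uniformly on $X$. I would write $\Lambda^\eta(u)=\Lambda^1(u/\eta)$. Each derivative of $\Lambda^1$ is globally bounded, since all its partial derivatives are polynomials in the components of $\Lambda^1\in X$. By the Faà di Bruno formula, the $m$-th derivative of $\Lambda^1(F/\eta)$ is a finite sum of terms $\eta^{-j}D^j\Lambda^1[\partial^{\alpha_1}F,\dots,\partial^{\alpha_j}F]$ with $1\le j\le m$. The derivatives of $F$ up to order $4$ are bounded on a neighborhood of $X$. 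Hence $\sup\|D^mG\|\le C\sum_{j=1}^m\eta^{-j}$ with $C$ independent of $\eta$.

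The error terms then combine as follows:
\begin{itemize}
\item the cubic term contributes $O(k^{-2}\sum_{j\le3}\eta^{-j})$;
\item the quartic remainder, bounded by $\sup\|D^4G\|\cdot\BbE\|w-x\|^4$, contributes $O(k^{-2}\sum_{j\le4}\eta^{-j})$.
\end{itemize}
The two cases in \eqref{eq:theorem-error-bound} follow by keeping the dominant power: $\eta^{-4}$ when $\eta\le1$, and $\eta^{-1}$ when $\eta>1$.

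The point to check carefully is that all constants are uniform in $x\in X$, including near the boundary. This holds because the multinomial moment bounds are polynomial in $x$, and the derivative bounds hold on the whole neighborhood provided by the $C^4$ extension.
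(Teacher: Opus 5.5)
Your proposal is correct and follows essentially the same route as the paper: the second-order delta method, with the Hessian of $P_i=\Lambda_i^\eta\circ F$ computed via the same replacement $\sum_l P_l F_l'{}^\top\WH{F_l'}=\sum_l P_l\WH{F_l'}{}^\top\WH{F_l'}$; the summation-plus-positivity argument for part (a); and, for part (b), a Taylor expansion through third order with a fourth-order remainder, combined with the multinomial moment bounds ($O(k^{-2})$ for the third central moment and for $\BbE\|w-x\|^4$) and derivative bounds of the form $C\sum_{r\le m}\eta^{-r}$. The only minor difference is that the paper bounds $|\WH v_i+\WH q_i|$ by $C_0(\eta^{-2}+\eta^{-1})/k$ with $C_0$ independent of $\eta$, which also yields $k^*(\eta)=O(\eta^{-2})$, whereas your $\eta$-dependent constant $C_\eta$ suffices for the statement as posed.
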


\Cref{sec:origins-of-premiums} discusses the intuition behind the correction terms $v$ and $q$ in detail. 

The approximation is a large-$k$ result.  
For fixed $\eta$, $\TlL-P=O(k^{-1})$ and $L-\TlL=O(k^{-2})$, uniformly on $X$, so the approximation can be made as accurate as desired by taking $k$ large. 
For $0<\eta\le1$, part \cref{enum:error-bound} gives an error bound of order $k^{-2}\eta^{-4}$, whereas the nominal scale for the correction terms is $(k\eta^2)^{-1}$, so that their ratio is of order $(k\eta^2)^{-1}$. 
Hence, along sequences for which $k\eta^2\to\infty$,
\begin{align*}
    \sup_{x\in X}\|L(x)-\TlL(x)\|
    =o\bigl((k\eta^2)^{-1}\bigr),
\end{align*}
meaning that the remainder becomes negligible relative to the nominal scale of the correction terms. 
The proof of part \cref{enum:validity} also shows that one may choose $k^*(\eta)=O(\eta^{-2})$ as $\eta\Toz$ for $\TlL$ to be a valid choice rule. 
For example, $\eta^{-2}=100$ at $\eta=0.1$ means that samples in the hundreds are the relevant starting point for the approximation to be uniformly accurate on $X$. 

The trade-off implicit in this scaling has a natural interpretation.
In unfamiliar or complex environments, logit noise is plausibly high, so the approximation can become informative with a more modest sample size.
In familiar environments, by contrast, agents may respond more precisely, corresponding to a lower $\eta$, but repeated play or widely available information may also provide the much larger effective sample required to compensate.
Thus, the two empirically plausible combinations, large $\eta$ with smaller $k$, and small $\eta$ with larger $k$, are consistent with the restriction underlying the approximation.

Provided that $k$ is sufficiently large, \cref{thm:approximation} yields an interpretation of SLE based on \emph{virtual payoffs} capturing agents' effective decision biases in aggregate, in the spirit of \citet[Appendix]{Hofbauer-Sandholm-JET2007}. 
\begin{observation}[Virtual payoff representations]
\label{obs:TlL-interpretation}
Fix $\eta > 0$. Suppose $k$ is sufficiently large that $\min_{x \in X}\min_{i\in S} \{ 1 + \WH{v_i}(x) + \WH{q_i}(x) \} > 0$, ensuring that $\TlL(x) \in X$ for all $x \in X$ and that the logarithm defining $\TlG$ below is well defined.
Then, equilibria under the approximated choice rule $\TlL$ are equivalently represented as: 
\begin{enumerate}[label={(\alph*)},leftmargin=2em]
\item \label{enum:virtual-game-1}
the $\eta$-logit equilibria for the ``virtual'' population game $\TlF = F + \TlG$, where 
\begin{align}
    & \TlG_i(x) 
    = \eta \log \left(1 + \WH{v}_i(x) + \WH{q}_i(x) \right) 
    & \forall i \in S, x \in X 
\end{align}
is the deterministic payoff distortion that encapsulates the role of sampling noise, or 
\item \label{enum:virtual-game-2}
the Nash equilibria for the ``virtual'' population game $F + \TlG + H$, where $\TlG$ is the same as above and $H_i(x) \Is - \eta \log (x_i)$ is a deterministic representation of logit noise. 
\qedhere 
\end{enumerate}
\end{observation}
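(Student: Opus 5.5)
The plan is to reduce both representations to the fact that the logit map is invariant to common shifts of payoffs, and then to reuse the Hofbauer--Sandholm equivalence recalled in \cref{sec:logit}.

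Write $g_i(x) \Is 1 + \WH v_i(x) + \WH q_i(x)$. Under the stated hypothesis, $g_i(x) > 0$ for all $i$ and $x$, so $\TlG_i = \eta\log g_i$ is well defined and continuous. Validity of $\TlL$ as a choice rule follows directly. Nonnegativity is immediate. For the sum, the identity $\sum_i P_i \WH y_i = 0$ from \eqref{eq:bar-hat}, applied with $y=v$ and with $y=q$, gives $\sum_i \TlL_i(x) = \sum_i P_i(x) = 1$.

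For part \ref{enum:virtual-game-1}, compute directly:
\begin{align*}
\Lambda^\eta_i\bigl(F(x)+\TlG(x)\bigr)
= \frac{\exp(\eta^{-1}F_i(x))\, g_i(x)}{\sum_j \exp(\eta^{-1}F_j(x))\, g_j(x)}.
\end{align*}
Dividing numerator and denominator by $\sum_l \exp(\eta^{-1}F_l(x))$ turns this into
\begin{align*}
\frac{P_i(x)\, g_i(x)}{\sum_j P_j(x)\, g_j(x)}.
\end{align*}
The denominator equals $\sum_j \TlL_j(x) = 1$ by the computation above. Hence $\Lambda^\eta(\TlF(x)) = \TlL(x)$ identically on $X$. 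Consequently $x = \TlL(x)$ holds if and only if $x$ is an $\eta$-logit equilibrium of $\TlF$, which is \eqref{eq:P} applied to $\TlF$. The only delicate point is this normalization: the premiums are centered at the \emph{logit-weighted} mean, and that centering is exactly what makes the denominator equal to one.

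For part \ref{enum:virtual-game-2}, apply the equivalence in \cref{sec:logit} to the game $\TlF$: the $\eta$-logit equilibria of $\TlF$ coincide with the Nash equilibria of $\TlF + H$.

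I will spell this step out rather than only cite it. Take an $\eta$-logit equilibrium $x$. It is interior, so $x_i = \exp(\eta^{-1}\TlF_i(x))/Z$ with $Z \Is \sum_j \exp(\eta^{-1}\TlF_j(x))$. Rearranging gives $\TlF_i(x) - \eta\log x_i = \eta\log Z$ for every $i$. All payoffs of $\TlF + H$ are therefore equal at $x$, and $x$ is a Nash equilibrium of $\TlF + H$.

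For the converse, let $x$ be a Nash equilibrium of $\TlF + H$. First rule out boundary points: if $x_i \to 0$ then $H_i = -\eta\log x_i \to +\infty$, while $\TlF$ stays bounded because it is continuous on the compact set $X$. An unused action would then strictly beat any used one, so $x$ must be interior. At an interior Nash equilibrium all actions earn a common value $\lambda$, so $x_i = \exp(\eta^{-1}(\TlF_i(x) - \lambda))$. Summing over $i$ pins down $\lambda = \eta\log Z$, which gives $x = \Lambda^\eta(\TlF(x))$, as in the footnote of \cref{sec:logit}.

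The main obstacle is this boundary technicality for $H$, which is undefined where $x_i = 0$. I would handle it by reading $H_i(x) = +\infty$ when $x_i = 0$. The interiority argument above then goes through.
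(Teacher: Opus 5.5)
Your proposal is correct and follows the paper's proof. Part (a) is the same direct computation: the logit-weighted centering $\sum_j P_j(\WH v_j+\WH q_j)=0$ makes the normalizing denominator equal to one, so $\Lambda^\eta(\TlF(x))=\TlL(x)$ holds identically. For part (b) the paper only cites the Hofbauer--Sandholm equivalence, and your explicit verification, including the interiority argument under the convention $H_i=+\infty$ at $x_i=0$, correctly fills in that citation rather than taking a different route.
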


To see \cref{obs:TlL-interpretation}~\ref{enum:virtual-game-1}, we can rearrange 
\begin{align}
    \TlL_i(x) 
    & = 
    \frac
        {\exp\big(\eta^{-1} \TlF_i(x)\big)}
        {\sum_{j \in S} \exp\big(\eta^{-1} \TlF_j(x)\big)} 
    & \forall i \in S, x \in X
    \label{eq:TlL-logit}
\end{align}
as long as $\TlG$ is well defined. 
Thus, an equilibrium under $\TlL$, namely $x = \TlL(x)$, is nothing but an $\eta$-logit equilibrium for the virtual population game $\TlF$. 
\cref{obs:TlL-interpretation}~\ref{enum:virtual-game-2} then follows from known results on logit choice. 

Before interpreting $v$ and $q$, we examine the closeness of fixed points of the approximated rule $\TlL$ to true SLE. 
Naturally, when $\TlL$ is sufficiently close to $L$, the equilibria under the two rules should be close to each other. 

Some definitions are in order to state this observation rigorously. 
For a continuous choice rule $c:X\to X$, let
$\Fix(c)\Is\{x\in X:x=c(x)\}$ denote its fixed-point set: $\Fix(L)$ is the set of SLE, whereas
$\Fix(\TlL)$ is the set of stationary points under $\TlL$. 
For a fixed norm $\|\cdot\|$, write $d(x,A)\Is\inf_{a\in A}\|x-a\|$ and let
$d_{\RmH}$ denote the corresponding Hausdorff distance $d_{\RmH}(A,B) = \max\{\sup_{x \in A}d(x,B), \sup_{y\in B} d(y, A) \}$ for sets $A, B$. 
We call $x \in \Fix(\TlL)$ \emph{regular} if $I-\TlL_T'(x)$ is nonsingular on the tangent space $T= \{h\in\BbR^n:\sum_{i\in S} h_i=0\}$ of the affine hull of $X$, where $\TlL_T'$ is interpreted as the derivative of $\TlL$ restricted to $T$.
\begin{theorem}[Accuracy of approximated equilibria]
\label{thm:equilibrium-distortion-general}
Fix $\eta>0$. 
Take $k$ sufficiently large that $\TlL$ is a valid choice rule. 
Set $\Delta = \sup_{x\in X}\|L(x)-\TlL(x)\|$.
Then, the following statements hold.
\begin{enumerate}
    \item 
    \label{enum:equilibrium-bound-1} 
    There is
    $\Tlomega:\BbR_{\ge0}\to\BbR_{\ge0}$, depending on
    $\TlL$, such that
    $\Tlomega(t)\Toz$ as $t\Toz$ and
    \begin{align}
        \sup_{x\in\Fix(L)}
        d\bigl(x,\Fix(\TlL)\bigr)
        \le \Tlomega(\Delta).
        \label{eq:one-sided-equilibrium-bound}
    \end{align}
    \item 
    \label{enum:equilibrium-bound-2}
    Suppose that
    $\Fix(\TlL)=\{\Tlx^1,\ldots,\Tlx^m\}$ is finite and
    every $\Tlx^r$ is regular. 
    Let 
    \begin{align}
        \gamma
        \Is
        \min_{r \in \{1,\hdots,m\}}
        \ 
        \inf_{h\in T, \|h\|=1}
        \|(I-\TlL_T'(\Tlx^r)) h\| > 0. 
    \end{align}
    Then, there is a constant $\overline\Delta>0$ such that, if $\Delta<\overline\Delta$,
    $L$ has at least $m$ distinct fixed points that can be labeled
    $x^1,\ldots,x^m$ so that
    \begin{align}
        \max_{r \in \{1,\hdots,m\}}\|x^r-\Tlx^r\|
        \le \frac{2\Delta}{\gamma}.
        \label{eq:equilibrium-shift-bound}
    \end{align}
    If furthermore $\Fix(L)$ also consists of exactly $m$ points, these are all its fixed points and
    \begin{align}
        d_{\RmH}\bigl(\Fix(L),\Fix(\TlL)\bigr)
        \le
        \max_{r \in \{1,\hdots,m\}}\|x^r-\Tlx^r\|
        \le \frac{2\Delta}{\gamma}. 
    \end{align}
\end{enumerate}
\end{theorem}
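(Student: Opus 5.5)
Part \ref{enum:equilibrium-bound-1} is a compactness argument built on the continuous residual map $\phi(x)\Is\|x-\TlL(x)\|$. This map is continuous on the compact set $X$ and vanishes exactly on $\Fix(\TlL)$. I would define
\begin{align*}
    \Tlomega(t)\Is\sup\{d(x,\Fix(\TlL)) : x\in X,\ \phi(x)\le t\}.
\end{align*}
If $x\in\Fix(L)$, then $\phi(x)=\|L(x)-\TlL(x)\|\le\Delta$, so $d(x,\Fix(\TlL))\le\Tlomega(\Delta)$. It remains to show $\Tlomega(t)\Toz$, and I would argue by contradiction. Suppose there are $\epsilon>0$, $t_j\Toz$ and $x_j$ with $\phi(x_j)\le t_j$ but $d(x_j,\Fix(\TlL))\ge\epsilon$. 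Extract a convergent subsequence $x_j\to x^\ast$. Continuity gives $\phi(x^\ast)=0$, so $x^\ast\in\Fix(\TlL)$, which contradicts the distance bound. Note that $\Fix(\TlL)$ is nonempty by Brouwer, since $\TlL$ is a valid choice rule. Also, $\Tlomega$ depends only on $\TlL$, as required.

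Part \ref{enum:equilibrium-bound-2} is a quantitative inverse-function or Brouwer argument around each regular $\Tlx^r$. Work on the affine hull $\Tlx^r+T$ and write $x=\Tlx^r+h$ with $h\in T$. Put $g(h)\Is h-(\TlL(\Tlx^r+h)-\Tlx^r)$, so that $g(0)=0$ and $g'(0)=I-\TlL_T'(\Tlx^r)$, which satisfies $\|g'(0)h\|\ge\gamma\|h\|$. Here $\TlL$ is $C^1$ because it is built from $P$ and $F',F''$, using $F\in C^2$ as in \cref{thm:approximation}; I may need $C^3$ for a Lipschitz derivative, but uniform continuity of $g'$ on a compact set suffices. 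Since every $\Tlx^r$ is interior (logit is positive and $\TlL$ is positive for large $k$), choose $\rho>0$ so small that three things hold: the balls $B_r$ of radius $\rho$ in $\Tlx^r+T$ lie in $\Int X$, they are pairwise disjoint, and $\|g'(h)-g'(0)\|\le\gamma/2$ on them. Then $g$ is injective on $B_r$ and satisfies $\|g(h)-g(h')\|\ge\frac{\gamma}{2}\|h-h'\|$.

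The fixed-point equation for $L$ reads $g(h)=e(h)$, where $e(h)\Is L(\Tlx^r+h)-\TlL(\Tlx^r+h)\in T$ and $\|e\|\le\Delta$. Consider the map
\begin{align*}
    \Psi(h)\Is h-g'(0)^{-1}\bigl(g(h)-e(h)\bigr).
\end{align*}
This gives $\|\Psi(h)\|\le\|g'(0)^{-1}(g'(0)h-g(h))\|+\gamma^{-1}\Delta\le\frac{1}{2}\|h\|+\gamma^{-1}\Delta$. Hence, for $\|h\|\le 2\Delta/\gamma$, we get $\|\Psi(h)\|\le 2\Delta/\gamma$. The map $\Psi$ is continuous, so Brouwer on the ball of radius $2\Delta/\gamma$ yields a fixed point $h^r$. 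This ball lies inside $B_r$ once $\overline\Delta\Is\gamma\rho/2$. The fixed point $h^r$ satisfies $g(h^r)=e(h^r)$, so $x^r\Is\Tlx^r+h^r\in\Fix(L)$ and $\|x^r-\Tlx^r\|\le2\Delta/\gamma$. The $x^r$ are distinct because the balls are disjoint. I note that $e$ is only continuous, not contractive, which is exactly why I use Brouwer rather than Banach here.

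For the final claim, if $\Fix(L)$ has exactly $m$ points, they are precisely $x^1,\dots,x^m$. Each $x^r$ is within $2\Delta/\gamma$ of $\Tlx^r$, and each $\Tlx^r$ is within that distance of $x^r$, so both one-sided suprema in $d_{\RmH}$ are bounded by $\max_r\|x^r-\Tlx^r\|$.

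The main obstacle is the second part. It requires choosing $\rho$ uniformly so that the linearization error $g(h)-g'(0)h$ is at most $\frac{\gamma}{2}\|h\|$. This needs adequate smoothness of $\TlL$, and the quantity $\gamma$ must be measured in a norm compatible with the operator bound $\|g'(0)^{-1}\|\le\gamma^{-1}$ on $T$.
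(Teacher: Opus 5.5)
Your proof is correct. Part~(a) is the same compactness argument the paper uses, and your remark that $\Fix(\TlL)\neq\emptyset$ by Brouwer fills in a point the paper leaves implicit.

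For part~(b) you take a different route. The paper works with the residual $R(x)=x-\TlL(x)$. It picks disjoint balls $B_r$ on which $\|R(x)\|\ge\frac{\gamma}{2}\|x-\Tlx^r\|$ and notes that the local Brouwer degree $\deg(R,B_r,0)=\sgn\det(I-\TlL_T'(\Tlx^r))$ is nonzero. It then deforms $R$ into $x-L(x)$ through $R+s(\TlL-L)$, which cannot vanish on $\partial B_r$ once $\Delta<\overline\Delta=\min_r\min_{\partial B_r}\|R\|$. Homotopy invariance gives an SLE in each $B_r$, and the $2\Delta/\gamma$ bound is read off afterwards from the residual lower bound.

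You replace the degree computation with a chord--Newton map $\Psi$ that sends the ball of radius $2\Delta/\gamma$ into itself. Brouwer's fixed-point theorem alone then yields an SLE that is already localized.
\begin{itemize}
\item Your route is more elementary and makes $\overline\Delta=\gamma\rho/2$ explicit.
\item The degree route shows that every SLE in $B_r$ obeys the bound, not just the one constructed. It also carries index information: the degree of $x-L(x)$ on $B_r$ equals $\sgn\det(I-\TlL_T'(\Tlx^r))$, which is useful for counting or stability.
\item You can recover the first of these in one line. If $g(h)=e(h)$ and $\|h\|\le\rho$, then $\gamma\|h\|\le\|g'(0)h\|\le\|g(h)\|+\|g(h)-g'(0)h\|\le\Delta+\frac{\gamma}{2}\|h\|$.
\end{itemize}

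One small correction. Since $\TlL$ contains $F''$, the assumption $F\in C^2$ makes $\TlL$ merely continuous, not $C^1$. Bounding $\|g'(h)-g'(0)\|$ near $0$ would therefore need $F\in C^3$. You do not actually need that bound. The only estimate your argument uses is $\|g(h)-g'(0)h\|\le\frac{\gamma}{2}\|h\|$ for $\|h\|\le\rho$. This follows from differentiability of $\TlL$ at $\Tlx^r$ alone, which the regularity hypothesis already presupposes, and the paper's linearization of $R$ rests on the same expansion. Your injectivity claim for $g$ on $B_r$ is never used and can be dropped.
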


The two parts address opposite directions of approximation. 
Part~\ref{enum:equilibrium-bound-1} takes a true SLE $x$ satisfying $x=L(x)$ as the starting point. 
Evaluating the approximated rule at this equilibrium, we see $\|x -\TlL(x)\| = \|L(x)-\TlL(x)\|\le\Delta$ by definition. 
That is, every SLE $x$ has a fixed-point error of at most $\Delta$ for the approximated rule. 
Thus, every true SLE lies near an equilibrium of the approximated rule, or equivalently near a logit equilibrium of the virtual game. 
This conclusion is one-sided in the sense that $\TlL$ may have other fixed points with no nearby SLE.

Part~\ref{enum:equilibrium-bound-2} takes the regular equilibria of the approximated rule $\TlL$ as the references. 
Regularity implies that the fixed-point equation crosses zero rather than merely touching it, so a small perturbation only moves the equilibrium without eliminating it. 
Accordingly, when the approximation of the choice rule is sufficiently accurate (when $\Delta$ is small), a distinct SLE $x^r$ should exist near each regular equilibrium $\Tlx^r$.\footnote{The quantity $\gamma$ measures the robustness of the least robust approximated equilibrium. A larger $\gamma$ yields a tighter bound for the possible location of nearby SLE, whereas a small $\gamma$ indicates greater sensitivity to approximation error. 
For instance, near a bifurcation of the equilibrium, $\gamma$ can approach zero, and the bound \eqref{eq:equilibrium-shift-bound} correspondingly deteriorates.}  
When the two rules also have the same number of equilibria, the bound applies to the two fixed-point sets in both directions.

Summing up, \cref{thm:equilibrium-distortion-general} formalizes the natural idea that an accurate approximation of the choice rule yields an accurate approximation of its equilibria.

\subsection{Erroneous payoff inference and payoff distortions}
\label{sec:origins-of-premiums}

We now turn our attention to the origins of the correction terms $v$ and $q$. 
\cref{thm:approximation} shows that actions with relatively higher values of $v$ and/or $q$ become endogenously exaggerated in aggregate behavior. 
In the language of \cref{obs:TlL-interpretation}, they are ``virtually'' preferred by agents beyond what the true payoffs would imply. 

Naturally, $v$ and $q$ originate from finite-sample variability of inferred payoffs. 
The payoffs inferred from the empirical distribution $w = \frac{1}{k} z$ can be expanded as 
\begin{align}
    F_i(w) 
    \approx 
    F_i(x) 
    + 
    \underbrace{
        \la F_i'(x), 
        w - x
        \ra 
    }_{\mathclap{\text{\small First-order effect: $\zeta_{1i}$}}}
    \; + \;
    \underbrace{\tfrac{1}{2} (w - x)^\top F_i''(x) (w - x)}_{\text{\small Second-order effect: $\zeta_{2i}$}}.
    \label{eq:F-errors}
\end{align}
We have $\BbE[w - x] = \Vt0$ and $\Var[w - x] = \frac{1}{k}\Sigma(x)$.  
In expectation, the first-order effect vanishes because $\BbE[\zeta_{1i}] = 0$.
The second-order effect, however, may have a nonzero mean. 
In fact, 
\begin{align}
    \BbE[ 
    F_i(w)
    -  
    F_i(x) 
    ] 
    \approx 
    \BbE[\zeta_{2i}]
    =
    \frac{1}{2}\BbE\left[(w - x)^\top F_i''(x)(w - x)\right]
    = 
    \frac{1}{2k}\la F_i''(x), \Sigma(x) \ra, 
    \label{eq:q-E}
\end{align}
from which we see the second correction term $q$ in \cref{eq:def.delta} satisfies 
\begin{align}
    q_i(x) = \frac{1}{\eta} \BbE[\zeta_{2i}]. 
\end{align}
Intuitively, $q_i$ represents a Jensen-type ``lucky-draw'' effect. 
For instance, consider small symmetric perturbations for which $w - x = \pm \nu$ with equal probability $1/2$.
Then, 
\begin{align}
    \BbE[F_i(w) - F_i(x)]
    & = 
    \frac{F_i(x + \nu) + F_i(x - \nu)}{2}
    - F_i(x) 
    \\
    & 
    = 
    \frac{  
    \left(F_i(x + \nu) - F_i(x)\right)
    - 
    \left(F_i(x) - F_i(x - \nu)\right)
    }{2}
    \\
    &
    \approx
    \frac{1}{2}\nu^\top F_i''(x)\nu.
    \label{eq:q-origin}
\end{align}
Thus, positive curvature along a direction in which samples vary makes \cref{eq:q-origin} positive: symmetric sampling errors raise the inferred payoff on average if $F_i''$ is positive definite at the state. 
Negative curvature lowers it instead, and linearity eliminates it. 
This motivates calling $q(x)$ the \emph{curvature premium}.\footnote{The importance of curvature of choice rules in shaping evolutionary incentives also appears in \cite{Hofbauer-Weibull-JET1996,Viossat-ETB2015}, where convex or concave transformations of payoffs generate qualitatively different evolutionary behavior.} 
The symmetric two-point case is only illustrative, as the second-order effect is given by \cref{eq:q-E} for any zero-mean sampling error. 

A second lucky-draw effect arises due to logit choice, even when the inferred payoff is unbiased, $\BbE[\zeta] = \BbE[F_i(w) - F_i(x)] = 0$. 
To see this, let $p(\mu)=\exp(\eta^{-1}\mu)$. 
For a small error $\zeta$ with $\BbE[\zeta] = 0$, convexity of $p$ implies\footnote{Concretely, from $p(\mu + \zeta) \approx p(\mu) + p'(\mu) \zeta + \frac{1}{2} p''(\mu) \zeta^2$, up to the second order, $\BbE[p(\mu + \zeta)] \approx p(\mu) + p'(\mu) \BbE[\zeta] + \frac{1}{2}p''(\mu) \BbE[\zeta^2] \approx (1 + \frac{1}{\eta} \BbE[\zeta] + \frac{1}{2 \eta^2} \Var[\zeta]) \cdot p(\mu)$ where $p''(\mu) = \frac{1}{\eta^2} p(\mu) > 0$.}
\begin{align}
    \frac{\BbE[p(\mu + \zeta)] - p(\mu)}{p(\mu)}
    \approx
    \frac{1}{2\eta^2}\Var[\zeta]
    > 0.
    \label{eq:error-Var}
\end{align}
An upward error raises $p$ more than an equally sized downward error lowers it, and the magnitude of this effect is measured by the variance of errors. 
In fact, basic statistical identities imply
\begin{align}
    v_i(x)
    &=
    \frac{1}{2\eta^2}\Var\bigl[\widehat{\zeta}_{1i}\bigr], 
    \label{eq:v-Var}
\end{align}
where we define the \emph{relative} first-order effects by
\begin{align}
    \widehat{\zeta}_{1i}
    \Is
    \zeta_{1i} - \sum_{l\in S}P_l(x)\zeta_{1l}
    =
    \bigl\la \WH{F_i'}(x), w - x\bigr\ra.
\end{align}
The relative values matter because logit probabilities depend only on relative payoffs. 
Hence, $v_i(x)$ measures the amplification generated by the variability of inferred relative payoffs, which motivates calling $v(x)$ the \emph{variance premium}.

For small $\eta$, the logit rule approximates the best response and agents' choice behavior becomes more sensitive to small payoff differences, as reflected by the factors of $\eta$ in the denominators of $v$ and $q$. 
For fixed $\eta$, expanding the logarithm in \cref{obs:TlL-interpretation} gives
\begin{align*}
    \TlG_i(x)
    =
    \eta\bigl(\WH v_i(x)+\WH q_i(x)\bigr)
    +O(k^{-2})
\end{align*}
uniformly on $X$. 
From \cref{eq:def.delta}, the scales of the two leading terms $\eta\WH v_i$ and $\eta\WH q_i$ are $O((k\eta)^{-1})$ and $O(k^{-1})$, respectively. 
Thus, when $k\eta^2\to\infty$, the variance contribution can be larger by a factor of order $\eta^{-1}$ for small $\eta<1$.  
This difference reflects the source of each correction in \cref{eq:F-errors}. 
The variance premium arises from the leading effect of erroneous payoff evaluation ($\zeta_{1i}$), and the curvature premium is the second-order correction ($\zeta_{2i}$). 

The curvature premium arises when sampling variability is translated into inferred payoffs, whereas the variance premium arises when inferred relative-payoff variability is translated into exponential weights in logit choice. 
Their logit-relative values $\{\WH q_i\}$ and $\{\WH v_i\}$ determine the net correction to the final choice probability because the logit formula normalizes these weights across actions.

\subsection{Variance premium in linear population games}

As a concrete example of the variance premium, consider a linear population game $F(x) = Ax$ in which the curvature premium is absent because $F''(x) = O$. 
The variance premium is given by 
\begin{align}
    v_i(x) = \frac{1}{2k\eta^2} \sigma_i(x)
    \quad\text{with}\quad 
    \sigma_i(x)
    \Is 
    \WH{a}_i(x)^\top \Sigma(x) \WH{a}_i(x) 
    = 
    k \Var[ \la \WH{F_i'}(x), w - x\ra ], 
    \label{eq:v-vector-form}
\end{align}
where $a_i \Is F_i'(x)$ is the $i$th \emph{row} of $A$ as a column vector and $\WH{a}_i(x) = a_i - \OL{a}(x) = a_i - \sum_{l \in S} P_l(x) a_l$. 
We see $\sigma_i(x)$ is the variance of action $i$'s relative marginal payoffs $\{\WH{a}_{il}(x)\}_{l\in S}$ for a single draw of the opponent's play $l\in S$ at state $x$: 
\begin{align}
    \sigma_i(x)
    = 
    \sum_{l \in S} \left(\WH{a_i}_l(x) \right) ^2 x_l
    - 
    \Bigl(
    \sum_{l \in S} 
    \WH{a_i}_l(x) x_l 
    \Bigr)^2. 
\end{align}
Thus, in linear population games, the approximated choice rule $\TlL$ assigns higher probability to actions that have higher variance $\sigma_i(x)$ of relative marginal payoffs. 
The following corollary of \cref{thm:approximation} provides a concrete example. 
\begin{corollary}
\label{cor:2x2-delta} 
Consider a two-action linear population game generated by 
\begin{align}
    & A 
    = 
    \begin{bmatrix}
        a & b \\
        c & d
    \end{bmatrix}
    \quad \text{with} \quad 
    \lambda \equiv (a - c) - (b - d) \ne 0. 
    \label{eq:A-2x2}
\end{align} 
Let $i,j\in S = \{1,2\}$ with the convention $i \ne j$. 
Then, $\sigma_i(x) = P_j(x)^2 \cdot \sigma_A(x)$,  
where $\sigma_A(x) 
    = \lambda^2 x_1 x_2\ge 0$ with $\lambda = (a - c) - (b - d) \ne 0$. 
In turn, the variance premium satisfies 
\begin{align}
    & 
    \WH{v}_i(x)
    = 
    \frac{1}{2k\eta^2}
    \; 
    \WH{\sigma_i}(x) 
= 
    \frac{1}{2k\eta^2}
P_j(x)
    \left(1 - 2 P_i(x)\right)\sigma_A(x)
& 
\forall x \in X. 
    \label{eq:Ax.D-sigma}
\end{align} 
\end{corollary}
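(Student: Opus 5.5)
This is a direct computation from the formula for $\sigma_i$ in \eqref{eq:v-vector-form}. The plan has three steps: compute the relative row $\WH{a}_i(x)$, evaluate the quadratic form against $\Sigma(x)$ using the two-action structure, and then take logit-weighted deviations.

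\emph{Step 1: relative rows.} With $P_1+P_2=1$, we have $\WH{a}_1(x) = a_1 - P_1 a_1 - P_2 a_2 = P_2(a_1-a_2)$. By the same computation, $\WH{a}_2(x) = P_1(a_2-a_1)$. Hence $\WH{a}_i(x) = \pm P_j(x)\,(a_1-a_2)$, and the sign is irrelevant in a quadratic form. This gives
\begin{align*}
\sigma_i(x) = P_j(x)^2\,(a_1-a_2)^\top\Sigma(x)(a_1-a_2).
\end{align*}

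\emph{Step 2: the quadratic form.} For $n=2$,
\begin{align*}
\Sigma(x) = x_1x_2\begin{bmatrix}1&-1\\-1&1\end{bmatrix},
\end{align*}
because $x_1-x_1^2=x_1x_2$. Therefore $u^\top\Sigma u = x_1x_2(u_1-u_2)^2$ for any $u\in\BbR^2$. With $u=a_1-a_2=(a-c,\,b-d)^\top$, the difference of components is $(a-c)-(b-d)=\lambda$. This yields $\sigma_A(x)=\lambda^2x_1x_2\ge0$ and $\sigma_i=P_j^2\sigma_A$, as claimed.

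\emph{Step 3: the deviation.} By \eqref{eq:bar-hat}, $\WH{\sigma_i} = \sigma_i - P_i\sigma_i - P_j\sigma_j = P_j\sigma_i - P_j\sigma_j$. Substituting Step 2 gives
\begin{align*}
\WH{\sigma_i} = \sigma_A\,(P_jP_j^2 - P_jP_i^2) = \sigma_A P_j (P_j-P_i)(P_j+P_i) = \sigma_A P_j(1-2P_i),
\end{align*}
using $P_i+P_j=1$. Since $v_i=\sigma_i/(2k\eta^2)$ and the hat operation is linear, we get $\WH v_i = \WH{\sigma_i}/(2k\eta^2)$. This is \eqref{eq:Ax.D-sigma}.

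There is no real obstacle. The only step that needs care is the factorization in Step 3, where one must use $P_i+P_j=1$ to collapse the cubic expression $P_j^3 - P_jP_i^2$ into $P_j(1-2P_i)$. The hypothesis $\lambda\ne0$ plays no role in the identity itself; it only ensures the premium is nontrivial.
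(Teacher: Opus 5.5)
Your proposal is correct and follows essentially the same route as the paper's proof: write the relative rows as $\pm P_j(a_1-a_2)$, evaluate $(a_1-a_2)^\top\Sigma(x)(a_1-a_2)=\lambda^2x_1x_2$, and take the logit-weighted deviation using $P_1+P_2=1$. The only difference is that you write out the $2\times 2$ form of $\Sigma(x)$ explicitly, whereas the paper states the quadratic form directly.
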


Here, $\sigma_A$ is the one-observation variance of the estimated relative payoff: 
\begin{align}
    \Var[F_1(w) - F_2(w)] 
    = 
    \Var[(b - d) + \lambda w_1] 
    = 
    \frac{1}{k} \sigma_A(x). 
\end{align}
This vanishes at $x_1 \in \{0,1\}$ since $\sigma_A(x) = 0$: any sample drawn at a pure population state allows agents to infer the state correctly.  
If the population state is more balanced, inference based on samples fluctuates more, and the resulting bias $\WH{v}$ tends to become large. 
This is reflected by $\sigma_A(x)$ attaining its maximum at $x_1 = 1/2$. 
Also, for $x_1 \in (0,1)$, $\WH{v}_1(x) = \WH{v}_2(x) = 0$ if and only if $x$ is the mixed Nash equilibrium.

\cref{cor:2x2-delta} shows that there is a \emph{virtual preference for the suboptimal}. 
Specifically, 
\begin{align}
    & 
    \WH{v}_i (x) 
    \gtreqless 0 
    \quad \Leftrightarrow \quad 
P_i(x)
    \lesseqgtr 
    \frac{1}{2} 
    \quad \Leftrightarrow \quad 
    F_i(x) \lesseqgtr F_j(x), 
    & 
    \forall i,j\in S, i \ne j
    \label{eq:2x2-delta-prop}
\end{align} 
for all $x \notin \{e_1,e_2\}$, 
with the same sign for the inequalities. 
Thus, the population behaves \emph{as if} agents prefer the suboptimal option.  
The $\eta$-logit choice rule already possesses this property because it assigns a positive choice probability to the suboptimal action. 
\Cref{eq:2x2-delta-prop} indicates that sampling errors amplify this bias.\footnote{\citet{Danenberg-Spiegler-JPEM2025} also discuss an inferior-alternative bias in two-action environments in a different stochastic choice environment.}

\begin{example}
\Cref{fig:sigma-eta} plots $\Htsigma = 2k\eta^2 \Htv$ for the case $A = \begin{bsmallmatrix} 2 & 0 \\ 0 & 1 \end{bsmallmatrix}$ with different $\eta$. 
We see $\WH{\sigma}_1(x) > 0$ if and only if $x_1 \in(0,1/3)$ where $\br(x) = \{2\}$ and $\WH{\sigma}_2(x) > 0$ if and only if $x_1 \in (1/3,1)$ where $\br(x) = \{1\}$, confirming the preference for the suboptimal. 
We also note that, for $x \ne x^*$, $\lim_{\eta \Toz} \;\WH{\sigma}_i(x) = 0$ if $i \in \operatorname{br}(x)$ and $\lim_{\eta \Toz} \;\WH{\sigma}_i(x) = \sigma_A(x)$ if $i \notin \operatorname{br}(x)$. 
\end{example}

\begin{figure}[t]
    \centering
	\begin{subfigure}[b]{.32\hsize}
		\centering
        \includegraphics[width=\hsize]{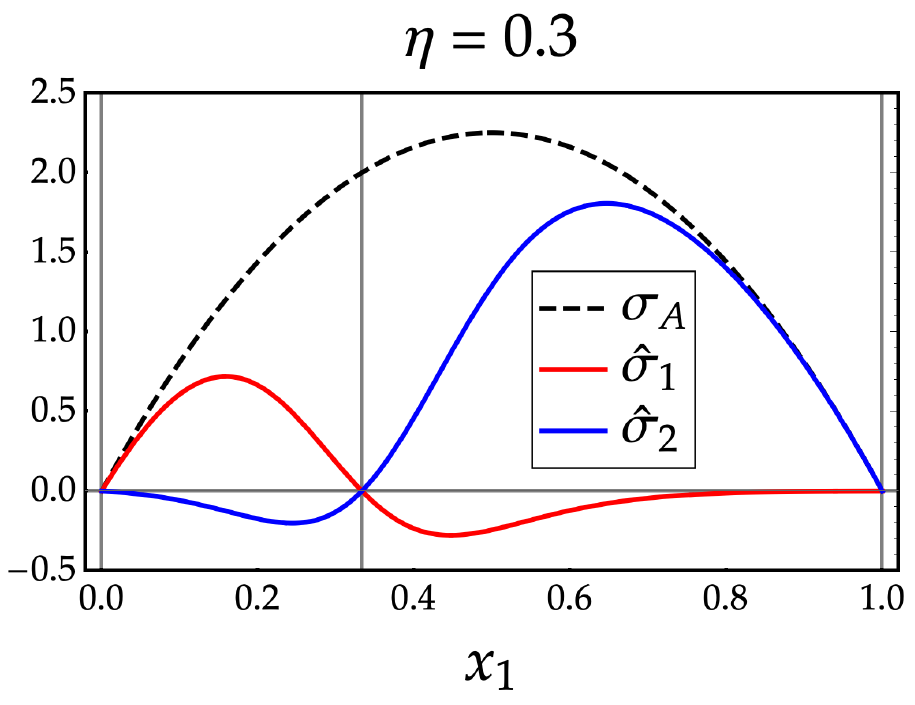}
    \end{subfigure}
	\begin{subfigure}[b]{.32\hsize}
		\centering
        \includegraphics[width=\hsize]{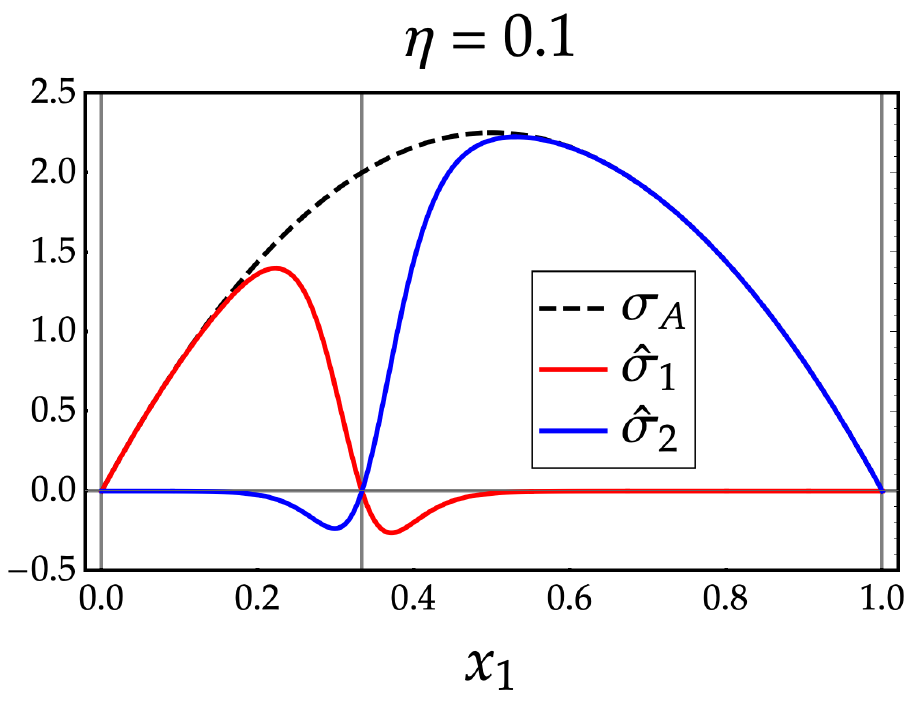}
    \end{subfigure}
	\begin{subfigure}[b]{.32\hsize}
		\centering
        \includegraphics[width=\hsize]{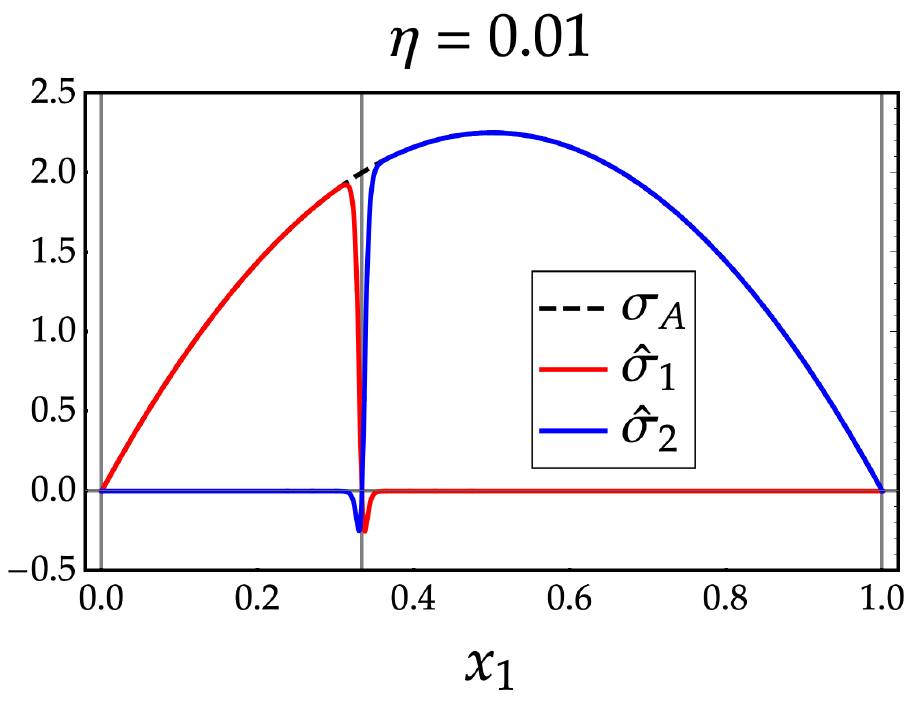}
    \end{subfigure}

    \caption{The vertical axes report $\WH{\sigma}_i(x) = 2 k \eta^2 \;\WH v_i(x)$ for the case $A = \begin{bsmallmatrix} 2 & 0 \\ 0 & 1 \end{bsmallmatrix}$ with different $\eta$. The interior Nash equilibrium of the game is $x_1^* \Is \frac{1}{3}$, and $\br(x) = \{2\}$ if $x_1 < x_1^*$ and $\br(x) = \{1\}$ if $x_1 > x_1^*$.} \label{fig:sigma-eta}
\end{figure}

We next characterize the interior fixed points for $\TlL$ and $L$ in the two-action game \eqref{eq:A-2x2}. 
Suppose there is a unique interior equilibrium 
\begin{align}
    & x_1 = x^* \Is \frac{d - b}{\lambda} \in (0,1) 
    \quad \text{with} \quad \lambda = (a - c) - (b - d) \ne 0. 
\end{align}
If $\lambda > 0$, the game is a coordination game whose two pure states are strict Nash equilibria. 
If $\lambda < 0$, it is a (strict) \emph{stable game} (or \emph{contractive game}) \citep{Hofbauer-Sandholm-JET2009}, and $x^*$ is its unique Nash equilibrium. 
We focus on the generic asymmetric case $x^* < 1/2$ without loss of generality. 
\begin{proposition}
\label{prop:approx-coord}
Consider a two-action linear population game $F(x) = Ax$ with $A$ in \cref{eq:A-2x2}. 
Suppose that $0 < x^* < 1/2$, and write $\theta \Is 1/(2k\eta^2)$. 
Treating $\eta$ and $\theta$ as independent perturbation parameters, for all sufficiently small $\eta>0$ and $\theta\ge0$, $\TlL$ has a locally unique interior fixed point $\Tlx(\eta,\theta)$ smoothly converging to $x^*$ as $(\eta,\theta)\to(0,0)$. Moreover,
\begin{align}
    \Tlx(\eta,\theta)-\Tlx(\eta,0)
    =
    - \frac{\eta\theta}{\lambda}\sigma_A(x^*)(1-2x^*)
    +o(\eta\theta) 
    \label{eq:approx-coord-displacement}
\end{align}
where $\sigma_A(x^*)>0$ is the relative-payoff variance from \cref{cor:2x2-delta}. 
Along positive pairs $(\eta,\theta)\to(0,0)$ for which $k=(2\theta\eta^2)^{-1}\in\mathbb N$, there is also a true SLE $\Brx(\eta,\theta)=L(\Brx(\eta,\theta))$ satisfying
\begin{align}
    |\Brx(\eta,\theta)-\Tlx(\eta,\theta)|
    =o(\eta\theta).
    \label{eq:approx-coord-true-distance}
\end{align}
For all sufficiently small admissible pairs, both $\Tlx(\eta,\theta)$ and $\Brx(\eta,\theta)$ lie below $x^*$ if $\lambda>0$ and above $x^*$ if $\lambda<0$.
\end{proposition}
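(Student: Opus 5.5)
The plan is to reduce the fixed-point problem for $\TlL$ to a two-equation system that stays smooth at $\eta=0$, apply the implicit function theorem, and then transfer the result to $L$ using the error bound of \cref{thm:approximation}~\ref{enum:error-bound}.

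\textbf{Step 1: a system that is smooth at $\eta=0$.} Write $x=x_1$ and introduce the logit probability $p=P_1(x)$ as a second unknown. By \eqref{eq:P} and $F_1(x)-F_2(x)=\lambda(x-x^*)$, the definition of $p$ is equivalent to $\eta\log\frac{p}{1-p}=\lambda(x-x^*)$. By \cref{cor:2x2-delta}, $\WH v_1=\theta(1-p)(1-2p)\sigma_A(x)$, and $\WH q=0$ since $F$ is linear. So the condition $x=\TlL_1(x)$ reads $x=p+\theta p(1-p)(1-2p)\sigma_A(x)$. Define
\begin{align*}
G_1(x,p;\eta,\theta)&=\lambda(x-x^*)-\eta\log\tfrac{p}{1-p},\\
G_2(x,p;\eta,\theta)&=x-p-\theta\,p(1-p)(1-2p)\,\sigma_A(x).
\end{align*}
Interior fixed points of $\TlL$ are exactly the zeros of $G=(G_1,G_2)$ with $p\in(0,1)$. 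Both components are smooth in $(x,p,\eta,\theta)$ near $(x^*,x^*,0,0)$, including at $\eta=0$ and for $\theta$ of either sign.

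\textbf{Step 2: implicit function theorem.} At $(\eta,\theta)=(0,0)$ the point $(x^*,x^*)$ is a zero of $G$. The Jacobian in $(x,p)$ there is $\begin{bsmallmatrix}\lambda&0\\1&-1\end{bsmallmatrix}$, which is nonsingular because $\lambda\neq0$. Hence there is a locally unique smooth solution $(\Tlx(\eta,\theta),p(\eta,\theta))$ converging to $(x^*,x^*)$. Since $G_1=0$ gives $\lambda(\Tlx-x^*)=\eta\log\frac{p}{1-p}$, we can write $\Tlx-x^*=\eta\, g(\eta,\theta)$ with $g(\eta,\theta)=\lambda^{-1}\log\frac{p}{1-p}$ smooth. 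Then
\begin{align*}
\Tlx(\eta,\theta)-\Tlx(\eta,0)=\eta\bigl(g(\eta,\theta)-g(\eta,0)\bigr)=\eta\theta\,g_\theta(0,0)+o(\eta\theta).
\end{align*}
At $\eta=0$ we have $x\equiv x^*$, so differentiating $G_2=0$ in $\theta$ gives $p_\theta=-x^*(1-x^*)(1-2x^*)\sigma_A(x^*)$. Therefore $g_\theta(0,0)=p_\theta/\bigl(\lambda x^*(1-x^*)\bigr)=-(1-2x^*)\sigma_A(x^*)/\lambda$, which is \eqref{eq:approx-coord-displacement}.

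\textbf{Step 3: sign.} We have $\Tlx(\eta,0)-x^*=\frac{\eta}{\lambda}\log\frac{x^*}{1-x^*}+o(\eta)$, and this term has sign $-\sgn\lambda$ because $x^*<1/2$. The $\theta$-correction from Step 2 is $O(\eta\theta)=o(\eta)$ and has the same sign. Hence $\Tlx$ lies below $x^*$ if $\lambda>0$ and above $x^*$ if $\lambda<0$.

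\textbf{Step 4: the true SLE (main obstacle).} Applying \cref{thm:equilibrium-distortion-general} with a fixed $\gamma$ is not enough. By \eqref{eq:theorem-error-bound}, $\Delta\le Ck^{-2}\eta^{-4}=4C\theta^2$, and $\theta^2$ need not be $o(\eta\theta)$ unless $\theta=o(\eta)$.

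The fix is to use that the problem is one-dimensional and that the relevant $\gamma$ grows as $\eta\to0$. Near $\Tlx$,
\begin{align*}
\frac{d}{dx}\bigl(x-\TlL_1(x)\bigr)=1-\frac{\lambda p(1-p)}{\eta}+O(\theta/\eta),
\end{align*}
whose magnitude is at least $c/\eta$ with $c=|\lambda|x^*(1-x^*)/2$. This holds uniformly on an interval $U$ around $\Tlx$ of width of order $\eta$, on which $p$ stays bounded away from $0$ and $1$.

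Let $r=2\Delta\eta/c=O(\theta^2\eta)=o(\eta\theta)$, and note that $r$ is much smaller than the width of $U$. By the mean value theorem, $x-\TlL_1(x)$ takes opposite signs of magnitude at least $2\Delta$ at $\Tlx\pm r$. Since $|L_1-\TlL_1|\le\Delta$, the function $x-L_1(x)$ also changes sign on $[\Tlx-r,\Tlx+r]$. The intermediate value theorem then yields an SLE $\Brx$ with $|\Brx-\Tlx|\le r=o(\eta\theta)$, which is \eqref{eq:approx-coord-true-distance}.

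The delicate points are the uniformity of these derivative bounds as $(\eta,\theta)\to0$, and checking that $U$ has width of order $\eta$ while $r=o(\eta\theta)$. Given that, $\Brx$ is within $o(\eta\theta)$ of $\Tlx$, whose distance from $x^*$ is of exact order $\eta$. So $\Brx$ lies on the same side of $x^*$ as $\Tlx$.
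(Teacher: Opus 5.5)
Your proof is correct and follows essentially the same route as the paper. The paper works with the single scaled coordinate $u=\lambda(x_1-x^*)/\eta$; this is your $\log\frac{p}{1-p}$ once you eliminate $x$ using $G_1=0$. It applies the implicit function theorem at $u_0=\log\frac{x^*}{1-x^*}$ and then brackets a true SLE at $\tilde{x}\pm c_1\eta\theta^2$, combining the uniform bound $G_u\le -c_0$ (equivalent to your $|\tfrac{d}{dx}(x-\widetilde{L}_1(x))|\ge c/\eta$) with $\Delta\le 4C\theta^2$, exactly as in your Step~4.
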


Setting $\theta=0$ removes the sampling correction, so $\Tlx(\eta,0)$ is the corresponding full-information $\eta$-logit equilibrium. 
The difference in \cref{eq:approx-coord-displacement} therefore isolates the effect of sampling noise from the equilibrium displacement already caused by logit noise. 
Since $\sigma_A(x^*)>0$ and $1-2x^*>0$, sampling shifts the interior fixed point toward $x_1=0$ when $\lambda>0$ and toward $x_1=1$ when $\lambda<0$. 
This direction agrees with the preference for the suboptimal action described above. 
By \cref{eq:approx-coord-true-distance}, the true SLE inherits the same leading-order displacement, and hence the same direction. 

\subsection{Curvature premium in separable games}
\label{sec:curvature}

As a parsimonious example of the curvature premium, consider a \emph{separable game} where each $F_i$ depends only on $x_i$ and can be represented as $F_i(x) = f_i(x_i)$. 
This class of games yields 
\begin{align}
    & 
    q_i(x) = \frac{1}{2k\eta} \la F_i''(x), \Sigma(x) \ra = 
    \frac{1}{2k\eta} \cdot f_i''(x_i) \cdot \Sigma_{ii}(x) = 
    \frac{1}{2k\eta} \cdot f_i''(x_i) \cdot x_i (1 - x_i), 
    \label{eq:delta-with-curvature}
\end{align}
since $[F_i''(x)]_{ij} = 0$ unless $i = j$. 
Other things being equal, at each $x$, agents behave \emph{as if} they prefer actions with (i) locally ``more convex'' payoffs because of $f_i''(x_i)$ and/or (ii) relatively high but not too high popularity because of $x_i (1 - x_i)$. 
The second is a property shared with variance premium and reflects sample fluctuations. 
Due to the first property, however, the sign of $\WH{v}$ and $\WH{q}$ can be different, reflecting their distinct origins.

For separable two-action games, \cref{thm:approximation} implies the following. 
\begin{corollary}
    \label{cor:separable.2}
For a separable two-action game satisfying $F_i(x) = f_i(x_i)$,  
\begin{align}
    \WH{v}_1(x) 
    & 
    = 
    \frac{1}{2k\eta^2} P_2(x)
    \left(1 - 2P_1(x)\right) 
    \sigma_F(x)
\quad \text{and}
    \\ 
    \WH{q}_1(x) 
    & 
    = 
    \frac{1}{2k\eta} 
    P_2(x)
    (f_1''(x_1) - f_2''(x_2)) x_1 (1 - x_1),
    \label{eq:2x2-curvature-premium}
\end{align}
where $\sigma_F(x) = \left(f_1'(x_1) + f_2'(x_2)\right)^2 x_1 x_2$. 
\end{corollary}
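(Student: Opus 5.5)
The plan is to specialize the definitions \eqref{eq:def.delta} of $v$ and $q$ from \cref{thm:approximation} to $n=2$ with $F_i(x)=f_i(x_i)$, and then apply the centering operation \eqref{eq:bar-hat}. The one structural fact I will use throughout is that, for two actions, any collection $\{y_i\}$ satisfies
\begin{align*}
    \WH y_1 = y_1 - P_1 y_1 - P_2 y_2 = P_2\,(y_1-y_2),
\end{align*}
because $P_1+P_2=1$. Both formulas should then follow by direct computation.

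For the variance premium, I first write the gradients as $F_1'(x)=(f_1'(x_1),0)^\top$ and $F_2'(x)=(0,f_2'(x_2))^\top$. Applying the identity above to these vectors gives $\WH{F_1'}=P_2\,(f_1',-f_2')^\top$ and $\WH{F_2'}=P_1\,(-f_1',f_2')^\top$. On the simplex with $n=2$, the covariance matrix is
\begin{align*}
    \Sigma(x)=x_1x_2\begin{bmatrix}1&-1\\-1&1\end{bmatrix},
\end{align*}
so that $u^\top\Sigma u=x_1x_2(u_1-u_2)^2$. This yields $v_1=\frac{1}{2k\eta^2}P_2^2\,\sigma_F$ and $v_2=\frac{1}{2k\eta^2}P_1^2\,\sigma_F$, where $\sigma_F=(f_1'+f_2')^2x_1x_2$. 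Applying the two-action identity once more gives
\begin{align*}
    \WH v_1=\tfrac{1}{2k\eta^2}\,P_2\,(P_2^2-P_1^2)\,\sigma_F=\tfrac{1}{2k\eta^2}\,P_2\,(P_2-P_1)\,\sigma_F,
\end{align*}
and since $P_2-P_1=1-2P_1$, this is the claimed expression.

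For the curvature premium, I use $F_1''=\operatorname{diag}(f_1''(x_1),0)$ and $F_2''=\operatorname{diag}(0,f_2''(x_2))$. These give $\la F_1'',\Sigma\ra=f_1''(x_1)\,x_1(1-x_1)$ and $\la F_2'',\Sigma\ra=f_2''(x_2)\,x_2(1-x_2)$, which agrees with \eqref{eq:delta-with-curvature}. Since $x_2(1-x_2)=x_1(1-x_1)$ when $n=2$, the centering identity gives $\WH q_1=P_2\,(q_1-q_2)$. This is exactly \eqref{eq:2x2-curvature-premium}.

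No step here is a real obstacle. The only point needing care is the sign bookkeeping in $\WH{F_i'}$ combined with the rank-one form of $\Sigma$, which together produce the factor $(f_1'+f_2')^2$ rather than $(f_1'-f_2')^2$.
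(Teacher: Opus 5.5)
Your proposal is correct and takes essentially the same route as the paper. Both use the two-action centering identity $\widehat{y}_1=P_2(y_1-y_2)$ and the rank-one form of $\Sigma(x)$ to obtain $v_1=(2k\eta^2)^{-1}P_2^2\sigma_F$ and $v_2=(2k\eta^2)^{-1}P_1^2\sigma_F$, and both finish the curvature term with $x_1(1-x_1)=x_2(1-x_2)$. The paper additionally states a general $n$-action formula $v_i=(2k\eta^2)^{-1}(e_i-P)^\top F'\Sigma F'(e_i-P)$ before specializing, but that formula is not needed for this two-action result.
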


\begin{figure}[t]
    \centering
	\begin{subfigure}[b]{.32\hsize}
		\centering
        \includegraphics[width=\hsize]{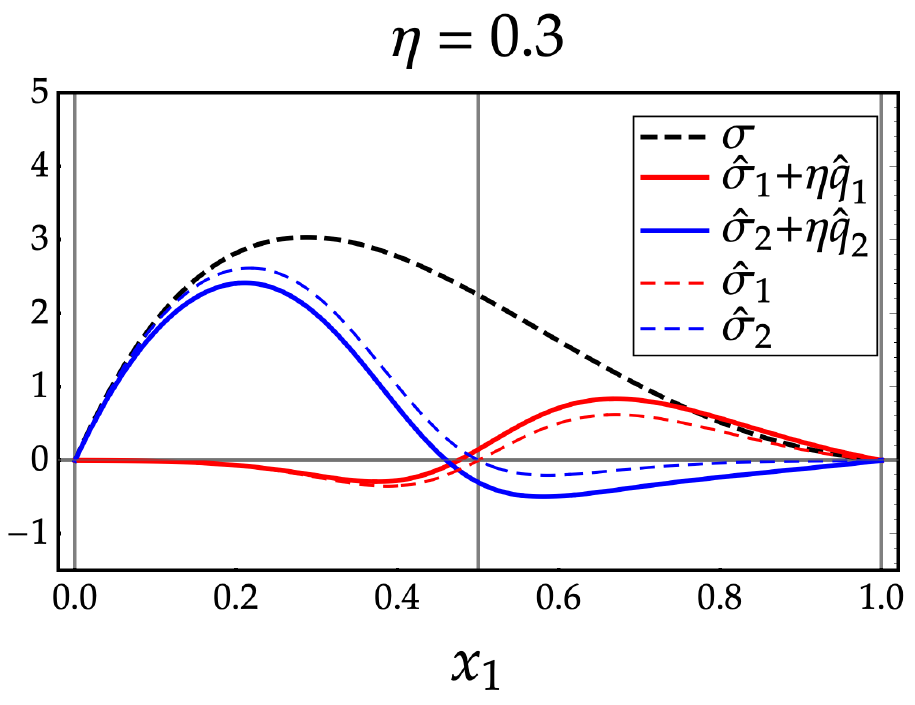}
    \end{subfigure}
	\begin{subfigure}[b]{.32\hsize}
		\centering
        \includegraphics[width=\hsize]{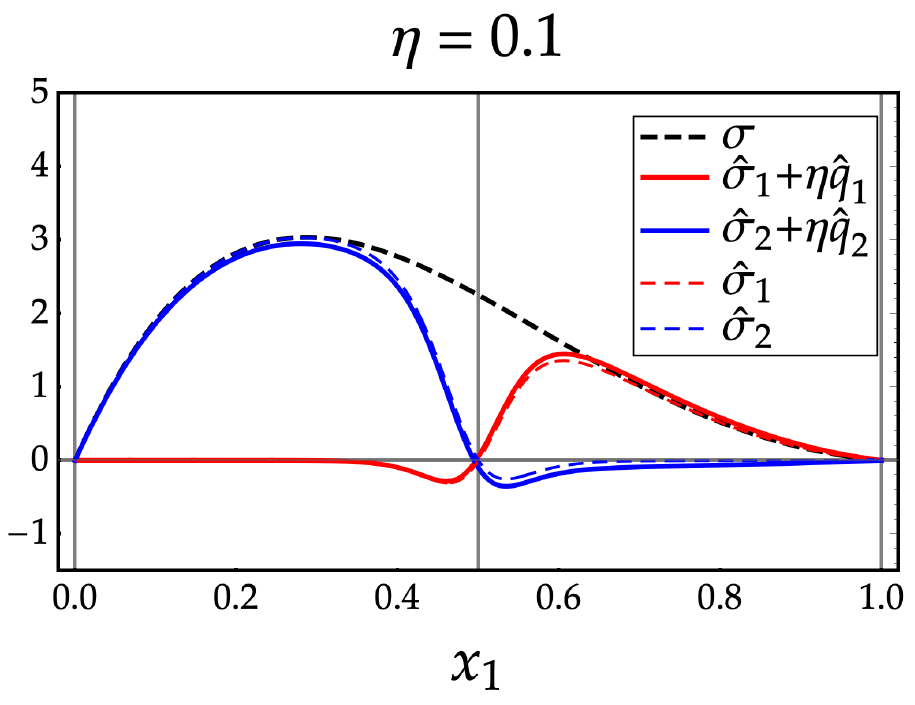}
    \end{subfigure}
    \hfill
	\begin{subfigure}[b]{.32\hsize}
		\centering
        \includegraphics[width=\hsize]{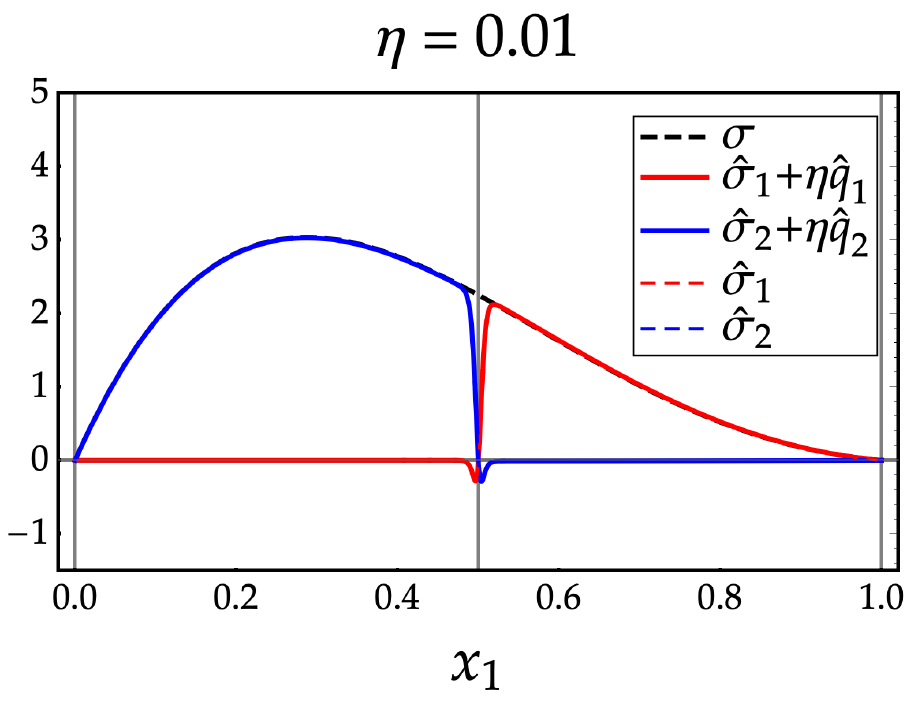}
    \end{subfigure}

    \caption{Illustration of \cref{cor:separable.2} based on graphs of $2k\eta^2(\WH{v} + \WH{q}) = \WH{\sigma} + \eta\WH{\kappa}$ under different values of $\eta$ in a separable two-action congestion game, where $\kappa_i(x) \Is \langle F_i''(x),\Sigma(x)\rangle$ is the unscaled curvature term. The vertical axes report this scaled correction. $\operatorname{br}(x) = \{1\}$ if $x_1 < 1/2$ and $\operatorname{br}(x) = \{2\}$ if $x_1 > 1/2$. For comparison, the dashed curves show only $\WH{\sigma}$.} 
    \label{fig:sep-congestion}
\end{figure}

\begin{example}
\Cref{fig:sep-congestion} considers a separable \emph{congestion game} $F(x) = (- x_1, - 2 (x_2)^2)$ in which the unique Nash equilibrium satisfies $x_1^* = 1/2$. 
The curvature premium biases the aggregate choice toward action $1$ because $f_1''(x_1) = 0 > - 4 = f_2''(x_2)$ for all $x \in X$. 
Notably, because marginal payoffs are state dependent, $\sigma_F(x) = (5 - 4x_1)^2 x_1 x_2$ is not symmetric about the midpoint $1/2$, in contrast to linear population games in \cref{cor:2x2-delta}. 
While $\WH v$ vanishes at the interior Nash equilibrium $x_1^*$ as $P_1(x^*) = 1/2$, nonzero distortion remains at $x_1^*$ due to $\WH q$. 
In this congestion game, the interior fixed point of the approximated rule $\TlL$ must lie to the right of the Nash equilibrium $x_1^* = 1/2$. 
At the Nash equilibrium where variance premiums vanish, the population behaves as if agents prefer the less risky option (i.e., action $1$), whose expected cost is lower under sampling noise due to curvature effects. 
We also confirm from \cref{fig:sep-congestion} that the magnitudes of curvature premiums are broadly smaller than those of variance premiums, consistent with our discussion in \cref{sec:origins-of-premiums}. 
\end{example}

Before concluding this section, it should be noted that the curvature premium is conditional on the assumed inference rule.
In our model, agents first adopt the empirical distribution $w$ as an estimate of $x$ and then evaluate the plug-in payoff $F(w)$.
For a linear game $F(x)=Ax$, this plug-in rule has a natural interpretation because $Aw$ is the sample average of the payoffs each action would earn against the observed opponents, and it is an unbiased estimator of $Ax$. 
With nonlinear payoffs, however, estimating $x$ by $w$ and estimating $F(x)$ by $F(w)$ are not equivalent, and agents do not have to infer $F(x)$ by $F(w)$. 
As discussed, $q_i$ captures the Jensen-type bias due to convexity or concavity of the payoff function. 
Alternative payoff estimators may therefore change or eliminate this term. 
A general analysis of this aspect is left for future research.

\section{The critical joint limit}
\label{sec:joint-limit}

\subsection{The critical scaling of \texorpdfstring{$k$}{k} and \texorpdfstring{$\eta$}{eta}}

The meaning of a ``large'' sample depends on how precisely agents respond to payoffs.
Sampling error in the empirical state $w$ has scale $k^{-1/2}$ because $\Var[w]=k^{-1}\Sigma(x)$, whereas $\eta$ is the payoff scale on which logit choice discriminates among actions.
The ratio of these scales is therefore $k^{-1/2}/\eta = 1/(\sqrt{k}\eta)$.
As discussed, for the approximation in \cref{thm:approximation} to become relatively exact, we need $k\eta^2 \to \infty$, in which case the noise ratio must vanish. 
Then, sampling effects appear as only a marginal perturbation of full-information logit choice, as the approximation formula in \cref{thm:approximation} confirms. 
Here, we instead study the boundary case in which the ratio remains positive and finite.
Throughout, arrows indicate convergence as $k \to \infty$ unless otherwise noted. 

Let the logit noise level $\eta > 0$ fall as the sample size $k$ grows, so that agents respond more and more precisely as information becomes more and more accurate. 
Sampling noise gradually vanishes, but agents become more and more sensitive at the same time, so the interaction between the two sources of noise remains. 
Specifically, for a sequence of logit noise levels $\{\eta_k\}_{k=1}^\infty$, we define the \emph{$\tau$-limit}, with $\tau\in(0,\infty)$, by
\begin{align}
    & \tau_k \Is \sqrt{k}\eta_k \longrightarrow \tau\in(0,\infty) 
    \quad \text{as} \quad 
    k \longrightarrow \infty. 
    \label{eq:tau-lim}
    \tag{$\tau$-\textsf{LIM}}
\end{align}
Along such a sequence, both sources of noise vanish, but the noise ratio converges to $1/\tau$. 
The scaled sampling error becomes asymptotically Gaussian without becoming behaviorally negligible.\footnote{An alternative interpretation is based on the virtual payoffs. 
The logit virtual payoff $H$ is of order $\eta_k$, and the leading sampling-induced virtual payoff $\TlG$ is of order $1/(k\eta_k)$ as discussed in \cref{sec:origins-of-premiums}. 
Both are of order $k^{-1/2}$ under condition \eqref{eq:tau-lim} and therefore have comparable effects on agents' choice behavior.} 
At fixed $\tau \in (0,\infty)$, increasing $k$ alone therefore does not make the approximation in \cref{thm:approximation} accurate, so we need a different characterization.

\subsection{Equilibrium distortion in the \texorpdfstring{$\tau$}{tau} limit} 

For simplicity, we consider the local limit of interior equilibria for linear population games $F(x)=Ax$. 
Linearity eliminates curvature effects and makes inferred payoffs exactly linear in the sampling error.

Let $x^*\in\Int{X}$ be a completely mixed Nash equilibrium for which $Ax^*= \Olu \Vt1$ for some $ \Olu \in\BbR$, and let $a_i$ denote the $i$th row of $A$ written as a column vector.
Local displacements of the population state lie in the tangent space $T \Is \{h\in\BbR^n: \sum_{i\in S} h_i = 0\}$.
A displacement $h\in T$ changes the payoff difference between actions $i$ and $j$ by $F_i(x^* + h) - F_j(x^* + h) = \la a_i-a_j,h\ra$.
We impose the local payoff-identification condition
\begin{align}
    h\in T
    \quad\text{and}\quad
    \la a_i-a_j, h \ra =0
    \quad \forall i,j\in S
    \quad\Longrightarrow\quad
    h=\Vt0.
    \label{eq:tau-lim.regularity}
    \tag{\textsf{R}}
\end{align}
Thus, every nonzero deviation $h\in T$ must change at least one payoff difference, which is a natural regularity condition for an interior Nash equilibrium.
For the two-action case, this is equivalent to the condition $\lambda \ne 0$ in \cref{cor:2x2-delta}, and $\lambda = 0$ is the degenerate case in which all $x \in X$ are Nash equilibria. 

Since $\eta \to \tau/\sqrt{k}$, both frictions operate at scale $k^{-1/2}$ in the $\tau$ limit. 
We thus represent states near $x^*$ in the magnified local coordinate $h = \sqrt{k}\, (x^k(h) - x^*) \in T$, or equivalently, 
\begin{align}
    x^k(h)
    \Is
    x^*+\frac{h}{\sqrt{k}}. 
    \label{eq:tau-lim.local-state}
\end{align}
Then, any fixed $h \in T$ gives a valid population state $x^k(h) \in X$ for all sufficiently large $k$.
Alternatively, for fixed $k$, define the set of $h\in T$ that yields a valid population state by $T_k\Is\{h\in T:x^k(h)\in\Int{X}\}$. 
The sampling logit rule, in terms of the local coordinates $h \in T$, can be written as  
\begin{align}
    & \ClL^k(h)
    \Is
    L^{k,\eta_k}(x^k(h))
    & \forall h\in T_k.
\end{align}
Then, $h_k\in T_k$ represents an SLE when $\ClL^k(h_k)=x^k(h_k)$, and its limit as $k \to \infty$ characterizes the equilibrium distortion.\footnote{We must restrict $\ClL^k$ to $T_k$ because otherwise the multinomial probabilities we use in $L^{k,\eta}$ are not well defined.}

The next lemma identifies the limit of $\ClL^k: T_k \to \Int{X}$ under condition \eqref{eq:tau-lim}.
\begin{lemma}[Critical local response limit]
\label{lem:tau-lim.local-limit}
For a linear population game $F(x)=Ax$ and an interior Nash equilibrium $x^*\in \Int{X}$, suppose that $\eta_k > 0$ satisfies the $\tau$-limit condition \eqref{eq:tau-lim}.
Set $\Sigma^*\Is\Sigma(x^*)$ and let $Z^*\sim N(\Vt0,\Sigma^*)$. 
Define $Q^\tau:T\to\Int{X}$ by
\begin{align}
    Q^\tau(h)
    \Is
    \BbE\left[\Lambda^\tau\bigl(A(h+Z^*)\bigr)\right].
    \label{eq:tau-lim.local-response}
\end{align}
Then, $\mathcal L^k\to Q^\tau$ in $C^1_{\mathrm{loc}}(T)$. 
That is, every compact $K\subset T$ is contained in $T_k$ eventually, and the functions and their derivatives converge uniformly on $K$.
\end{lemma}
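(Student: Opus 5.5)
We will rewrite $\ClL^k$ as an expectation of a fixed smooth function of a random vector that converges in distribution, and then upgrade pointwise convergence to $C^1_{\mathrm{loc}}$ convergence using uniform integrability and equicontinuity.

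\textbf{Step 1 (centering and rescaling).} Fix a compact $K\subset T$. Since $x^*\in\Int{X}$ and $\|h\|/\sqrt{k}\to0$ uniformly on $K$, we have $K\subset T_k$ for all large $k$. Let $w=z/k$ with $z\sim\Mult(k\mid x^k(h))$, and define the scaled sampling error
\[
Z_k(h)\Is \sqrt{k}\,\bigl(w-x^k(h)\bigr)\in T .
\]
Because $Ax^*=\Olu\Vt1$ and $\Lambda^\eta$ is invariant under adding constants to all payoffs, we have
\[
\Lambda^{\eta_k}(Aw)=\Lambda^{\eta_k}\Bigl(A\bigl(w-x^*\bigr)\Bigr)=\Lambda^{1}\Bigl(\tfrac{1}{\sqrt{k}\eta_k}A\bigl(h+Z_k(h)\bigr)\Bigr)=\Lambda^{\tau_k}\bigl(A(h+Z_k(h))\bigr).
\]
Hence
\[
\ClL^k(h)=\BbE\bigl[\Lambda^{\tau_k}(A(h+Z_k(h)))\bigr],
\]
which is the same functional form as $Q^\tau$, with $\tau_k$ in place of $\tau$ and $Z_k(h)$ in place of $Z^*$.

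\textbf{Step 2 (uniform convergence of values).} The map $(u,s)\mapsto\Lambda^s(u)$ is bounded. On $\BbR^n\times[\tau/2,2\tau]$ it is globally Lipschitz in $u$, with constant of order $1/s$, and continuous in $s$, uniformly in $u$ on bounded sets. Three facts then give uniform convergence on $K$.
\begin{enumerate}
\item By the multivariate CLT for multinomials with a triangular array of parameters $x^k(h)\to x^*$, we have $Z_k(h)\Rightarrow Z^*$. The convergence is uniform for $h\in K$; we would prove this via a Berry--Esseen or characteristic-function bound, since the parameters stay in a compact subset of $\Int{X}$.
\item The covariances $\Sigma(x^k(h))$ are bounded, so the family $\{Z_k(h)\}$ is tight.
\item $\tau_k\to\tau$.
\end{enumerate}
To make the uniformity in $h$ cleaner, we would replace weak convergence by a bounded-Lipschitz metric estimate. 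Since $\Lambda^{\tau}(A(h+\cdot))$ is Lipschitz uniformly in $h\in K$, we get $\sup_K\|\ClL^k-Q^\tau\|\to0$.

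\textbf{Step 3 (derivatives, the main obstacle).} Differentiating $\ClL^k$ in $h$ is delicate because $h$ enters both through the argument $A(h+Z_k)$ and through the multinomial law $M^k(\cdot\mid x^k(h))$. A naive score-function derivative produces a factor $\sqrt{k}\,(z_i/x_i-k)/k$-type terms that are only $O(1)$ after normalization, and each must be controlled.

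Our first approach is to exploit linearity. Write $g_k(y)=\Lambda^{\tau_k}(Ay)$, and apply the score identity to the exponential family $M^k(z\mid x)$ with $x=x^k(h)$. The derivative of the law with respect to $h$ in direction $e\in T$ is
\[
\frac{1}{\sqrt{k}}\sum_i \frac{e_i z_i}{x_i}\,M^k(z\mid x).
\]
This yields
\[
D\ClL^k(h)[e]=\BbE\bigl[g_k'(h+Z_k)(Ae)\bigr]-\BbE\bigl[g_k'(h+Z_k)(A e)\bigr]+\BbE\Bigl[g_k(h+Z_k)\,\bigl\la e,\diag{x}^{-1}Z_k\bigr\ra\Bigr],
\]
where the first term comes from the explicit $h$ in the argument and the second from the $-h/\sqrt{k}$ shift of the centering. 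The two cancel, leaving the score term.

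The limit should be $\BbE\bigl[\Lambda^\tau(A(h+Z^*))\la e,\diag{x^*}^{-1}Z^*\ra\bigr]$. By Gaussian integration by parts, using $\Sigma^*\diag{x^*}^{-1}e=e$ for $e\in T$, this equals $DQ^\tau(h)[e]$. The convergence of the score term follows from $Z_k\Rightarrow Z^*$ together with uniform integrability of $\|Z_k\|^2$, which holds because second moments are bounded.

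Uniformity over $K$ should then follow by an Arzel\`a--Ascoli argument. The family $\{D\ClL^k\}$ is equicontinuous on $K$: higher moments of $Z_k$ are uniformly bounded, so one more score differentiation gives uniformly bounded second derivatives. Pointwise convergence plus equicontinuity upgrades to uniform convergence on $K$. If the score bookkeeping proves messy, the fallback is to instead bound $\|\ClL^k-Q^\tau\|_{C^2(K)}$ and use interpolation to obtain $C^1$ convergence from $C^0$ convergence.
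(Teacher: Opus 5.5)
Your proposal is correct and follows essentially the paper's route: the same rescaling to $\Lambda^{\tau_k}(A(h+Z_k(h)))$ via $Ax^*=\Olu\Vt1$, and the same score identity. The paper likewise observes that $h+Z_k(h)=z/\sqrt{k}-\sqrt{k}x^*$ is independent of $h$, so only the multinomial weights are differentiated. The limit derivative is then identified by the same Gaussian integration by parts using $\Sigma^*\diag{x^*}^{-1}e=e$.

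The paper gets uniformity on $K$ more cheaply, by proving convergence along arbitrary sequences $h_k\to h$ (and $u_k\to u$) with Lindeberg--Feller. This makes your Berry--Esseen and second-derivative/Arzel\`a--Ascoli steps unnecessary. Also, bounded second moments give uniform integrability of $\|Z_k\|$, not of $\|Z_k\|^2$. That is all the score term needs, since it is a bounded factor times a linear function of $Z_k$.
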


Here, $Z^*$ is the Gaussian limit of the scaled sampling error $\sqrt{k}\,(w-x^k(h))$ as $k \to\infty$, and the expectation in \cref{eq:tau-lim.local-response} is taken over $Z^*$. 
As sampling errors preserve total population mass, 
$Z^*$ lies in $T$ because $w-x^k(h) \in T$ for any $w$, $k$, and $h$. 
Thus, $h+Z^*$ is the sample's limiting displacement from $x^*$, and $Q^\tau(h)$ averages the $\tau$-logit response to the resulting payoff variation $A(h+Z^*)$ over $Z^*$.

If $h_k\to h$ along a sequence of SLE for each $k$, then \cref{lem:tau-lim.local-limit} implies $Q^\tau(h)=x^*$.
The following theorem establishes that this limiting equation has a unique solution $h^\tau$, and that there is a unique SLE $x^{k,\eta_k}$ in a neighborhood converging to it. 
\begin{theorem}[SLE in the $\tau$-limit]
\label{thm:tau-lim.joint-limit}
Consider a linear population game $F(x)=Ax$ with an interior Nash equilibrium $x^*\in \Int{X}$ satisfying the local payoff-identification condition \eqref{eq:tau-lim.regularity}.
Let $\eta_k>0$ satisfy the $\tau$-limit condition \eqref{eq:tau-lim}.
Then the following statements hold.
\begin{enumerate}
    \item
    \label{enum:tau-lim.root}
    There is a unique local state displacement $h^\tau\in T$ satisfying $x^*=Q^\tau(h^\tau)$.
    \item
    \label{enum:tau-lim.equilibrium}
    There are $\epsilon>0$ and $k^* <\infty$ such that, for every $k\ge k^* $, the sampling logit choice rule $L^{k,\eta_k}$ has a unique fixed point $x^{k,\eta_k}$ in
    $x^*+\frac{1}{\sqrt{k}} \{h\in T:\|h-h^\tau\|<\epsilon\}$.
    This local SLE satisfies $\sqrt{k}\bigl(x^{k,\eta_k}-x^*\bigr) \to h^\tau$.
\end{enumerate}
\end{theorem}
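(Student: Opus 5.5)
Part~\ref{enum:tau-lim.root} reduces to showing that $\Phi(h)\Is Q^\tau(h)-x^*$ has a unique zero on $T$. I would exhibit $\Phi$ as the gradient of a strictly convex potential. Let $\ell(u)\Is\tau\log\sum_{j\in S}\exp(\tau^{-1}u_j)$, so that $\nabla\ell=\Lambda^\tau$. Define
\begin{align*}
    \Psi(h)\Is\BbE\bigl[\ell\bigl(A(h+Z^*)\bigr)\bigr]-\la Ax^*,h\ra, \qquad h\in T.
\end{align*}
Because $Ax^*=\Olu\Vt1$ and $h\in T$, the last term vanishes. The gradient of $\Psi$ restricted to $T$ is the projection onto $T$ of $A^\top Q^\tau(h)-A^\top x^*$, which is not $Q^\tau(h)-x^*$ itself. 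So I would instead work with the payoff-difference coordinates $u=A h$ modulo constants.

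Concretely, I would treat the equation $\BbE[\Lambda^\tau(Ah+AZ^*)]=x^*$ as a ``strictly monotone map'' problem in $h$. Take $h_1\ne h_2$ in $T$, set $g=h_1-h_2$, and consider $t\mapsto\la A g,\ Q^\tau(h_2+tg)\ra$. Its derivative is $\BbE[(Ag)^\top D\Lambda^\tau\,(Ag)]$. This is strictly positive because $D\Lambda^\tau(u)$ is the logit covariance matrix (a positive multiple of it), which is positive semidefinite with kernel $\mathrm{span}(\Vt1)$, and by \eqref{eq:tau-lim.regularity} the vector $Ag$ is not a multiple of $\Vt1$. Therefore $\la Ag, Q^\tau(h_1)-Q^\tau(h_2)\ra>0$, and this gives uniqueness. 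For existence, I would use the convex function $\Psi_A(h)\Is\BbE[\ell(A(h+Z^*))]-\la x^*,Ah\ra$ on $T$. Its gradient is the projection of $A^\top(Q^\tau(h)-x^*)$, so critical points solve $A^\top(Q^\tau-x^*)\perp T$, i.e.\ $A^\top(Q^\tau-x^*)=c\Vt1$. That equation is weaker than $Q^\tau=x^*$, so I need something extra. Regularity (R) in the square interior case means the map $h\mapsto Ah$ mod $\Vt1$ is injective on $T$, hence bijective onto $\BbR^n/\Vt1$ since the dimensions match ($n-1$). I would therefore change variables to $u=Ah$ mod constants. In these coordinates the equation reads $\nabla_u\BbE[\ell(u+AZ^*)]=x^*$, the gradient of a strictly convex (mod $\Vt1$) function. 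Its conjugate evaluated at $x^*\in\Int{X}$ gives existence: coercivity holds because $\ell(u)-\la x^*,u\ra\to\infty$ as $u\to\infty$ mod $\Vt1$ whenever $x^*$ is interior, and the expectation preserves this. Strict convexity then gives uniqueness, and nonsingularity of the Jacobian $A$-compatible on $T$ follows.

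For part~\ref{enum:tau-lim.equilibrium}, I would apply a degree or implicit-function argument to $\Phi_k(h)\Is\ClL^k(h)-x^k(h)=\ClL^k(h)-x^*-h/\sqrt{k}$ on the ball $B_\epsilon(h^\tau)\subset T$. By \cref{lem:tau-lim.local-limit}, $\Phi_k\to\Phi$ in $C^1$ uniformly on $\overline{B_\epsilon}$, where the term $h/\sqrt k$ and its derivative vanish uniformly. The derivative $D\Phi(h^\tau)=D\Lambda$-averaged$\,\cdot A$ restricted to $T\to T$ is invertible by the argument above. Choosing $\epsilon$ so that $\|D\Phi(h)-D\Phi(h^\tau)\|\le\gamma/4$ on the ball, with $\gamma$ the smallest singular value, the same bound with $\gamma/2$ holds for $D\Phi_k$ for large $k$. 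Hence $\Phi_k$ is injective on the ball, which gives uniqueness. It also maps the ball onto a set containing $B_{\gamma\epsilon/2}(\Phi_k(h^\tau))\ni\Vt0$ once $\|\Phi_k(h^\tau)\|<\gamma\epsilon/2$, which gives existence; this is the standard quantitative inverse function theorem. Convergence $h_k\to h^\tau$ follows by repeating the argument with arbitrarily small $\epsilon$.

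I expect the main obstacle to be existence in part~\ref{enum:tau-lim.root}, namely getting the right potential and coercivity in the $u$-coordinates. The expected logsumexp has asymptotically linear growth in the direction $u$ with slope $\max_i u_i$ (mod constants). Coercivity of $\max_i u_i-\la x^*,u\ra$ on $\{\sum x^*_iu_i\text{ normalization}\}$ then needs $x^*$ strictly interior, and I would verify this directly.
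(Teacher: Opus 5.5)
Your proposal is correct and follows essentially the paper's route. You pass to centered payoff coordinates via \eqref{eq:tau-lim.regularity}, minimize the expected log-sum-exp potential minus $\la x^*,u\ra$ (coercive via the lower bound $\max_i u_i$, strictly convex on $T$), and then obtain the local SLE from the $C^1_{\mathrm{loc}}$ convergence in \cref{lem:tau-lim.local-limit} together with nondegeneracy of $DQ^\tau(h^\tau)$. Your quantitative inverse function theorem plays the role of the paper's Newton-type contraction $h\mapsto h-A_0^{-1}R_k(h)$.

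One small aside: the potential $\Psi_A$ you abandoned is exactly the paper's $H^\tau\circ J$ with $J=\SfC A$. Under \eqref{eq:tau-lim.regularity}, its critical-point condition $A^\top(Q^\tau(h)-x^*)\in\operatorname{span}(\Vt1)$ already forces $Q^\tau(h)=x^*$, because $Q^\tau(h)-x^*\in T$ and $\SfC A^\top$ is invertible on $T$ as the adjoint of $J$. So that condition was not actually weaker.
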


The convergence in part \cref{enum:tau-lim.equilibrium} is equivalently written as
\begin{align}
    x^{k,\eta_k}
    =
    x^*
    +\frac{h^\tau}{\sqrt{k}}
    +o(k^{-1/2}).
    \label{eq:tau-lim.sle-local-expansion}
\end{align}
Thus, the local SLE converges to $x^*$, but $h_i^\tau/\sqrt{k}$ represents the leading displacement of action $i$'s equilibrium share as $k$ grows. 
The sign of $h_i^\tau$ indicates whether that action is over- or underrepresented relative to $x^*$. 

The critical local response has a random utility interpretation.
Let $\varepsilon=(\varepsilon_i)_{i\in S}$ be a vector of independent standard Type-I extreme-value (i.e., Gumbel) random variables, independent of $Z^*$.
The Gumbel representation of logit choice \citep{McFadden-Chap1974} gives
\begin{align}
    Q_i^\tau(h)
    =
    \BbP\left(
        \bigl[A(h+Z^*)\bigr]_i+\tau\varepsilon_i
        \ge
        \max_{j\in S}
        \left\{\bigl[A(h+Z^*)\bigr]_j+\tau\varepsilon_j\right\}
    \right).
    \label{eq:tau-lim.random-utility}
\end{align}
Here, $AZ^*$ is the correlated Gaussian payoff disturbance generated by sampling, whereas the shocks $\tau\varepsilon_i$ are independent Gumbel choice shocks.
Both remain of first order for every fixed $\tau\in(0,\infty)$.\footnote{In the terminology of discrete choice modeling, $Q^\tau$ has the form of a \emph{logit-kernel probit} \citep[Sec.~5.6.2]{Train-Book2009}, as it averages a logit kernel over the Gaussian terms $AZ^*$, or equivalently a \emph{Gaussian error-components mixed logit} \citep[Sec.~6.3]{Train-Book2009}.}
The caveat is that this random utility interpretation applies only to the limiting map $Q^\tau$, for which the distribution of $AZ^*+\tau\varepsilon$ is fixed as $h$ varies.
At finite $k$, by contrast, the multinomial distribution of the sampled state varies with the population state $x$.
Thus, the sampling mechanism generally does not represent $L^{k,\eta}$ as an additive random utility model with material payoff $F(x)$ as its systematic component and a state-independent disturbance distribution.\footnote{Relatedly, \cref{obs:TlL-interpretation} should be read as an asymptotic interpretation of the exact sampling rule. 
Its virtual-payoff representation is exact for the large-$k$ approximation $\TlL$ but not for $L^{k,\eta}$.
In particular, it does not by itself characterize the critical regime considered here, in which $k\eta_k^2\to\tau^2$ rather than $k\eta_k^2\to\infty$.}

The local limit also yields natural sufficient conditions for stability under the sampling logit dynamic \eqref{eq:sLD}, based on the underlying payoff matrix $A$. 
\begin{proposition}[Stability of the $\tau$-limit SLE]
\label{prop:tau-lim.stability}
Under the hypotheses of \cref{thm:tau-lim.joint-limit}, let $x^{k,\eta_k}$ be the local SLE of part \cref{enum:tau-lim.equilibrium}.
\begin{enumerate}
    \item
    \label{enum:tau-lim.stability-stable}
    If $\la h, Ah \ra < 0$ for all $h \in T\setminus\{\Vt0\}$, there is $k^{**}<\infty$ such that $x^{k,\eta_k}$ is locally asymptotically stable under \eqref{eq:sLD} for every $k \ge k^{**}$.
    \item
    \label{enum:tau-lim.stability-unstable}
    If $\la h, Ah \ra > 0$ for all $h \in T\setminus\{\Vt0\}$, there is $k^{**}<\infty$ such that $x^{k,\eta_k}$ is unstable under \eqref{eq:sLD} for every $k \ge k^{**}$.
\end{enumerate}
\end{proposition}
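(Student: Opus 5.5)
Stability of $x^{k,\eta_k}$ under \eqref{eq:sLD} is governed by the eigenvalues of $L'_T(x^{k,\eta_k})-I$ on $T$. It is enough to show that for large $k$ these all have negative real parts in case \ref{enum:tau-lim.stability-stable}, and that at least one has positive real part in case \ref{enum:tau-lim.stability-unstable}. In local coordinates, $\ClL^k(h)=L(x^*+h/\sqrt{k})$. Hence
\[
DL(x^{k,\eta_k})=\sqrt{k}\,D\ClL^k(h_k),\qquad h_k=\sqrt{k}\,(x^{k,\eta_k}-x^*).
\]
By \cref{thm:tau-lim.joint-limit} we have $h_k\to h^\tau$, and by \cref{lem:tau-lim.local-limit} we have $D\ClL^k(h_k)\to DQ^\tau(h^\tau)\Is J$ by $C^1_{\mathrm{loc}}$ convergence. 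So the linearization is
\[
\sqrt{k}\bigl(J+o(1)\bigr)-I.
\]
If $J$ restricted to $T$ has all eigenvalues with negative real part, the eigenvalues of the linearization have real parts tending to $-\infty$. If $J$ has an eigenvalue with positive real part, some real part tends to $+\infty$. Eigenvalues depend continuously on the matrix, and the factor $\sqrt{k}$ dominates the bounded $-I$ term. The problem therefore reduces to the spectral properties of $J$ on $T$.

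Next compute $J$. Differentiating under the expectation gives
\[
J=\BbE\bigl[D\Lambda^\tau(A(h^\tau+Z^*))\bigr]A=\tfrac{1}{\tau}\,M A,
\]
where
\[
M\Is\BbE\bigl[\diag{p}-pp^\top\bigr],\qquad p=\Lambda^\tau\bigl(A(h^\tau+Z^*)\bigr).
\]
Each $\diag{p}-pp^\top$ is symmetric positive semidefinite, has kernel $\mathrm{span}\{\Vt1\}$ because $p>0$, and has range $T$. Averaging preserves these properties, so $M$ is symmetric, maps into $T$, and is positive definite on $T$. The relevant operator is therefore $MA$ acting on $T$. Strictly it is $MA$ composed with the inclusion of $T$, and $MA$ maps $T$ into $T$.

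Now use the standard trick for a product of a positive definite matrix and a definite one. Let $\mu$ be an eigenvalue of $MA|_T$ with eigenvector $h\in T_{\BbC}$, so $MAh=\mu h$. Since $M$ is invertible on $T$, write $M_T^{-1}$ for its inverse there, which is Hermitian positive definite. Then $Ah=\mu M_T^{-1}h+c\Vt1$ for some scalar $c$, where $c\Vt1$ is the component of $Ah$ outside $T$. Taking the inner product with $\bar h$ kills the $\Vt1$ term, since $\bar h\in T_{\BbC}$. This gives
\[
\la \bar h, Ah\ra=\mu\,\la\bar h,M_T^{-1}h\ra.
\]
Take real parts. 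The real part of $\la\bar h,Ah\ra$ equals the quadratic form of the symmetric part of $A$ evaluated at $\mathrm{Re}\,h$ plus at $\mathrm{Im}\,h$. Both vectors lie in $T$ and at least one is nonzero. So the real part is negative in case \ref{enum:tau-lim.stability-stable} and positive in case \ref{enum:tau-lim.stability-unstable}. Because $\la\bar h,M_T^{-1}h\ra>0$, $\mathrm{Re}\,\mu$ has the same sign. In case \ref{enum:tau-lim.stability-unstable} every eigenvalue then has positive real part, so instability follows, for example from a positive-real-part eigenvalue of the Jacobian.

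The main obstacle I expect is the first step: justifying that the spectrum of $\sqrt{k}\,D\ClL^k(h_k)-I$ is controlled by that of $J$. This needs the derivative convergence from \cref{lem:tau-lim.local-limit} along the moving points $h_k$. That follows from uniform convergence on a compact neighbourhood of $h^\tau$ together with continuity of $DQ^\tau$. The other ingredient is the elementary perturbation fact that the eigenvalues of $\sqrt{k}(J+E_k)-I$ with $E_k\to0$ behave like $\sqrt{k}\,\mathrm{spec}(J)$. I would handle this by dividing by $\sqrt{k}$ and applying continuity of eigenvalues to $J+E_k-k^{-1/2}I$, which converges to $J$.
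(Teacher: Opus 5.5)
Your proposal is correct and follows the paper's proof essentially step for step. The shared steps are the rescaling $DL^{k,\eta_k}(x^{k,\eta_k})=\sqrt{k}\,D\ClL^k(h_k)$ with $D\ClL^k(h_k)\to DQ^\tau(h^\tau)$, and the factorization of $DQ^\tau(h^\tau)$ as an averaged logit Jacobian $M$ (symmetric and positive definite on $T$) times the payoff matrix; the paper writes $\SfC A$ where you write $A$, and these agree on $T$ because $M\Vt1=\Vt0$. The paper likewise uses the identity $v^*Av=\mu\,v^*M^{-1}v$ for an eigenpair on $T$ to sign $\operatorname{Re}\mu$, and continuity of eigenvalues to absorb $E_k$ and the $-I$ term.

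The only difference is in part (b). There the paper uses $\trace[M\SfC A]>0$ to obtain one eigenvalue with positive real part. Your Rayleigh-quotient argument instead applies symmetrically and shows that every eigenvalue on $T$ has positive real part, which is slightly stronger and equally valid.
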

\noindent Part \ref{enum:tau-lim.stability-stable} applies to strictly stable (contractive) games \citep{Hofbauer-Sandholm-JET2009}, in which $x^*$ is the unique Nash equilibrium.
Part \ref{enum:tau-lim.stability-unstable} includes completely mixed equilibria of two-action coordination games, where $T$ is one-dimensional and $\la h,Ah\ra=\lambda h_1^2>0$ for $h\ne\Vt0$ in the notation of \eqref{eq:A-2x2}.
While the interior equilibrium associated with the mixed equilibrium of a coordination game remains unstable, this suggests that, in strictly stable games, the critical SLE is a robust prediction under \eqref{eq:sLD}. 

\subsection{Boundary behavior} 

As $\tau \in (0,\infty)$ varies, the critical local response connects sampling-dominant and choice-dominant limits.
To relate the latter to the variance premium in \cref{sec:approximation}, set
\begin{align}
    \sigma_i^*
    \Is
    (a_i-\overline a^*)^\top
    \Sigma^*
    (a_i-\overline a^*)
    \quad \text{with} \quad
    \overline a^*
    \Is
    \sum_{l\in S}x_l^*a_l.
    \label{eq:tau-lim.sigma-star}
\end{align}
The scalar $\sigma_i^*$ is the one-draw variance of action $i$'s relative payoff at $x^*$, using the limiting equilibrium weights $x^*$ in defining $\overline{a}$.\footnote{This is the limit of $\sigma_i$ in \cref{eq:v-vector-form} along equilibrium curves converging to $x^*$. 
It generally differs from the zero-noise limit of $\sigma_i(x^*)$ at a fixed positive $\eta$ because the logit weights at $x^*$ are uniform for every $\eta > 0$.}

\begin{corollary}[Boundary behaviors of the $\tau$-limit]
\label{cor:tau-lim.endpoints}
Under the hypotheses of \cref{thm:tau-lim.joint-limit}, the unique solution $h^\tau$ of $x^*=Q^\tau(h^\tau)$ has the following limiting behavior.
\begin{enumerate}
    \item
    \label{enum:tau-lim.small-tau}
    As $\tau\Toz$, $Q^\tau(h)\to Q^0(h)$ for every $h\in T$, where $Q^0:T\to\Int{X}$ is defined by
    \begin{align}
        & Q_i^0(h)
        \Is
        \BbP\left(
            \bigl[A(h+Z^*)\bigr]_i>
            \max_{j\ne i}\bigl[A(h+Z^*)\bigr]_j
        \right)
        & \forall i\in S,\ h\in T.
        \label{eq:tau-lim.gaussian-br}
    \end{align}
    Moreover, $h^\tau\to h^0$, where $h^0$ is the unique solution of $x^*=Q^0(h^0)$.
    \item
    \label{enum:tau-lim.large-tau}
    For $\tau \to \infty$, define the \emph{local virtual payoff} around $x^*$ as 
    \begin{align}
        & \ClF_i^\tau(h)
        \Is
        a_i^\top h
        +\frac{1}{2\tau}\sigma_i^*
        -\tau\log x_i^*
        & \forall i\in S,\ h\in T.
        \label{eq:tau-lim.local-virtual-payoff}
    \end{align}
    Then, as $\tau\to\infty$, there is a scalar $c^\tau$, common to all actions, such that 
    \begin{align}
        & \ClF_i^\tau(h^\tau)
        =
        c^\tau+O(\tau^{-3})
        & \forall i\in S
        \label{eq:tau-lim.virtual-indifference}
    \end{align}
    for some $h^\tau \in T$. 
\end{enumerate}
\end{corollary}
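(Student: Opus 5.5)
Both parts rest on the same tool. For part~\ref{enum:tau-lim.small-tau}, pointwise convergence $Q^\tau(h)\to Q^0(h)$ is a dominated-convergence argument: for fixed $h$, the vector $A(h+Z^*)$ has a unique maximal coordinate almost surely. Ties occur only on sets $\{\la a_i-a_j,h+Z^*\ra=0\}$. These are null for the nondegenerate Gaussian $Z^*$ on $T$ (note $\Sigma^*$ is positive definite on $T$ since $x^*$ is interior) as long as $a_i-a_j$ is not orthogonal to $T$. If some $a_i-a_j$ were constant on $T$, that pair never ties except on a fixed condition, which by the Nash condition $Ax^*=\Olu\Vt1$ forces $a_i-a_j\perp T$ and $\la a_i-a_j,x^*\ra=0$, so $a_i=a_j$ and the actions are duplicates. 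I would rule out, or set aside, this degenerate case. On the complement of ties, $\Lambda^\tau(u)\to$ the indicator of the argmax, bounded by $1$.

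For $h^\tau\to h^0$ I would use the structure behind part~\ref{enum:tau-lim.root} of \cref{thm:tau-lim.joint-limit}. The map $h\mapsto Q^\tau(h)$ is the gradient (in $h$, through $A$) of a convex potential $\Phi^\tau(h)=\BbE[\tau\log\sum_j\exp(\tau^{-1}[A(h+Z^*)]_j)]$, restricted appropriately. The root is then the minimizer of the strictly convex coercive function $\Phi^\tau(h)-\la x^*,Ah\ra$ on $T$, with strictness and coercivity coming from \eqref{eq:tau-lim.regularity}. Since $\tau\log\sum\exp\to\max$ uniformly, with error at most $\tau\log n$, these objectives converge uniformly on $T$ to $\Phi^0(h)-\la x^*,Ah\ra$, where $\Phi^0(h)=\BbE\max_j[A(h+Z^*)]_j$. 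This limit is convex, coercive by the same identification argument, and differentiable with gradient $A^\top Q^0$. Its minimizer is unique because strict convexity survives thanks to Gaussian smoothing. Epi-convergence of uniformly convergent convex coercive functions then gives convergence of minimizers, so $h^\tau\to h^0$. Checking strict convexity of $\Phi^0$ is the delicate part of this step; I would get it from positivity of the Hessian $A^\top\BbE[\text{density terms}]A$ restricted to $T$.

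For part~\ref{enum:tau-lim.large-tau}, I would expand $Q^\tau$ for large $\tau$ by a Taylor expansion of $\Lambda^\tau$ in the small argument $u/\tau$, with $u=A(h+Z^*)$. Write $\Lambda^\tau_i(u)=\exp(\tau^{-1}u_i-\psi(u/\tau))$, where $\psi$ is log-sum-exp. Expanding around the equilibrium point, and including the $-\tau\log x_i^*$ shift, the fixed-point equation $Q^\tau_i(h)=x_i^*$ becomes, after taking logs,
\[
\log\BbE\bigl[\exp(\tau^{-1}[A(h+Z^*)]_i)\bigr]-\log\BbE[\cdots]_{\text{normalizer}}=\log x_i^* .
\]
Gaussian moments give $\log\BbE\exp(\tau^{-1}\la a_i,Z^*\ra)=\frac{1}{2\tau^2}a_i^\top\Sigma^*a_i$ exactly. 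The difficulty is that the normalizer is inside the expectation. I would expand it to second order and use the indifference at the logit-weighted point to obtain the centered quantity $\sigma_i^*$ with $\overline a^*$ built from the weights $x^*$. Odd Gaussian moments vanish, so the next correction to the normalizer is $O(\tau^{-4})$ in the log-probability. After multiplying by $\tau$ this yields $\tau\log x_i^*=a_i^\top h+\frac{1}{2\tau}\sigma_i^*-\tau\,\mathrm{const}+O(\tau^{-3})$, with $h=h^\tau=O(\tau^{-1})$ a priori. That a priori bound comes from the identification condition: terms beyond second order in $h/\tau$ are then of order $\tau^{-3}$ after multiplying by $\tau$. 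Rearranging gives \eqref{eq:tau-lim.virtual-indifference}.

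I expect this last step to be the main obstacle. It requires careful bookkeeping to show that the $\tau^{-2}$ terms (third cumulants and cross terms $h\cdot Z^*$ at order $\tau^{-2}$) either cancel or are common to all actions. In particular, the mixed term $\tau^{-2}\BbE[\la a_i,h\ra\cdot\text{normalizer fluctuation}]$ vanishes by $\BbE Z^*=0$, and the odd Gaussian terms vanish by symmetry.
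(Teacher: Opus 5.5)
Part~(a) is correct and is essentially the paper's argument in different coordinates. Your objective $\Phi^\tau(h)-\la x^*,Ah\ra$ equals $H^\tau(Jh)$ with $J=\SfC A$, whereas the paper minimizes $H^\tau(u)=W^\tau(u)-\la x^*,u\ra$ in the centered payoff coordinate $u=Jh$. The $\tau\log n$ log-sum-exp bound, the compactness of minimizers, and dominated convergence are used in the same way. Two small corrections. First, stationarity in $h$ only gives $\SfC A^\top(Q^\tau(h)-x^*)=\Vt0$, so you need (R), through invertibility on $T$ of the adjoint of $J$, to conclude $Q^\tau(h)=x^*$. Second, the duplicate-action case you set aside is already excluded by (R), since $V=JZ^*$ is then nondegenerate on $T$.

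Part~(b), however, has a genuine gap. Your remainder control rests on the a priori bound $h^\tau=O(\tau^{-1})$. This is false unless $x^*$ is uniform: $h^\tau$ grows linearly in $\tau$. The paper obtains $Jh^\tau=\tau\SfC\log x^*-\frac{1}{2\tau}\SfC\sigma^*+O(\tau^{-3})$. Your own target formula forces $\la a_i-a_j,h^\tau\ra\approx\tau\log(x_i^*/x_j^*)$. In the two-action example, $h_1^\tau\approx\tau\lambda^{-1}\log\bigl(x^*/(1-x^*)\bigr)$. So $Jh^\tau/\tau\to\SfC\log x^*\neq\Vt0$, and the consequences are:
\begin{itemize}
\item the deterministic part of the argument of $\Lambda^\tau$, rescaled by $1/\tau$, is not small;
\item the terms \emph{beyond second order in $h/\tau$} are of order one at every order, so nothing controls your remainder;
\item the $-\tau\log x_i^*$ shift cannot simply be built into the expansion point without first proving that the solution sits there.
\end{itemize}
Condition (R) only makes $J$ invertible on $T$; it gives no bound on $h^\tau$ at all.

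What is missing is a rescaling that isolates the only small quantity, followed by a localization. Set $s=\tau^{-1}$, $\Lambda=\Lambda^1$, $V=JZ^*$ and $b^\tau\Is Jh^\tau/\tau\in T$. The root equation then reads $\BbE[\Lambda(b^\tau+sV)]=x^*$. At $s=0$ its unique solution in $T$ is $b^*=\SfC\log x^*$, and $D\Lambda(b^*)=\Sigma^*$ is invertible on $T$.

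The paper proceeds as follows:
\begin{itemize}
\item it applies the implicit function theorem at $(b^*,0)$;
\item it identifies the resulting branch with $b^\tau$ by uniqueness of the root from \cref{thm:tau-lim.joint-limit};
\item it uses that $V$ and $-V$ have the same law to make the branch even in $s$, so $b^\tau-b^*=O(s^2)$;
\item it expands to second order around $b^*$, where the logit weights are $\Lambda(b^*)=x^*$. This is exactly why $\sigma_i^*$ is centered at $\overline{a}^*=\sum_l x_l^*a_l$.
\end{itemize}

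Your log-probability idea can be salvaged in these coordinates:
\begin{itemize}
\item Since $\BbE[\Lambda(b+sV)]=\Lambda(b)+O(s^2)$ uniformly in $b$, you get $\Lambda(b^\tau)=x^*+O(s^2)$, hence $b^\tau-b^*=O(s^2)$.
\item Expand the log-sum-exp $\psi$ around $b^\tau$, not around $\Vt0$. With odd Gaussian moments vanishing, this gives $\log x_i^*=b_i^\tau-\psi(b^\tau)+\frac{s^2}{2}\bigl(\rho_i-\sum_l\Lambda_l(b^\tau)\rho_l\bigr)+O(s^4)$.
\item Here $\rho_i=(e_i-\Lambda(b^\tau))^\top J\Sigma^*J^\top(e_i-\Lambda(b^\tau))=\sigma_i^*+O(s^2)$.
\item Centering and multiplying by $\tau$ then yields \eqref{eq:tau-lim.virtual-indifference}.
\end{itemize}
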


The small-$\tau$ limit in part \cref{enum:tau-lim.small-tau} is a Gaussian sampling best response (i.e., a \emph{probit choice}) in which agents best respond to $A(h+Z^*)$. 
In the large-$\tau$ limit, part \cref{enum:tau-lim.large-tau} is equivalent to the following pairwise expansion: 
\begin{align}
    & \la a_i-a_j , h^\tau \ra
    =
    \tau\log\frac{x_i^*}{x_j^*}
    -
    \frac{1}{2\tau}(\sigma_i^*-\sigma_j^*)
    +O(\tau^{-3})
    & \forall i,j\in S,\ i\ne j. 
    \label{eq:tau-lim.large-tau-expansion}
\end{align}
Condition \eqref{eq:tau-lim.regularity} then ensures that such payoff differences identify a unique $h^\tau \in T$. 

The three terms in \cref{eq:tau-lim.local-virtual-payoff} are the local material payoff, the sampling-induced variance premium, and the virtual payoff associated with logit noise.
Thus, \Cref{eq:tau-lim.virtual-indifference} approximately equalizes total local virtual payoffs across actions, in parallel with \cref{obs:TlL-interpretation}~\ref{enum:virtual-game-2}.
In \cref{eq:tau-lim.local-virtual-payoff}, $\tau \log x_i^*$ gives the logit effects, whereas $\sigma_i^*/(2\tau)$ gives the variance effects.\footnote{For the two-action game \eqref{eq:A-2x2}, substituting $\eta_k=\tau/\sqrt{k}$, and hence $\theta=1/(2\tau^2)$, into the local expansion in the proof of \cref{prop:approx-coord} yields the same two leading terms as \cref{eq:tau-lim.large-tau-expansion}.}

\begin{example}
Consider the two-action game \eqref{eq:A-2x2} with an interior Nash equilibrium $x_1^* = x^*\in(0,1)$, for which the payoff-identification condition reduces to $\lambda\ne0$.
Write $h=(h_1,-h_1)\in T$ and $Z^*=(Z_1^*,-Z_1^*)$ with $Z_1^*\sim N(0,\rho^2)$ and $\rho^2 \Is x^*(1-x^*)$.
The limiting payoff difference is $[A(h+Z^*)]_1-[A(h+Z^*)]_2=\lambda(h_1+Z_1^*)$, so the equilibrium condition $Q^\tau(h)=x^*$ reduces to 
\begin{align}
    \BbE\left[
        \frac{1}{1+\exp\bigl(-\lambda(h_1+Z_1^*)/\tau\bigr)}
    \right]
    = x^*,
    \label{eq:tau-lim.2x2-equation}
\end{align}
whose unique solution is denoted by $h_1^\tau$.
As $\tau \Toz$, solving $Q^0(h)=x^*$ gives
\begin{align}
    h_1^0
    = \sgn(\lambda) \rho \Phi^{-1}(x^*),
    \label{eq:tau-lim.2x2-h0}
\end{align}
where $\Phi$ is the standard normal distribution function.
For $\lambda>0$, $h_1^0$ is the $x^*$-quantile of $Z_1^*$ that makes action $1$ optimal with probability exactly $x^*$. 
In a coordination game with $x^*<1/2$, this displacement is negative, so the critical branch lies below the mixed equilibrium, consistent with \cref{prop:approx-coord}. 
In a strictly stable game ($\lambda<0$), its sign is reversed.
As $\tau\to\infty$, the large-$\tau$ expansion \eqref{eq:tau-lim.large-tau-expansion} is
$\lambda h_1^\tau = \tau\log\frac{x^*}{1-x^*} - \frac{1}{2\tau}(1-2x^*)\sigma_A(x^*)$, where $\sigma_A(x^*)=\lambda^2\rho^2$ as in \cref{cor:2x2-delta}. 

\Cref{fig:tau-lim.2x2-probs} plots the choice probabilities for the coordination game \eqref{eq:A.2x2.coord-ex}, approximated by taking $k = 1600$. 
\Cref{fig:tau-lim.2x2} illustrates both boundary cases and the finite-$k$ convergence of \cref{thm:tau-lim.joint-limit}. 
The vertical axis in \cref{fig:tau-lim.2x2-total} measures the local displacement of the (unstable) interior Nash equilibrium. 
A state $x_1^*+h/\sqrt{k}$ lies above the critical SLE approximately when $h>h_1^\tau$. 
\Cref{fig:tau-lim.2x2-detrend} isolates the role of sampling by removing the pure-logit component from $h_1^\tau$.\footnote{To represent the pure-logit component, we define $h^\infty=(h_1^\infty,-h_1^\infty)\in T$ by
$h_1^\infty\Is\lambda^{-1}\log\bigl(x^*/(1-x^*)\bigr)$.
If sampling noise is removed by setting $Z_1^*=0$ in \cref{eq:tau-lim.2x2-equation}, its unique solution is $h_1=\tau h_1^\infty$.
Thus, $h^\tau-\tau h^\infty$ measures the critical displacement relative to the pure-logit case.
For each finite-$k$ marker in \cref{fig:tau-lim.2x2-detrend}, we analogously subtract the displacement of the corresponding full-information $\eta_k$-logit equilibrium.}
As $\tau\to\infty$, the sampling-induced displacement vanishes at rate $O(\tau^{-1})$ because the Gumbel choice shocks grow relative to the normalized Gaussian sampling disturbance $Z^*$ (cf. \cref{eq:tau-lim.random-utility}). 
\end{example}

\begin{figure}[t]
    \centering
    \includegraphics[width=.392\hsize]{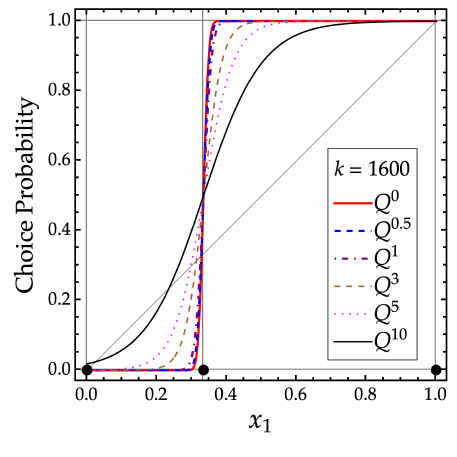}
    \caption{Choice probabilities generated by $Q^\tau$. We plot $Q_1^\tau(\sqrt{k}\,(x - x^*))$ using $k = 1600$ and setting $\eta_k = \tau/\sqrt{k}$ for each $\tau$. For $Q^0$, we use the explicit quantile representation based on the Gaussian distribution.}
    \label{fig:tau-lim.2x2-probs}
\end{figure}

\begin{figure}[t]
    \centering
    \begin{subfigure}[c]{.48\textwidth}
        \includegraphics[width=.95\hsize]{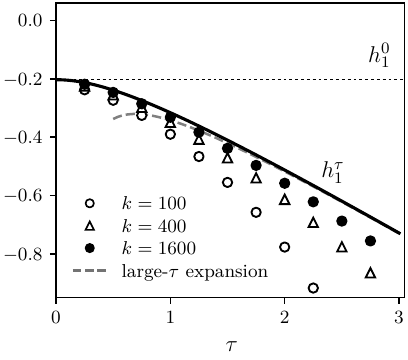}
        \caption{Total displacement $h^\tau$}
        \label{fig:tau-lim.2x2-total}
    \end{subfigure}
    \hfill 
    \begin{subfigure}[c]{.48\textwidth}
        \includegraphics[width=.95\hsize]{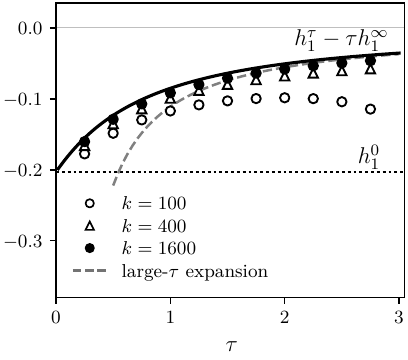}
        \caption{Sampling displacement $h^\tau - \tau h^\infty$}
        \label{fig:tau-lim.2x2-detrend}
    \end{subfigure}
    \caption{Critical local displacement in the coordination game \eqref{eq:A.2x2.coord-ex} for which $\lambda=3$ and $x^*=1/3$. In panel (A), the solid curve is the solution $h_1^\tau$ of \eqref{eq:tau-lim.2x2-equation}, the dotted line is its small-$\tau$ limit \eqref{eq:tau-lim.2x2-h0}, and the dashed curve is its large-$\tau$ expansion \eqref{eq:tau-lim.large-tau-expansion}. The markers show exact SLE displacements on the magnified scale $\sqrt{k}\,(x_1^{k,\eta_k}-x_1^*)$ under $\eta_k=\tau/\sqrt{k}$. Panel (B) plots the sampling effects on the local displacements.}
    \label{fig:tau-lim.2x2}
\end{figure}

\cref{thm:approximation} in \cref{sec:approximation} is global on $X$, permits nonlinear payoffs, and characterizes sampling effects as marginal perturbations of logit choice.
\Cref{thm:tau-lim.joint-limit} is local to an interior Nash equilibrium of a linear game but covers the regime in which sampling noise and logit noise interact at first order. The family $\{Q^\tau\}_{\tau\in(0,\infty)}$ connects sampling-dominant and logit-dominant behaviors: it yields Gaussian sampling best response as $\tau\Toz$ and logit choice with a variance premium as $\tau\to\infty$. 
In the context of the $\tau$ limit, the variance premium formula in \cref{sec:approximation} corresponds to a large-$\tau$ expansion of $Q^\tau$. 

\section{Small samples}
\label{sec:examples}

\Cref{sec:approximation,sec:joint-limit} concern regimes in which $k$ grows, with sampling noise respectively smaller than and comparable to logit noise. 
We now consider the complementary regime in which $k$ is fixed and small to discuss connections between the sampling logit framework and previous results in the literature. 

\subsection{Exact virtual payoffs}
\label{sec:small-k-examples}

In \cref{sec:approximation}, the large-$k$ analysis approximates the sampling logit rule $L$ by $\TlL$, and it decomposes the associated virtual payoff into variance and curvature premiums. 
For small-$k$ settings, however, the approximation generally becomes less accurate. 
For instance, the large-$k$ virtual payoff representation $\TlG$ in \cref{obs:TlL-interpretation} may no longer be well defined because $1 + \Htv + \Htq$ can become negative on $X$ when $k$ is small. 

\begin{figure}[t!]
    \centering
    \begin{subfigure}[c]{.49\textwidth}
        \includegraphics[width=.85\hsize]{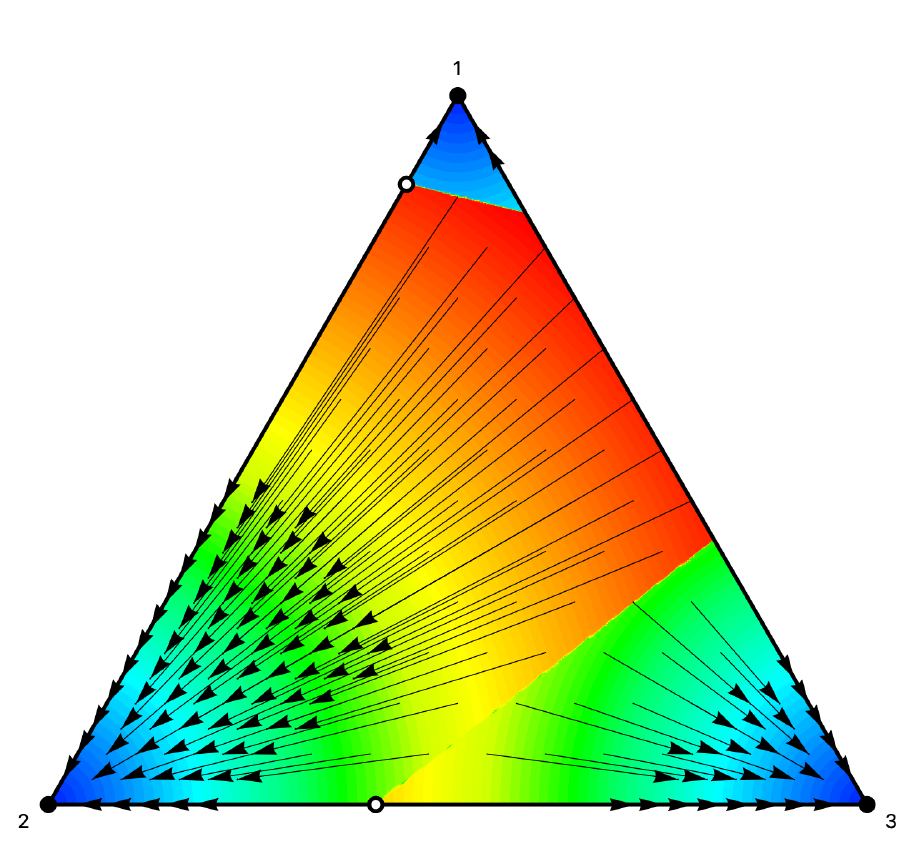}
        \caption{Best response}
        \label{fig:Young-BR}
    \end{subfigure}
    \begin{subfigure}[c]{.49\textwidth}
        \includegraphics[width=.85\hsize]{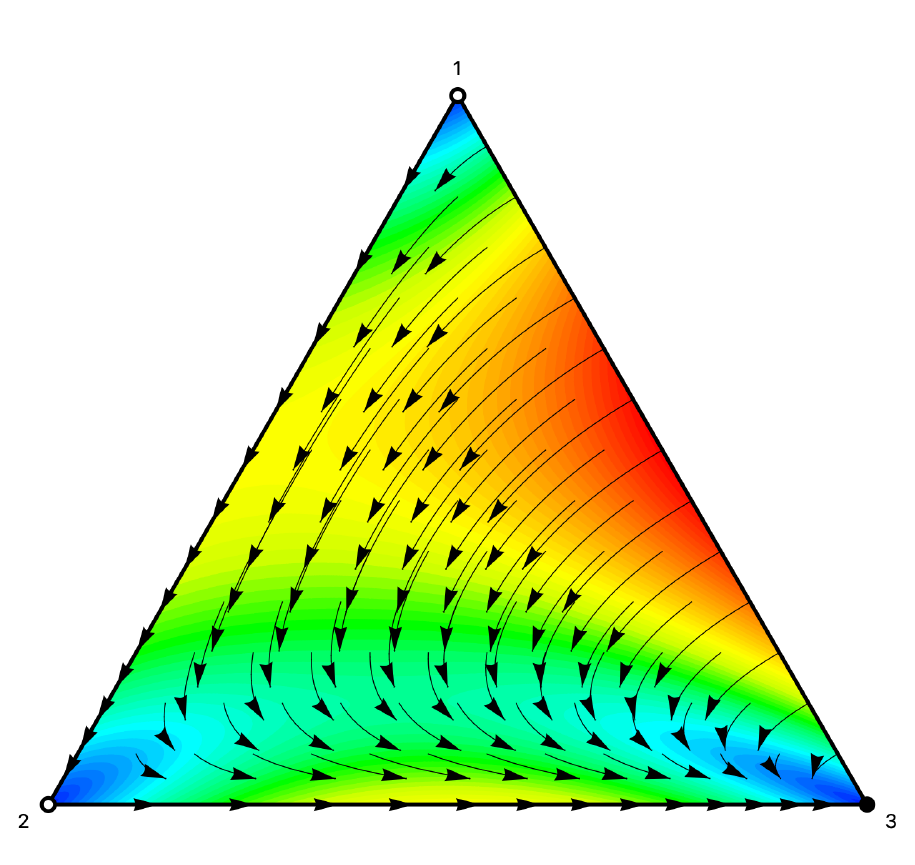}
        \caption{Sampling best response ($k = 2$)}
        \label{fig:Young-SBR}
    \end{subfigure}

    \begin{subfigure}[c]{.49\textwidth}
        \includegraphics[width=.85\hsize]{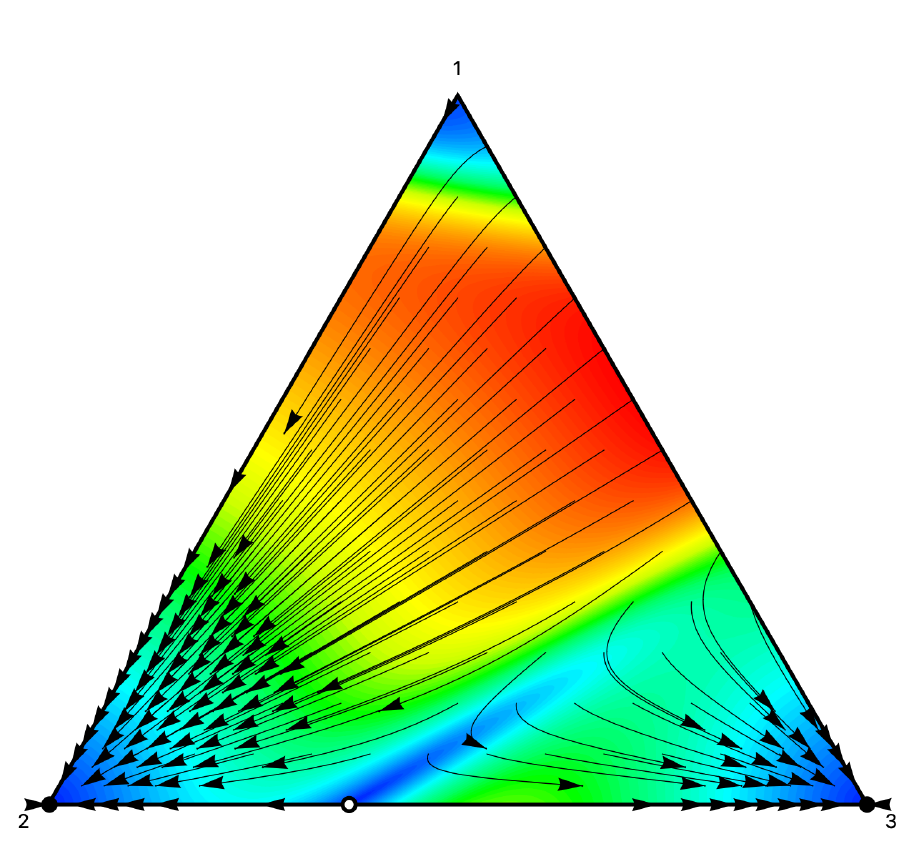}
        \caption{Logit ($\eta = 0.3$)}
        \label{fig:Young-LD}
    \end{subfigure}
    \begin{subfigure}[c]{.49\textwidth}
        \includegraphics[width=.85\hsize]{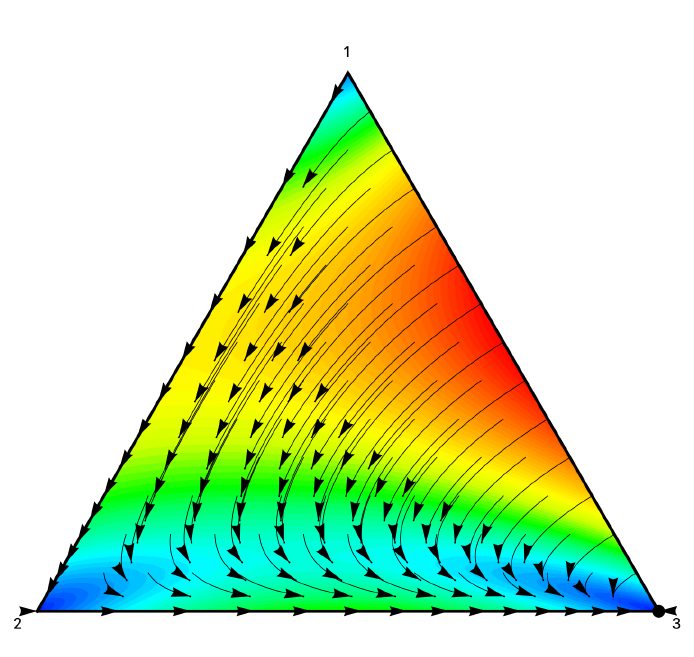}
        \caption{Sampling logit ($k = 2$, $\eta = 0.3$)}
        \label{fig:Young-SLD}
    \end{subfigure}

    \caption{Phase diagrams of the four dynamics in Young's game \eqref{eq:A.Young}. Arrows show sample trajectories, and background contours depict the speed of adjustment: warmer colors indicate faster adjustment, whereas cooler colors indicate slower adjustment. This figure is generated with the \texttt{Dynamo} software \citep{Franchetti-Sandholm-BT2013}.} 
    \label{fig:Young}
\end{figure}
In principle, the virtual-payoff representation does not depend on an asymptotic argument that relies on large $k$. 
To see this, define the \emph{exact payoff distortion}
\begin{align}
    G_i(x)=\eta\log\frac{L_i(x)}{P_i(x)}, 
    \label{eq:exact-premium}
\end{align}
which is well defined for every $k\geq 1$ and $\eta>0$ because $L_i(x)$ and $P_i(x)$ are strictly positive. 
Direct substitution shows that the sampling logit choice is the logit choice for the virtual game $F + G$, and an SLE is an $\eta$-logit equilibrium of this game. 
In the large-$k$ regime, $\TlG$ in \cref{obs:TlL-interpretation} provides a second-order approximation to $G$ and yields a clear decomposition into two premiums as discussed in \cref{sec:approximation}. 
For small $k$, $G$ does not admit such an intuitive decomposition, and it is more of an accounting identity.

\begin{figure}[t!]
    \centering
    \begin{subfigure}[c]{.49\textwidth}
        \includegraphics[width=.88\hsize]{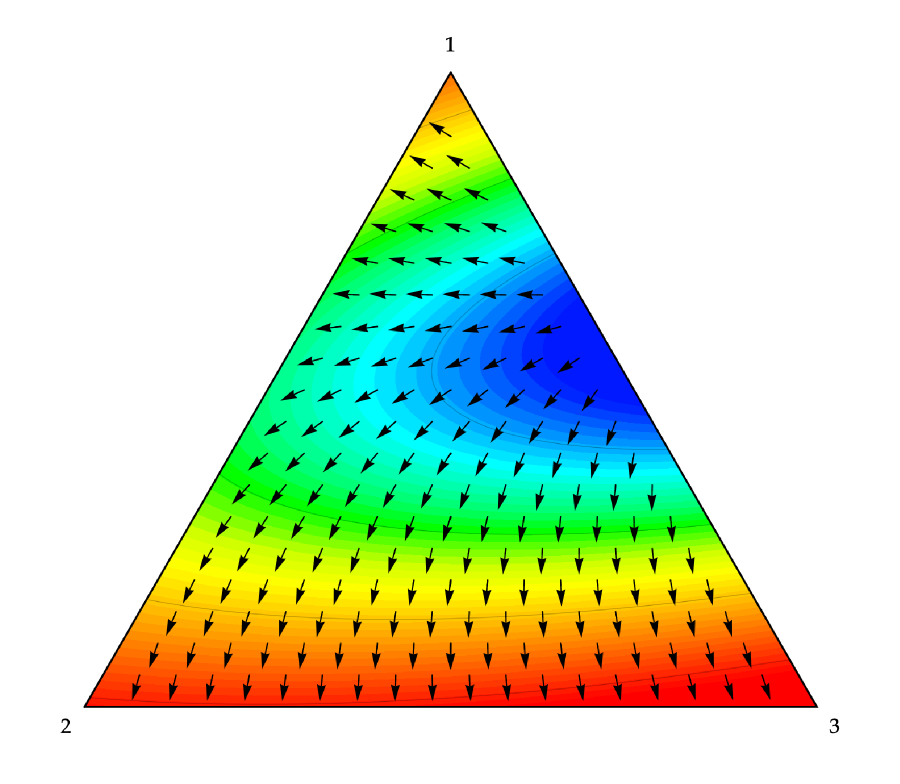}
        \caption{Payoff $F$}
        \label{fig:Young-F}
    \end{subfigure}
	\begin{subfigure}[c]{.49\textwidth}
        \includegraphics[width=.88\hsize]{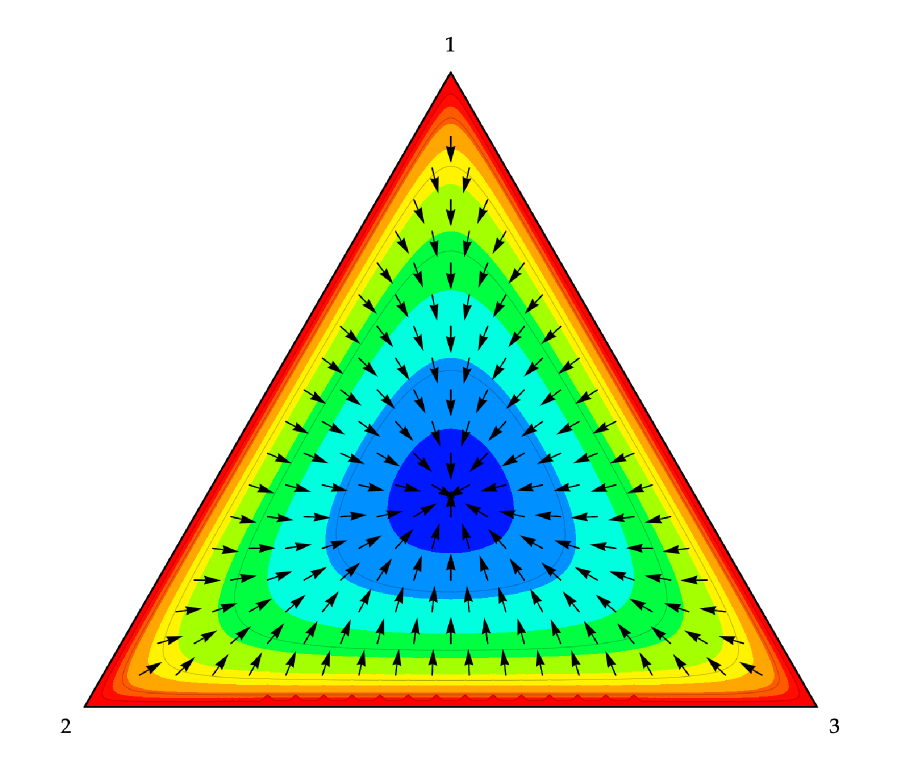}
        \caption{Logit virtual payoff $H$ ($\eta = 0.3$)}
        \label{fig:Young-H}
    \end{subfigure}

    \begin{subfigure}[c]{.49\textwidth}
        \includegraphics[width=.88\hsize]{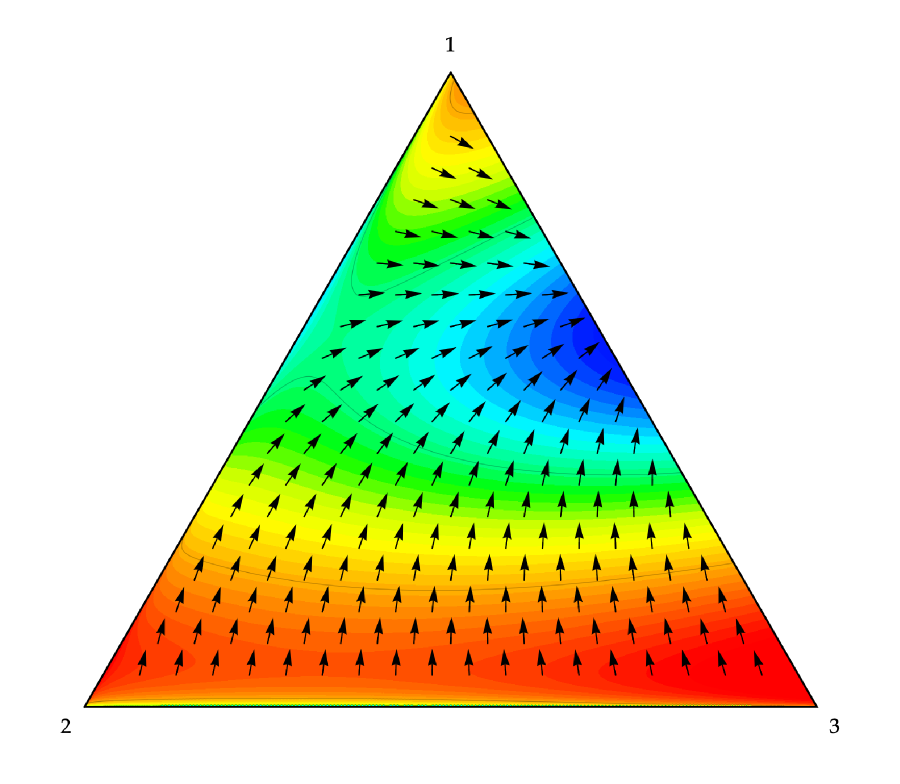}
        \caption{Exact payoff distortion $G$ ($k = 2, \eta = 0.3$)}
        \label{fig:Young-SL-G}
    \end{subfigure}
	\begin{subfigure}[c]{.49\textwidth}
        \includegraphics[width=.88\hsize]{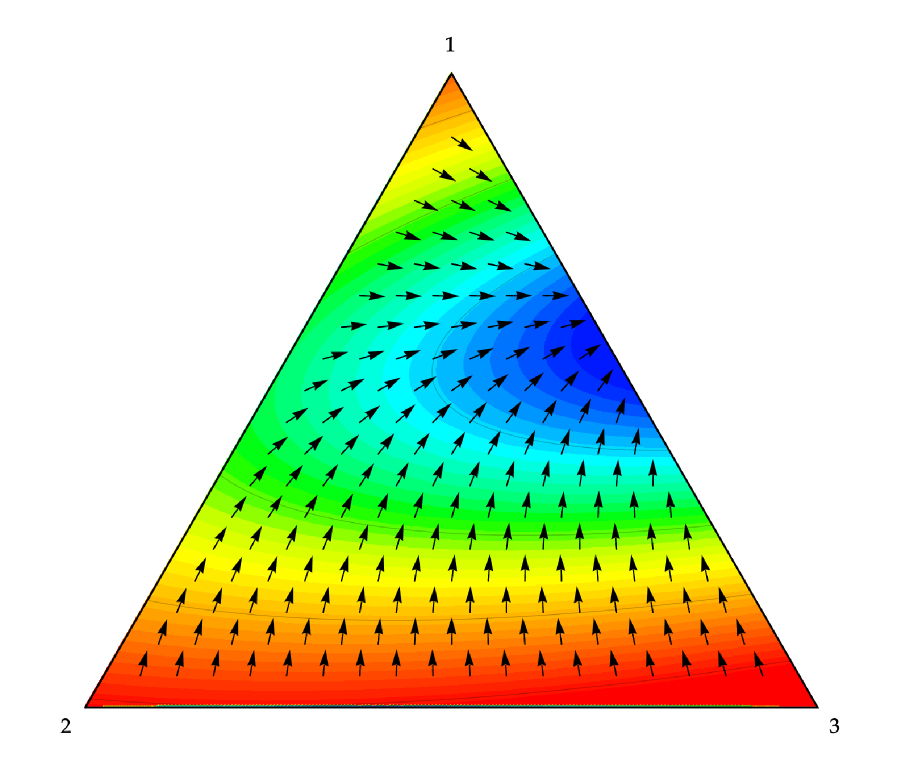}
        \caption{Exact payoff distortion $G$ ($k = 2, \eta = 0.001$)}
        \label{fig:Young-SBR-G}
    \end{subfigure}
    \caption{Projected payoff fields in Young's game \eqref{eq:A.Young}. Arrows show the normalized direction and background contours show its magnitude. Panel (D) approximates the virtual payoff fields for the sampling best response dynamic by taking small $\eta$.}
    \label{fig:exact-G-Young}
\end{figure}
Nonetheless, $G$ provides a useful illustration for the small-$k$ regime. 
For example, consider the linear population game based on the game of \citet{Young-ECTA1993}: 
\begin{align}
    A = 
    \begin{bmatrix}
        6 & 0 & 0 \\ 5 & 7 & 5 \\ 0 & 5 & 8
    \end{bmatrix}. 
    \label{eq:A.Young}
\end{align}
The game possesses three strict equilibria $\{e_1,e_2,e_3\}$ and two mixed Nash equilibria on the boundary of $X$. 
\Cref{fig:Young} compares the phase portraits of the four dynamics introduced in \cref{sec:model} for the case $k = 2$ and, for logit-based models, $\eta = 0.3$. 
Because $k=2$, their behavior should not be inferred from $\TlL$, and the exact distortion $G$ provides the appropriate virtual payoff representation. 

The best response dynamic \eqref{eq:BRD} provides a basic refinement under which the two boundary equilibria are deemed unstable (\cref{fig:Young-BR}). 
However, it does not yield a unique prediction, as all three corners are locally stable. 
\Cref{fig:Young-SBR} demonstrates that \eqref{eq:sBRD} with $k = 2$ selects $x = e_3$ because $e_3$ attracts all trajectories starting from $X \setminus \{e_1,e_2\}$ \citep[][Example 2]{Oyama-Sandholm-Tercieux-TE2015}. 
\Cref{fig:Young-LD} considers the logit dynamic \eqref{eq:LD}. 
As the dynamic approximates \eqref{eq:BRD} for small $\eta$, multiple stationary states are also locally stable under this dynamic. 
\Cref{fig:Young-SLD} illustrates that finite sampling can similarly sharpen the prediction under positive logit noise, as \eqref{eq:sLD} has a unique SLE near $e_3$, to which all trajectories converge.\footnote{While the comparison of \crefrange{fig:Young-BR}{fig:Young-SLD} may give the impression that the sampling dynamics \eqref{eq:sBRD} and \eqref{eq:sLD} exhibit broadly similar qualitative behavior, this conclusion depends on the underlying game and the noise level. See \cref{app:bilingual} for an example with the \emph{bilingual game} \citep{Galesloot-Goyal-JME1997,Goyal-Janssen-JET1997}.}

The exact payoff distortion $G$ clarifies how sampling produces this change in behavior. 
\Cref{fig:exact-G-Young} plots the projection of $G$ onto the tangent space of the simplex along with those for the payoff function $F$ and the logit virtual payoff $H$ (cf. \cref{obs:TlL-interpretation}). 
The map $G$ is state dependent in a nontrivial manner because finite sampling changes relative incentives in different directions across the simplex. 
Unlike $H$, which is independent of $F$ and favors movement toward the center of the simplex (\cref{fig:Young-H}), $G$ depends on the payoff structure of the game. 
In this example, $G$ points broadly in the opposite direction to $F$ (\cref{fig:Young-F}). 
Sampling raises the virtual payoff of relatively unattractive actions, echoing the preference for suboptimal actions in \cref{cor:2x2-delta}. 
Moreover, the directions in the positive-noise case (\cref{fig:Young-SL-G}) remain broadly similar in the near-zero-noise case (\cref{fig:Young-SBR-G}), connecting the equilibrium selection under \eqref{eq:sLD} and that under \eqref{eq:sBRD}.  
The exact payoff distortion illustrates that sampling noise can be interpreted through a virtual payoff representation for both large and small $k$.

\subsection{On equilibrium selection}
\label{sec:small-k-selection}

The (approximate) equilibrium selection in Young's game leads us to examine selection in our model more systematically. 
Several exact results follow. 

For $k = 1$, the SLE is unique and stable for \emph{any} population game. 
\begin{proposition}
    \label{prop:k=1}
    For any $\eta>0$, if $k=1$, there exists a unique SLE, and it is globally asymptotically stable under \eqref{eq:sLD}. 
\end{proposition}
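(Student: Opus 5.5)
For $k=1$ the sampling logit rule is affine in $x$. A single draw $z$ equals $e_l$ with probability $x_l$, so
\begin{align*}
    L^{1,\eta}(x) = \sum_{l\in S} x_l\, P^\eta(e_l) = Bx,
\end{align*}
where $B$ is the $n\times n$ matrix whose $l$th column is $b_l \Is P^\eta(e_l)\in\Int{X}$. Thus $B$ is a column-stochastic matrix with strictly positive entries. The plan is to reduce every claim to standard facts about such matrices and linear ODEs.

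For existence and uniqueness of the SLE, note that a fixed point $x=Bx$ in $X$ is exactly a stationary distribution of the Markov chain with transition matrix $B$ (columns as distributions). Since all entries are positive, Perron--Frobenius gives a unique eigenvector in $X$ for eigenvalue $1$, and that eigenvector is strictly positive. Alternatively, one can show directly that $B$ is a strict contraction in the $\ell^1$ norm on $T$, which also gives uniqueness via Banach's theorem.

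For global asymptotic stability, write $\beta\Is\min_{i,l}B_{il}>0$. For $h\in T$, write $h=h^+-h^-$ with $\|h^+\|_1=\|h^-\|_1=\tfrac12\|h\|_1$. Then $Bh^\pm$ are nonnegative vectors of equal mass, each entrywise at least $\beta\|h^\pm\|_1$. Hence
\begin{align*}
    \|Bh\|_1\le (1-n\beta)\|h\|_1,
\end{align*}
using the standard Dobrushin coefficient argument. So $B$ maps $T$ into $T$ with $\ell^1$-operator norm $c<1$ there. Now let $x^*$ be the SLE and set $y=x-x^*\in T$. Along \eqref{eq:sLD} we have $\dot y=(B-I)y$. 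Using the norm $V(y)=\|y\|_1$, its upper Dini derivative satisfies $D^+V\le \|By\|_1-\|y\|_1\le -(1-c)V$. (This follows from $\|y+t(By-y)\|_1\le(1-t)\|y\|_1+t\|By\|_1$.) Therefore $\|x(t)-x^*\|_1\le e^{-(1-c)t}\|x(0)-x^*\|_1$ for every initial condition in $X$. This gives global exponential, hence asymptotic, stability; invariance of $X$ has already been noted in the model section.

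The only mildly delicate step is the contraction estimate on $T$. I would present it via the Dobrushin coefficient, or else as the spectral statement that all eigenvalues of $B|_T$ have modulus below one, so that those of $(B-I)|_T$ have negative real part. The latter yields the same conclusion because the system is linear, so local and global stability coincide.
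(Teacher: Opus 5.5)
Your proposal is correct and follows the paper's proof: both write $L^{1,\eta}(x)=\Pi x$ with $\Pi_{ij}=P_i^\eta(e_j)>0$, use Perron--Frobenius to get the unique fixed point, and reduce \eqref{eq:sLD} to the linear system $\dot x=(\Pi-I)x$. The only difference is that the paper cites ``standard arguments'' for global asymptotic stability, whereas you spell them out via the Dobrushin $\ell^1$-contraction of $\Pi$ on $T$, which also gives an explicit exponential convergence rate.
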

\noindent For $k=1$, the sampling logit choice rule is linear in the population state. Specifically, we have $L(x) = \Pi x$ where $\Pi_{ij} = P_i^\eta(e_j)$, and \eqref{eq:sLD} reduces to the linear system $\dot x=(\Pi-I)x$. 
The unique $(1,\eta)$-SLE can be seen as the stationary distribution of the associated logit-perturbed Markov chain. 
This equivalence provides a direct analogy with the \emph{stochastic stability} approach under log-linear learning rules \citep{Young-ECTA1993,Blume-GEB1993,Blume-GEB1995,Blume-GEB2003}. 
That is, depending on the game, as we gradually decrease $\eta$, the population state may approach a specific pure population state, as we will discuss later.

A similar uniqueness result is available for $k = 2$ in any two-action game.\begin{proposition}
\label{prop:k=2}
For any $\eta>0$, if $k=2$ and $n = 2$, there exists a unique SLE, and it is globally asymptotically stable under \eqref{eq:sLD}. 
\end{proposition}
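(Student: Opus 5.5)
Since $n=2$, I will reduce everything to the scalar coordinate $x_1\in[0,1]$, with $x_2=1-x_1$. Under $k=2$ the possible samples are $z\in\{(2,0),(1,1),(0,2)\}$, giving empirical states $w_1\in\{1,\tfrac12,0\}$ with multinomial weights $x_1^2$, $2x_1(1-x_1)$ and $(1-x_1)^2$. Write
\begin{align*}
p_0\Is P_1^\eta(e_2),\qquad p_{1/2}\Is P_1^\eta\bigl(\tfrac12 e_1+\tfrac12 e_2\bigr),\qquad p_1\Is P_1^\eta(e_1).
\end{align*}
All three are constants in $(0,1)$, because the logit map is strictly positive. Then
\begin{align*}
L_1(x)=(1-x_1)^2p_0+2x_1(1-x_1)p_{1/2}+x_1^2p_1 ,
\end{align*}
a polynomial of degree at most two in $x_1$, and $L_2=1-L_1$. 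So the SLE are exactly the zeros in $[0,1]$ of
\begin{align*}
g(x_1)\Is L_1(x)-x_1 ,
\end{align*}
and \eqref{eq:sLD} reduces to the scalar ODE $\dot x_1=g(x_1)$. This makes precise the paper's observation that $L^{k,\eta}$ is a polynomial map for fixed $k$.

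For uniqueness, I use the endpoint values $g(0)=p_0>0$ and $g(1)=p_1-1<0$. The polynomial $g$ has degree at most two and changes sign between $0$ and $1$. Hence it has an odd number of zeros in $(0,1)$, counted with multiplicity, and at most two zeros in total. So it has exactly one zero $x_1^\dagger\in(0,1)$, and this zero is simple. Moreover $g>0$ on $[0,x_1^\dagger)$ and $g<0$ on $(x_1^\dagger,1]$. The argument has to allow for the degenerate case where the quadratic coefficient $p_0-2p_{1/2}+p_1$ vanishes and $g$ is affine. In that case the conclusion is immediate, since $g$ is then a nonconstant affine function with opposite signs at the endpoints.

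For global asymptotic stability, I first note that $[0,1]$ is forward invariant: $g(0)>0$ and $g(1)<0$, so the vector field points inward at both endpoints. Solutions exist globally by the Lipschitz property noted in \cref{sec:model}. Given the sign pattern of $g$, every solution is monotone and bounded, so it converges, and its limit must be a zero of $g$, which can only be $x_1^\dagger$. Lyapunov stability follows from the same sign pattern. Alternatively, I can use $V(x_1)=(x_1-x_1^\dagger)^2$, which satisfies $\dot V=2(x_1-x_1^\dagger)g(x_1)<0$ for $x_1\ne x_1^\dagger$.

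I do not expect a real obstacle. The only point needing care is the root-counting step, namely confirming that a sign change together with degree at most two rules out three crossings, including in the degenerate affine case.
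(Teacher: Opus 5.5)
Your proposal is correct and follows the same route as the paper. The paper reduces to the scalar quadratic $f(y)=L_1^{2,\eta}(y,1-y)-y$, uses $f(0)>0>f(1)$ to obtain a unique zero in $(0,1)$, and reads off global asymptotic stability from the sign pattern of $\dot y=f(y)$. Your parity-of-roots count and your separate handling of the degenerate affine case make rigorous the step the paper states in one line (``because $f$ is a quadratic function'').
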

\noindent The global uniqueness result, however, does not generalize to environments with $k = 2$ and $n \ge 3$, or those with $k \ge 3$. 

Instead, linear two-action games allow for a more detailed characterization for general $k$. 
We focus on a normalized form as explained below. 
Logit choice and best responses depend only on payoff differences. 
Accordingly, any two-action linear population game generated by $\begin{bsmallmatrix}a&b\\c&d\end{bsmallmatrix}$ is behaviorally equivalent to the diagonal form $\begin{bsmallmatrix}s&0\\0&t\end{bsmallmatrix}$ with $s=a-c$ and $t=d-b$, because both games have the same relative payoff $F_1(x)-F_2(x)=s x_1-t x_2 = (s + t) x_1 - t$. 
Thus, when $s,t>0$ and $s\ne t$, with appropriate relabelling of the actions, the following formulation covers all such two-action linear coordination games: 
\begin{align}
    &  
    A = 
    \begin{bmatrix}
        s & 0 \\ 
        0 & t 
    \end{bmatrix}
    \quad  (s > t > 0).  
    \label{eq:A.2x2.coord}
\end{align}

The Nash equilibria satisfy $x_1 \in \{0,t/(s+t),1\}$, where $t/(s+t)<1/2$ is the population share at which the two actions yield equal payoffs. 
Thus, among the two strict equilibria, $e_1=(1,0)$ is \emph{risk-dominant} because $\BR(1/2,1/2)=\{e_1\}$. 
More generally, following \citet{Morris-etal-ECTA1995}, $e_1$ is $p$-dominant with $p\in(0,1)$ if $x_1\ge p$ implies $e_1 \in \BR(x_1,1-x_1)$. 
In the present game, $e_1$ is $t/(s + t)$-dominant. 
Since $p$-dominance implies $p'$-dominance for all $p' > p$ and risk dominance is equivalent to $1/2$-dominance, we also confirm from $t/(s + t) < 1/2$ that $e_1$ is the risk-dominant equilibrium of the game. 

We have the following general characterization including a selection result. 
\begin{proposition}[Global uniqueness in two-action coordination games]
    \label{prop:global-uniqueness-in-s-t-coordination}
Fix $k \ge 1$ and consider the coordination game \eqref{eq:A.2x2.coord}. 
The game has a unique $(k,\eta)$-SLE, which is globally asymptotically stable under \eqref{eq:sLD} for every $\eta>0$ if and only if the indifference threshold is weakly below $1/k$, that is, 
\begin{align}
    s \ge (k-1)t
    \quad \Longleftrightarrow \quad 
    \frac{t}{s+t} \le \frac{1}{k}.
    \label{eq:global-uniqueness-s-t}
\end{align}
When these conditions hold, the unique SLE converges to the risk-dominant equilibrium $e_1$ as $\eta \Toz$. 
If $s<(k-1)t$, then there are exactly three SLE for all sufficiently small $\eta>0$. 
\end{proposition}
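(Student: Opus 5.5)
Since $n=2$, I would work in the scalar coordinate $x=x_1\in[0,1]$ and write the first component of the rule as a Bernstein polynomial. Let $c\Is t/(s+t)$ and $\varphi(w)\Is 1/\bigl(1+\exp(-(s+t)(w-c)/\eta)\bigr)$, which is the $\eta$-logit probability of action $1$ at empirical state $w$. Then
\begin{align*}
    L_1(x)=\sum_{m=0}^{k}\binom{k}{m}x^m(1-x)^{k-m}\varphi(m/k),
\end{align*}
and \eqref{eq:sLD} reduces to the scalar ODE $\Dtx=g(x)\Is L_1(x)-x$. We have $g(0)=\varphi(0)>0$ and $g(1)=\varphi(1)-1<0$. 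Hence uniqueness of the SLE is equivalent to $g$ having a single zero $\Brx$ in $(0,1)$. In that case $g>0$ on $[0,\Brx)$ and $g<0$ on $(\Brx,1]$, so global asymptotic stability is automatic in one dimension. The proof therefore reduces to counting zeros of $g$.

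For sufficiency, I would use the fact that $x=\sum_m\binom{k}{m}x^m(1-x)^{k-m}(m/k)$. So $g$ is the Bernstein polynomial with coefficients $c_m\Is\varphi(m/k)-m/k$. The variation-diminishing property of the Bernstein basis bounds the number of zeros of $g$ in $(0,1)$ by the number of sign changes of $(c_0,\dots,c_k)$. We know $c_0>0$ and $c_k<0$, so it suffices to show there is exactly one sign change. Under \eqref{eq:global-uniqueness-s-t} we have $m/k\ge c$ for all $m\ge1$, so all nodes except $m=0$ lie in the region where the logistic $\varphi$ is concave. Consequently $m\mapsto c_m$ is concave on $\{1,\dots,k\}$.

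If $c_1\ge0$, concavity together with $c_k<0$ gives the pattern $+\cdots+-\cdots-$, which has one change. The hard case is $c_1<0$. Then I must rule out a positive bump in the middle, i.e., the pattern $+,-,+,-$. My first attempt would be to show that $w\mapsto\varphi(w)/w$ is nonincreasing on $[1/k,1]$ whenever $\varphi(1/k)<1/k$, so that $c_m<0$ for all $m\ge1$. The relevant computation is $\frac{d}{dw}\bigl(wφ'-φ\bigr)=w\varphi''\le0$ on $[c,1]$. If that fails, I would instead use the symmetry $1-\varphi(w)=\varphi(2c-w)$ and a direct comparison. I expect this step to be the main obstacle.

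For the selection claim, let $\eta\Toz$. Then $\varphi(0)\to0$ and $\varphi(m/k)\to1$ for $m\ge1$, except that $\varphi(1/k)\to1/2$ if $c=1/k$. So $L_1$ converges in $C^1[0,1]$ to $\ell(x)=1-(1-x)^k$, with the equality case modified by a weight $1/2$ on $m=1$. In either case $\ell(0)=0$ and $\ell'(0)>1$; the borderline case $k=2$, $c=1/2$ is excluded by $s>t$. Also $\ell(x)>x$ on $(0,1)$. The unique zero of $g$ must therefore converge to $1$, i.e., to $e_1$.

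For necessity, suppose $s<(k-1)t$. Then $c>1/k$, and the limit as $\eta\Toz$ is $\ell(x)=\sum_{m\ge m^*}\binom{k}{m}x^m(1-x)^{k-m}$ with $m^*\ge2$, again with a weight $1/2$ if some node equals $c$. This $\ell$ has $\ell'(0)=\ell'(1)=0$ and is S-shaped (increasing, convex then concave, as in the sampling best response $\BR^k$). Hence $\ell(x)=x$ has exactly three transversal solutions: $0$, an interior point, and $1$. By $C^1$ convergence, $g$ has exactly three zeros for small $\eta$. One lies near each of $0$ and $1$, using $g(0)>0$ and $g'\approx-1$ there, and one lies near the interior root. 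So uniqueness fails for small $\eta$.
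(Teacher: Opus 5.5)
There is one genuine gap: the selection claim fails at $k=1$. You also leave open a step in sufficiency that is actually empty.

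\emph{The ``hard case'' does not occur.} Since $1/k\ge c$ and $\varphi$ is increasing with $\varphi(c)=1/2$, you get $c_1=\varphi(1/k)-1/k\ge 1/2-1/k\ge 0$ for every $k\ge2$. For $k=1$ the coefficient vector is just $(c_0,c_1)$ with $c_0>0>c_1$. Your $\varphi(w)/w$ detour would not have settled the case anyway: $w\varphi'-\varphi$ is nonincreasing on $[c,1]$, so you would still need its sign at $w=1/k$, and that does not follow from $\varphi(1/k)<1/k$. With this observation, the coefficients are $c_0>0$ followed by a concave sequence on $\{1,\dots,k\}$ that starts nonnegative and ends negative. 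That gives exactly one sign change, and Descartes' rule in $r=x/(1-x)$ completes sufficiency uniformly in $k$. This is a genuinely lighter route than the paper's. The paper proves concavity of the whole Bernstein polynomial $\ell_\eta$, which needs the extra inequality $\rho_2^\eta-2\rho_1^\eta+\rho_0^\eta\le0$ via evenness of $p'$, and it treats $k=1,2$ separately.

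\emph{The selection argument fails at $k=1$.} There $c<1=1/k$, so you are in the strict case, but $\ell(x)=1-(1-x)=x$. Hence $g_\eta\to0$ identically, $\ell'(0)=1$, and $\ell(x)>x$ is false, so the $C^1$ limit says nothing about where the zero goes. Selection at $k=1$ is a comparison of exponential rates. The unique zero is $\varphi(0)/(\varphi(0)+1-\varphi(1))$, with $\varphi(0)=1/(1+e^{t/\eta})$ and $1-\varphi(1)=1/(1+e^{s/\eta})$, and it tends to $1$ precisely because $s>t$. This explicit formula is how the paper handles $k=1$. For $k\ge2$ your limit argument is fine, though you should actually verify $\ell(x)>x$ on $(0,1)$ in the equality case; the paper does this explicitly.

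\emph{Necessity is correct by a different route.} You perturb the three transversal fixed points of the S-shaped zero-noise limit. The paper instead gets three zeros from the intermediate value theorem and caps the count at three by Descartes' rule applied to $c_z=g_\eta(z/k)$, a bound that holds for every $\eta$. To complete your version, justify the S-shape: $\ell'$ is $k$ times one Bernstein basis polynomial of degree $k-1$, or the average of two adjacent ones, so it is log-concave and unimodal. Then note that a convex-then-concave $\ell$ with $\ell'(0)=\ell'(1)=0$ forces $\ell'>1$ at the interior crossing, which gives transversality.
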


\begin{figure}[t]
    \centering
    \includegraphics[width=.42\textwidth]{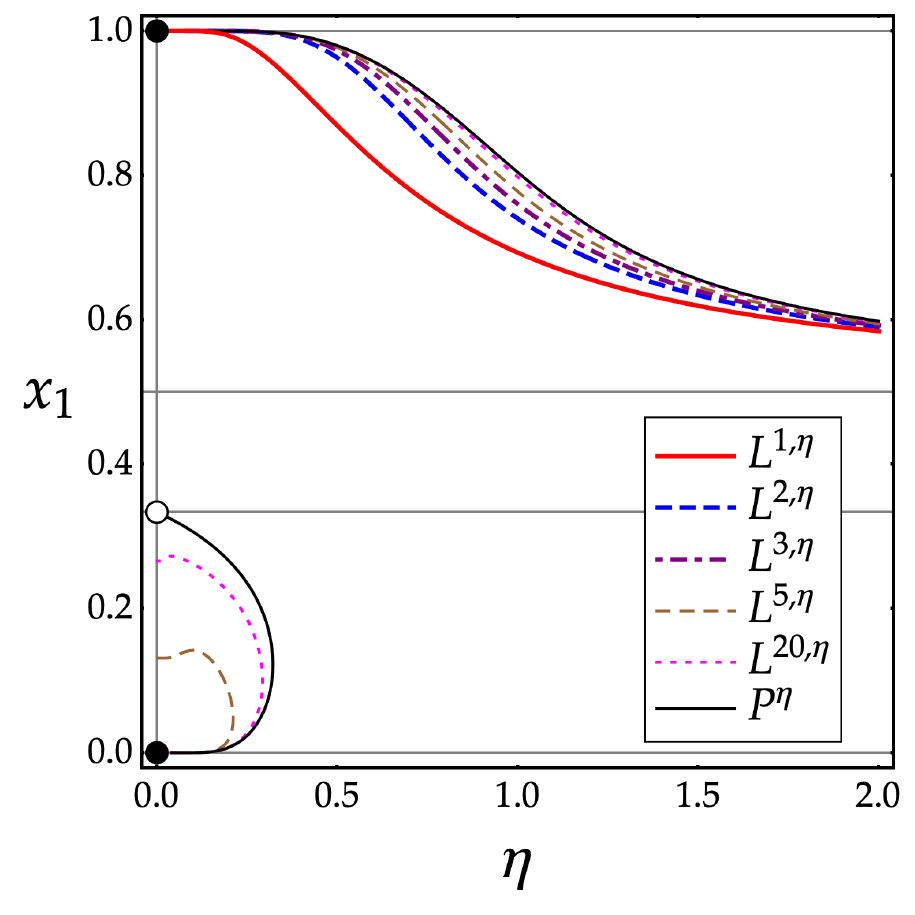}
    \caption{Sampling logit equilibria for $k \in \{1,2,3,5,20\}$ in the game $A = \begin{bsmallmatrix}
    2 & 0 \\ 0 & 1
    \end{bsmallmatrix}$.} 
    \label{fig:eta-cont}
\end{figure}

\Cref{fig:eta-cont} illustrates \cref{prop:global-uniqueness-in-s-t-coordination} for the case $(s,t)=(2,1)$. 
The uniqueness condition \eqref{eq:global-uniqueness-s-t} holds for $1 \le k \le 3$, whereas multiple SLE emerge for small $\eta$ when $k > 3$. 
The logit equilibrium curves are shown for reference. 
There are multiple logit equilibria for small $\eta$, and each converges to one of the Nash equilibria as $\eta$ approaches zero. 

The convergence to the risk-dominant equilibrium in \cref{prop:global-uniqueness-in-s-t-coordination} is a sampling counterpart of a result by \citet{Turocy-GEB2005}. 
In $2 \times 2$ coordination games with two strict equilibria, his Theorem 7 shows that the principal branch of quantal response functions converges to the risk-dominant equilibrium. 
The logit equilibrium curves in \cref{fig:eta-cont} illustrate this result, where the upper curve is the principal branch.\footnote{\citet{Leonardos-EC2023} provide a direct population-game application of this principal-branch selection result. In their full-information technology-adoption game, equivalent to \eqref{eq:A.2x2.coord} with $(s,t)=(1+\gamma,1-\gamma)$, temporarily increasing logit noise $\eta$ past the critical point for a saddle-node bifurcation (around $\eta \approx 0.321$ in \cref{fig:eta-cont}) and then lowering it selects the risk-dominant technology.} 

The threshold \eqref{eq:global-uniqueness-s-t} has an intuitive interpretation in terms of sampling. 
The condition places $t/(s + t)$ weakly below $1/k$, the first positive point on the support of empirical population states $\{0,1/k,2/k,\hdots,1\}$. 
Under condition \eqref{eq:global-uniqueness-s-t}, observing one action-$1$ player makes action $1$ at least as attractive as action $2$. 
This condition makes the sampled logit probabilities discretely concave on $\{0,1/k,2/k,\hdots,1\}$ for every $\eta>0$. 
Taking expectations preserves this concavity and prevents the equilibrium curve from folding back to create additional equilibria (cf. \cref{fig:choice-probs}). 
If the condition fails, by contrast, samples containing zero or one action-$1$ observation both favor action $2$ when noise is small. 
Both strict equilibria then generate nearby SLE, separated by an unstable interior SLE as \cref{fig:eta-cont} illustrates. 

The selection in \cref{prop:global-uniqueness-in-s-t-coordination} is tightly related to almost global convergence under sampling best response processes discussed in the literature. 
The strict condition $t/(s+t)<1/k$ implies that $e_1$ is $1/k$-dominant. 
In a finite-population model where revising agents best respond to $k$ sampled opponents, \citet{Sandholm-IJGT2001} shows that, for $k\ge2$, from any fixed positive initial share, the probability of convergence to the $1/k$-dominant strategy approaches one as the population grows. 
Also, because the degenerate sample size distribution concentrated on $k$ is ``$k$-good'' in the terminology of \citet{Oyama-Sandholm-Tercieux-TE2015}, their Theorem~1 implies that every trajectory of \eqref{eq:sBRD} starting from $x_1\in(0,1]$ converges to $e_1$ if condition \eqref{eq:global-uniqueness-s-t} holds strictly. 
For $k\in\{1,2\}$, the $1/k$-dominance condition is automatically satisfied, so their selection result describes the zero-noise limit of the unique SLE branch. 
\Cref{prop:global-uniqueness-in-s-t-coordination} extends this conclusion to the sampling logit dynamic: it precludes multiplicity for every $\eta>0$ and also covers the boundary $t/(s+t)=1/k$. 
At the boundary, the sampled best response is tied after exactly one action-$1$ observation, so the strict $1/k$-dominance result does not apply. 
The logit rule resolves the tie symmetrically and yields a unique prediction. In \cref{fig:eta-cont}, $k=3$ with $(s,t)=(2,1)$ illustrates this case as $1/(2 + 1) = 1/3$. 

The finite-population model of \citet{Kreindler-Young-GEB2013} is also worth mentioning in relation to \cref{prop:global-uniqueness-in-s-t-coordination}. 
For comparison, introduce the normalized coordinates 
\begin{align}
    \alpha = \frac{s}{t} - 1 > 0
    \quad \text{and} \quad 
    \beta = \frac{t}{\eta} > 0. 
    \label{eq:KY-parameter-correspondence}
\end{align}
With $\alpha$ and $\beta$, the sampling logit equilibrium problem in game \eqref{eq:A.2x2.coord} is equivalent to the problem with base payoff matrix $\begin{bsmallmatrix}1 + \alpha & 0 \\ 0 & 1 \end{bsmallmatrix}$ and the $1/\beta$-logit choice, which is precisely the problem considered in Kreindler and Young. 
Also, the condition \eqref{eq:global-uniqueness-s-t} reads $\alpha \ge k - 2$. 

Kreindler and Young consider a finite-population stochastic evolutionary model in which they call selection \emph{fast} when, starting from universal use of action~$2$, the expected number of revisions per capita required for action~$1$ to reach a share of $1/2$ (i.e., the hitting time) is bounded uniformly in the finite population size. 
Under the parameter correspondence \eqref{eq:KY-parameter-correspondence}, the mean-field dynamic for the finite-population model with partial information in their Section 6, in which agents observe a finite sample of opponents' actions, is nothing but the sampling logit dynamic \eqref{eq:sLD} for game \eqref{eq:A.2x2.coord}. 
For every fixed $\beta = t/\eta > 0$, their Theorem~3 proves fast selection for $k\geq3$ whenever $\alpha>\min\{h^*(\beta),k-2\}$, where $h^*(\beta) \ge 0$ is a simple analytical formula in $\beta$. 
We can see that the $\alpha > k - 2$ part corresponds to our condition \eqref{eq:global-uniqueness-s-t}. 
The other condition $\alpha > h^*(\beta)$ states that, for the SLE to be unique for a fixed pair of $\beta = t/\eta > 0$ and $k < \alpha + 2$, it is sufficient to require a large payoff gain $\alpha = (s/t) - 1$ for the innovative technology~1.

The mechanism in Kreindler and Young's results parallels the variance premium interpretation in \cref{sec:origins-of-premiums}. 
At low adoption rates, samples that overrepresent adopters raise the logit response more than samples that underrepresent them lower it. 
In our large-$k$ approximation for game \eqref{eq:A.2x2.coord}, action $1$ is suboptimal below the indifference point $t/(s + t)$ and therefore receives a positive relative variance premium by \cref{eq:2x2-delta-prop}. 
Their exact two-action analysis connects the mean field to finite-population waiting times, while our premium decomposition extends the intuition to general population games.\footnote{As stressed in \cite{Oyama-Sandholm-Tercieux-TE2015}, selection under mean-field sampling dynamics such as \eqref{eq:sBRD} or \eqref{eq:sLD} is ``fast'' in the sense that convergence is described by a deterministic ordinary differential equation, rather than requiring the long-run evolution of a stochastic process. This notion of ``fast'' selection, however, differs from that used in \cite{Kreindler-Young-GEB2013}.} 

\section{Concluding remarks}

This study introduced a static SLE framework and a virtual-payoff representation of finite-sampling distortions. 
For linear games, the critical joint limit $\sqrt{k}\eta\to\tau\in(0,\infty)$ characterizes local convergence and stability near a completely mixed Nash equilibrium satisfying payoff identification. 
Exact finite-sample results add sharp uniqueness, stability, and selection predictions. 

The scope of our large-$k$ results should be clarified as follows. 
For $0 < \eta \le 1$, \cref{thm:approximation}\cref{enum:error-bound} makes the large-$k$ approximation relatively accurate in the joint regime $k\eta^2\to\infty$, in which the sampling distortion, while systematic, is second order relative to logit noise. 
\Cref{thm:tau-lim.joint-limit} covers the critical scale $k\eta^2\to\tau^2\in(0,\infty)$ without relying on the approximation in \cref{thm:approximation}, but its equilibrium conclusion is local to a completely mixed Nash equilibrium and relies on linear payoffs. 
Neither result should therefore be extrapolated to the global selection effects for small $k$ discussed in \cref{sec:examples}.

One of the key features of \cite{Oyama-Sandholm-Tercieux-TE2015}, the immediate basis for this study, is the heterogeneity of sample sizes whereby the possible number of observations $k \ge 1$ is also a random variable. 
This possibility is abstracted away in this study because our emphasis is on the connection between the number of observations $k$ and the precision of decision making represented by $\eta$, as well as the resulting aggregate biases under positive $\eta$. 
Since this heterogeneity is key to their equilibrium selection results, understanding the role of randomness in $k$ in our context is an interesting extension. 
The examples in \cref{sec:examples} suggest that similar conclusions about equilibrium selection may be drawn because small-sample effects are key to Oyama et al.'s selection result. 
It is also important to explore whether other predictions based on the sampling best response dynamic \citep[e.g.,][]{Sawa-Wu-GEB2023,Arigapudi-etal-AEJMicro2024,Arigapudi-Heller-GEB2025} are robust to logit noise.

Finally, several apparent limitations of the present study should be noted. 
First, the general premium decomposition relies on the large-sample approximation, whereas the exact joint-limit analysis is restricted to linear games and local equilibrium behavior. 
Second, both the sample size $k$ and the noise level $\eta$ are treated as exogenous. 
A natural extension as a model of learning would be to endogenize these parameters by incorporating costs of information acquisition or cognitive effort. 
Finally, while our framework provides sharp predictions in two-action games, its analytical tractability in general $n$-action games, possibly in \emph{potential games} \citep{Sandholm-JET2001,Sandholm-JET2009}, \emph{stable games} \citep{Hofbauer-Sandholm-JET2009}, or \emph{supermodular games} \citep[Sec. 3.4]{Sandholm-Book2010}, warrants further investigation. 
The combination of logit choice with \emph{best experienced payoff dynamics} \citep{Osborne-Rubinstein-AER1998,Sethi-GEB2000,Sandholm-etal-JET2020} is also of interest.

\clearpage 
{\small \singlespacing 
\bibliographystyle{aer}
\bibliography{refs}
}

\clearpage 
\appendix

\crefalias{section}{appendix}

\renewcommand{\theproposition}{\Alph{proposition}}
\renewcommand{\theclaim}{\Alph{claim}}

\setstretch{1.2}
\section{Proofs}
\label{app:derivations}

\UlHeading{Preliminaries} 
We use Landau notation: 
$f(x)=O(g(x))$ as $x\to x_0$ means that there exist constants $C>0$ and $\delta>0$ such that 
$|f(x)|\le C|g(x)|$ whenever $0<|x-x_0|<\delta$. 
Also, we write $f(x)=o(g(x))$ as $x\to x_0$ if $f(x)/g(x)\to 0$ as $x\to x_0$.

\begin{proof}[Proof of \cref{thm:approximation}]
We first derive the approximation formula. 
The $\ell$th-order delta method approximates $L(x) = \BbE[P(w)]$ by plugging the $\ell$th-order Taylor approximation of $P(w)$ into the expectation. 
Different values of $\ell$ yield different formulas. 
Throughout, we assume the second-order method. 
The second-order approximation of $P_i(w)$ is
\begin{align}
    P_i^\star(w) 
    \Is   
    P_i(x) + \la P_i' (x), w - x \ra + \frac{1}{2} (w - x)^\top  P_i'' (x) (w - x), 
    \label{eq:P-expansion}
\end{align}
where $P_i'$ and $P_i''$ denote the gradient and the Hessian matrix of $P_i$, respectively. 

Then, the second-order delta method approximates $L_i(x) = \BbE[P_i(w)]$ by  
\begin{align}
    \TlL_i (x) 
    \Is 
    \BbE\left[P_i^\star(w)\right] 
    = 
    P_i(x) + \frac{1}{2k } \la P_i''(x) , \Sigma(x) \ra. 
    \label{eq:L_i}  
\end{align}
Here, for square matrices $A = [a_{kl}]$ and $B = [b_{kl}]$, $\la A, B \ra \Is \operatorname{trace}[A^\top B] = \sum_{k,l} a_{kl} b_{kl}$. 
This approximation is impossible for the sampling best response correspondence $\BR^k(x) = \BbE[\BR(w)]$ because $\BR(\cdot)$ is generally not differentiable. 

The approximation formula and part \ref{enum:validity} require only that $F$ admits a $C^2$ extension to an open neighborhood of $X$. 
Part \ref{enum:error-bound} uses the stated $C^4$ extension. 
Direct computation yields the approximation formulae \eqref{eq:TlL} as follows. 
Let $\vartheta \Is \eta^{-1} > 0$. 
For brevity, we write $p_i = P_i(x)$ and suppress the dependence of functions on $x$ from the notation below. 
The gradient of $P_i$ is given by  
\begin{align}
    P_i'
    & 
    = \vartheta p_i \cdot 
    \bigl(F_i' - F'^\top P\bigr) 
    = \vartheta p_i \WH{F_i'}
    = \vartheta p_i R_i
\end{align}
where we set $R_i \Is \WH{F_i'}$. 
The Hessian matrix of $P_i$ is 
\begin{align}
    P_i'' 
    = 
    \vartheta
    R_i P_i^{' \top} 
    + 
    \vartheta
    p_i R_i' 
    = 
    \vartheta^2 p_i R_i R_i^\top 
    + 
    \vartheta
    p_i R_i'. 
    \label{eq:app.P''}
\end{align}
We recall that $\OL{A}$ and $\WH{A}_i$ for a collection of matrices $\{A_i\}_{i\in S}$ are defined as $\OL{A} \Is \sum_{l\in S} p_l A_l$ and $\WH{A}_i = A_i - \OL{A}$. 
Then, the Jacobian matrix of $R_i$ is computed as 
\begin{align}
    R_i' 
    & 
    = \bigl( F_i' - \bigl(\sum_l p_l F_l'\bigr) \bigr)' 
    \\
    & = 
    - 
    \sum_l 
    \bigl(F_l'(P_l')^\top + p_l F_l''\bigr)
    + 
    F_i'' 
    \\
    & = 
    - 
    \vartheta
    \sum_l p_l F_l' R_l^\top
    + 
    F_i'' 
    - 
    \sum_l p_l  F_l''
    \\
    & =
    - 
    \vartheta
    \sum_l p_l (F_l' - \OL{F'}) R_l^\top 
    - 
    \vartheta
    \sum_l p_l \OL{F'} R_l^\top 
    +
    \WH{F_i''}
    \label{eq:app.Ri'-comp}
    \\
    & =
    - 
    \vartheta
    \sum_l p_l R_l R_l^\top 
    +
    \WH{F_i''}, 
\end{align}
where $\sum_l p_l \OL{F'} R_l^\top =  \OL{F'} \sum_l p_l (F'_l - \OL{F'})^\top = O$ in \cref{eq:app.Ri'-comp}. 
Together with \cref{eq:app.P''}, 
\begin{align}
    P_i''
    & = 
    \vartheta^2 p_i 
    \Bigl(R_i R_i^\top 
    - 
    \sum_l p_l R_l R_l^\top 
    \Bigr) 
    + 
    \vartheta p_i 
    \WH{F_i''}.
    \label{eq:P''} 
\end{align}
We now compute $\la P_i'', \Sigma \ra$ in \cref{eq:L_i}. 
In doing so, we note that, from the identity $\la b b^\top,  A  \ra = \sum_{i,j} b_i b_j a_{ij} = b^\top A b$ for any $b = (b_i)\in\BbR^n$ and $A = [a_{ij}]\in \BbR^{n\times n}$, 
\begin{align}
    & 
    \Bigl\la 
    \Bigl(R_i R_i^\top 
    - 
    \sum_l p_l  R_l R_l^\top 
    \Bigr) 
    , 
    \Sigma 
    \Bigr\ra
    = 
    R_i^\top \Sigma R_i  
    - 
    \sum_l p_l  R_l^\top \Sigma R_l 
    = 
    \WH{ R_i^\top \Sigma R_i }, 
    \\ 
    & 
    \bigl\la 
    \WH{F_i''} 
    , 
    \Sigma 
    \bigr\ra
    = 
    \bigl\la 
    F_i''
    , 
    \Sigma 
    \bigr\ra
    -
    \bigl\la 
    \sum_l 
    p_l  
    F_l''
    , 
    \Sigma 
    \bigr\ra
    = 
    \bigl\la 
    F_i''
    , 
    \Sigma 
    \bigr\ra
    -
    \sum_l 
    p_l  
    \bigl\la 
    F_l''
    , 
    \Sigma 
    \bigr\ra
    = 
    \WH{\la F_i'',\Sigma \ra}. 
\end{align}
Plugging these into \cref{eq:P''}, we obtain 
\begin{align}
    \la P_i'', \Sigma \ra
    & =  
    \vartheta^2 p_i 
    \, 
    \WH{R_i^\top \Sigma R_i}
    + 
    \vartheta p_i 
    \, 
    \WH{\la F_i'',\Sigma \ra}. 
\end{align}
Then, from \cref{eq:L_i} and $\vartheta = \eta^{-1}$, we obtain 
\begin{align}
    \TlL_i(x) 
    = 
    \left(
        1 
        + 
        \frac{1}{2k\eta^2}
        \WH{R_i^\top \Sigma R_i}
        + 
        \frac{1}{2k\eta}
        \WH{\la F_i'',\Sigma \ra}
    \right)
    P_i(x), 
\end{align}
which implies \cref{eq:TlL} as we define $v_i \Is \frac{1}{2k\eta^2} R_i^\top \Sigma R_i$ and $q_i \Is \frac{1}{2k\eta} \la F_i'', \Sigma \ra$. 

\UlHeading{Part \ref{enum:validity}}   
Since $X$ is compact and $F$ admits a $C^2$ extension to an open neighborhood of $X$, there is $C_0 > 0$, depending only on uniform bounds for $F'$ and $F''$ over $X$ and independent of $k$ and $\eta$, satisfying 
\begin{align}
    \sup_{x\in X} \max_{i\in S} 
    \left| \WH{v}_i(x) + \WH{q}_i(x) 
    \right| 
    \le \frac{C_0}{k} \left(\eta^{-2} + \eta^{-1}\right). 
\end{align}
Then, for each fixed $\eta > 0$, we can choose $k^*(\eta) < \infty$ such that 
\begin{align}
    & \frac{C_0}{k} \left(\eta^{-2} + \eta^{-1}\right) < 1 
    & \forall k \ge k^*(\eta)
\end{align}
For $k \ge k^*(\eta)$, $1 + \WH{v}_i(x) + {\WH{q}}_i(x) > 0$ and hence $\TlL_i(x) = (1 + \WH{v}_i(x) + {\WH{q}}_i(x)) P_i(x) > 0$ for all $i\in S$ and $x \in X$. 
From the identity $\sum_{i\in S} \TlL_i(x) = \sum_{i\in S} (1 + \WH{v}_i(x) + {\WH{q}}_i(x)) P_i(x) = \sum_{i\in S} P_i(x)= 1$, positivity of $\TlL$ implies $\TlL_i(x) < 1$ for all $i\in S$. 
Thus, $\TlL(x) \in X$ for all $k > k^*(\eta)$. That is, for any fixed $\eta > 0$, $\TlL$ is a valid choice rule for sufficiently large $k$. 

\UlHeading{Part \ref{enum:error-bound}} 
Below, $\|\cdot\|$ denotes any fixed norm on the relevant finite-dimensional spaces, and multilinear maps are equipped with the corresponding induced operator norms. 
The constants below may depend on this choice of norm, but only up to constant factors, so that the asymptotic orders in $k$ and $\eta$ are unaffected. 
We now show the following claim, which is a slightly more detailed version of part \ref{enum:error-bound}. 
\begin{claim}[Uniform error bounds]
\label{prop:error-estimate}
Assume that $F$ admits a $C^4$ extension to an open neighborhood of $X$. 
Then, there exists a constant $C>0$ such that
\begin{align}
& 
\Delta \Is 
\sup_{x\in X}
\left\|L(x)-\TlL(x)\right\|
\le
\frac{C}{k^2}\sum_{m=1}^4\eta^{-m}
\le
\begin{cases}
    4Ck^{-2}\eta^{-4} & \text{if } 0 < \eta \le 1 \\ 
    4Ck^{-2}\eta^{-1} & \text{if } \eta > 1
\end{cases}
& 
\forall k\ge1. 
\label{eq:raw-error-estimate}
\end{align}
Hence, the approximation error is negligible relative to $k^{-1}\eta^{-2}$ in the regime
$k\eta^2\to\infty$ for $\eta\le1$, and relative to $k^{-1}\eta^{-1}$ 
when $k \to \infty$ 
for $\eta>1$. 
\end{claim}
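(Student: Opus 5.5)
The plan is a Taylor expansion of $P_i$ around $x$ to fourth order with integral remainder, followed by moment bounds for the multinomial empirical state $w=\frac1k z$. First write
\begin{align}
    P_i(w)=P_i^\star(w)+\tfrac{1}{6}P_i'''(x)[\delta,\delta,\delta]+R_{4,i}(w),\qquad \delta\Is w-x,
\end{align}
where $R_{4,i}(w)=\int_0^1\frac{(1-s)^3}{6}P_i^{(4)}(x+s\delta)[\delta,\delta,\delta,\delta]\,ds$. The segment $x+s\delta$ stays in $X$ by convexity, so only derivatives of $P_i$ on $X$ are needed, and the $C^4$ extension of $F$ guarantees they are continuous and bounded there. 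Taking expectations, $L_i(x)-\TlL_i(x)=\frac16\la P_i'''(x),\BbE[\delta^{\otimes3}]\ra+\BbE[R_{4,i}(w)]$.

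The next step is to bound the derivatives of $P_i$ uniformly in $\eta$. Write $P_i(x)=\Lambda_i^\eta(F(x))$ and $\Lambda^\eta(u)=\Lambda^1(\vartheta u)$ with $\vartheta=\eta^{-1}$. The $m$th derivative of $\Lambda^\eta$ is then $\vartheta^m$ times a derivative of $\Lambda^1$. Since $\Lambda^1$ has all derivatives globally bounded (they are polynomials in the probabilities, which lie in $[0,1]$), $\|D^m\Lambda^\eta\|\le c_m\vartheta^m$ on all of $\BbR^n$. The Faà di Bruno formula for $P=\Lambda^\eta\circ F$ then gives
\begin{align}
    \sup_{x\in X}\|P_i^{(m)}(x)\|\le C_m\sum_{j=1}^m\eta^{-j},\qquad m=3,4,
\end{align}
with $C_m$ depending only on bounds for $F',\dots,F^{(m)}$ on $X$.

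Now the multinomial moments. Since $z$ is a sum of $k$ i.i.d.\ categorical vectors, its cumulants are $k$ times single-draw cumulants. Hence the third central moment satisfies $\BbE[\delta^{\otimes3}]=k^{-2}\kappa_3(x)$ with $\kappa_3$ bounded on $X$, and $\BbE\|\delta\|^4\le C k^{-2}$, because the fourth central moment equals $3(\text{covariance})^2/k^2$ plus a fourth-cumulant term of order $k^{-3}$. This exact $k^{-2}$ behaviour of the third moment is the key point: the cubic term is not $O(k^{-3/2})$ as a naive $\BbE\|\delta\|^3$ bound would suggest. Combining these gives $|\BbE[R_{4,i}]|\le\frac1{24}\sup\|P_i^{(4)}\|\,\BbE\|\delta\|^4\le Ck^{-2}\sum_{m=1}^4\eta^{-m}$ and a cubic term bounded by $Ck^{-2}\sum_{m=1}^3\eta^{-m}$. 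Summing over $i$ and using norm equivalence yields \eqref{eq:raw-error-estimate}. The case split follows from $\sum_{m=1}^4\eta^{-m}\le4\eta^{-4}$ for $\eta\le1$ and $\le4\eta^{-1}$ for $\eta>1$.

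The closing sentence of the claim is then immediate. Dividing the bound by $k^{-1}\eta^{-2}$ gives $4C/(k\eta^2)\to0$, and dividing by $k^{-1}\eta^{-1}$ gives $4C/k\to0$.

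The main obstacle is the bookkeeping in the second paragraph: the derivative bounds must be uniform in $\eta$ and show exactly the polynomial dependence $\sum_{j\le m}\eta^{-j}$, with no hidden constants that depend on $\eta$. I would handle this by proving the global bound $\|D^m\Lambda^1\|\le c_m$ once, by induction using $\partial_j\Lambda^1_i=\Lambda^1_i(\delta_{ij}-\Lambda^1_j)$, and then applying the chain rule term by term.
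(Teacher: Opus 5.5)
Your proposal is correct and follows essentially the same route as the paper. Both use a third-order Taylor expansion of $P$ along the segment from $x$ to $w$, which stays in $X$. Both use the multinomial cumulant bounds $\BbE[(w-x)^{\otimes 3}]=O(k^{-2})$ and $\BbE\|w-x\|^4=O(k^{-2})$, uniformly on $X$. Both bound $\sup_X\|D^mP\|$ by $C'\sum_{r\le m}\eta^{-r}$, using the $\eta^{-m}$ scaling of $D^m\Lambda^\eta$. Your explicit integral remainder and Fa\`a di Bruno bookkeeping only spell out steps the paper states briefly.
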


\noindent\textbf{Proof.}\xspace 
Let $\xi\Is w - x$. By construction, $\mathbb E[\xi]=0$ and $\mathbb E[\xi\xi^\top]=\frac{1}{k}\Sigma(x)$.
Classical formulas for multinomial cumulants through order four \citep{Wishart-Biometrika1949} imply that every third centered joint moment of $kw$ is $O(k)$ and every fourth centered joint moment is $O(k^2)$. 
For the latter, fourth moments are sums of fourth cumulants and products of second cumulants. 
Since the number of actions is fixed and $\xi=(kw-kx)/k$, these bounds hold uniformly over $x\in X$ and yield
\begin{align}
\sup_{x\in X}\bigl\|\mathbb E[\xi^{\otimes 3}]\bigr\|=O(k^{-2}) 
\quad\text{and}\quad 
\sup_{x\in X}\mathbb E[\|\xi\|^4]=O(k^{-2}).
\label{eq:moment-bounds}
\end{align}
Next, we recall $P=\Lambda^\eta\circ F$, where $\Lambda^\eta$ is the $\eta$-logit map. Since $F$ has a $C^4$ extension to a neighborhood of $X$ and the $m$th derivatives of $\Lambda^\eta$ are of order $\eta^{-m}$, there is a constant $C'>0$ satisfying 
\begin{align}
& \sup_{x\in X}\|D^mP(x)\|
\le
C' \sum_{r=1}^m \eta^{-r},
& m=3,4. 
\label{eq:derivative-bounds}
\end{align}
Because $x,w\in X$ and $X$ is convex, the line segment joining $x$ and $w=x+\xi$ is contained in $X$. Applying Taylor's theorem of order three to each coordinate of $P(x+\xi)$ around $x$ and taking expectations, we obtain
\begin{align}
L(x)-\TlL(x)
=
\frac{1}{6} \la D^3 P(x), \mathbb E[\xi^{\otimes 3}] \ra 
+
R(x),
\end{align}
where the remainder $R$ satisfies
\begin{align}
\|R(x)\|
\le
C'' \sup_{y\in X}\|D^4 P(y)\|\,\mathbb E[\|\xi\|^4]
\end{align}
for some constant $C''>0$.
Combining this with \cref{eq:moment-bounds,eq:derivative-bounds} gives the first inequality of \cref{eq:raw-error-estimate} with some constant $C > 0$. 
We have $\sum_{m=1}^4\eta^{-m}\le 4\eta^{-4}$ 
if $0 < \eta\le 1$ and $\sum_{m=1}^4\eta^{-m}\le 4\eta^{-1}$ if $\eta>1$, showing the second inequality of \cref{eq:raw-error-estimate}. \hfill 

\Cref{fig:errors} illustrates how the accuracy of the approximation deteriorates when $k$ is small, especially for relatively small $\eta$. 
\end{proof}
\begin{figure}[tb]
    \centering 
    \includegraphics[width=.325\hsize]{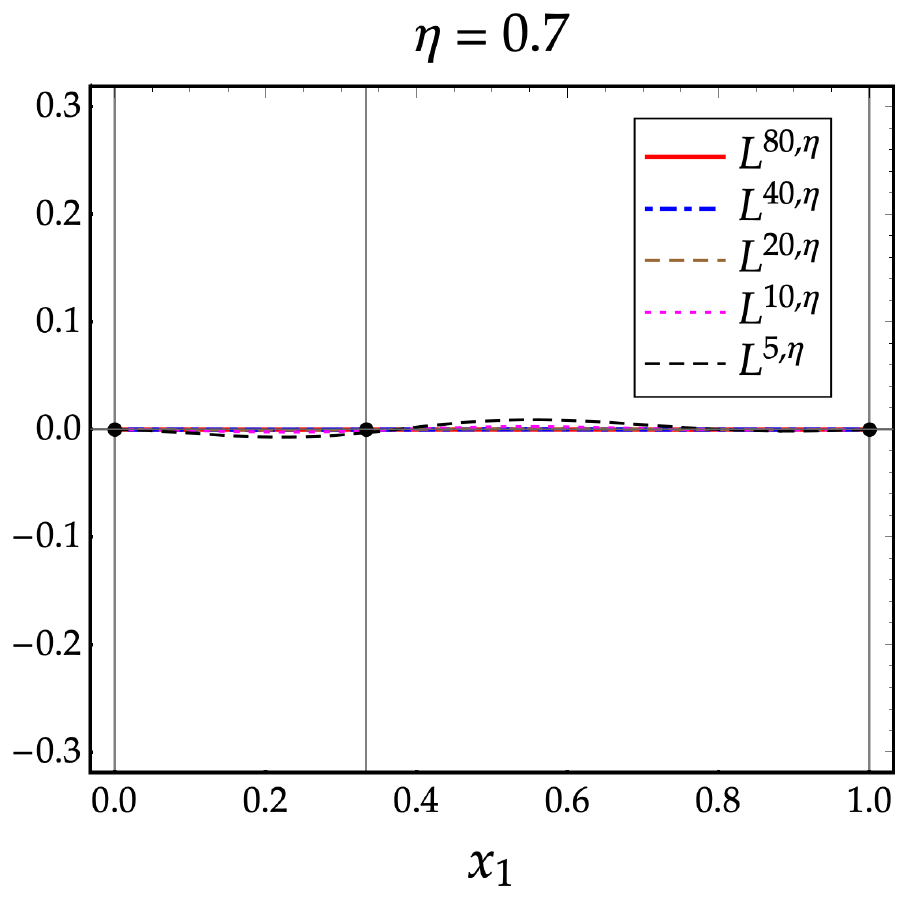}
    \hfill
    \includegraphics[width=.325\hsize]{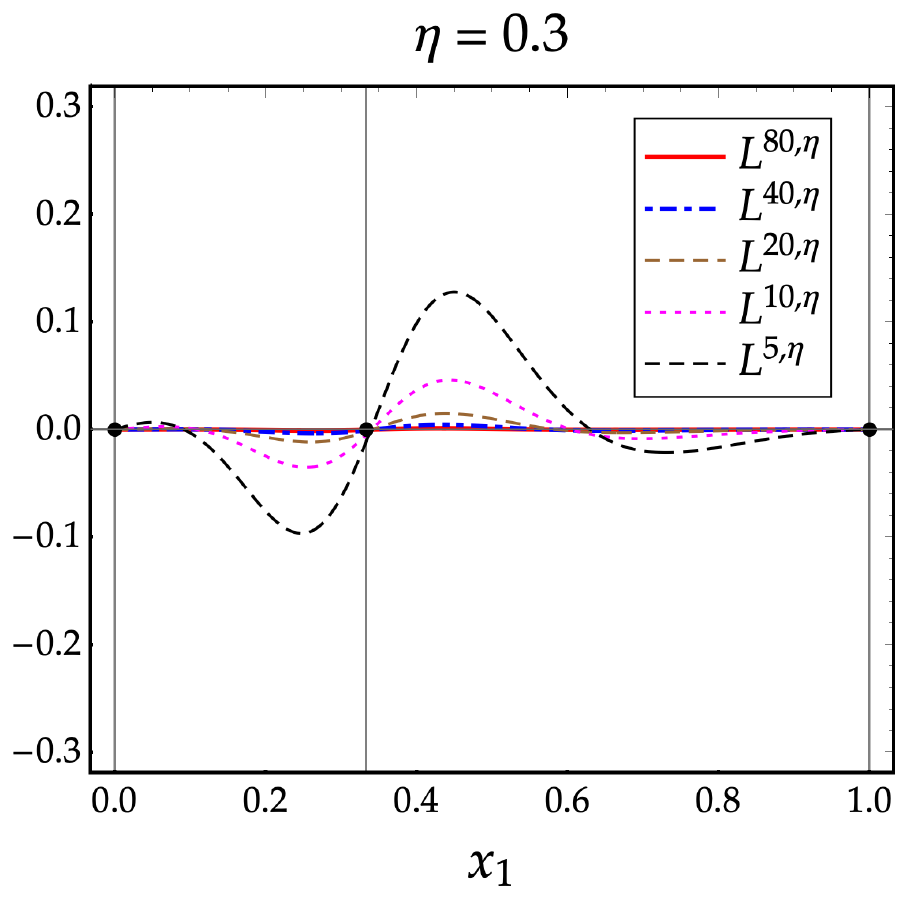}
    \hfill
    \includegraphics[width=.325\hsize]{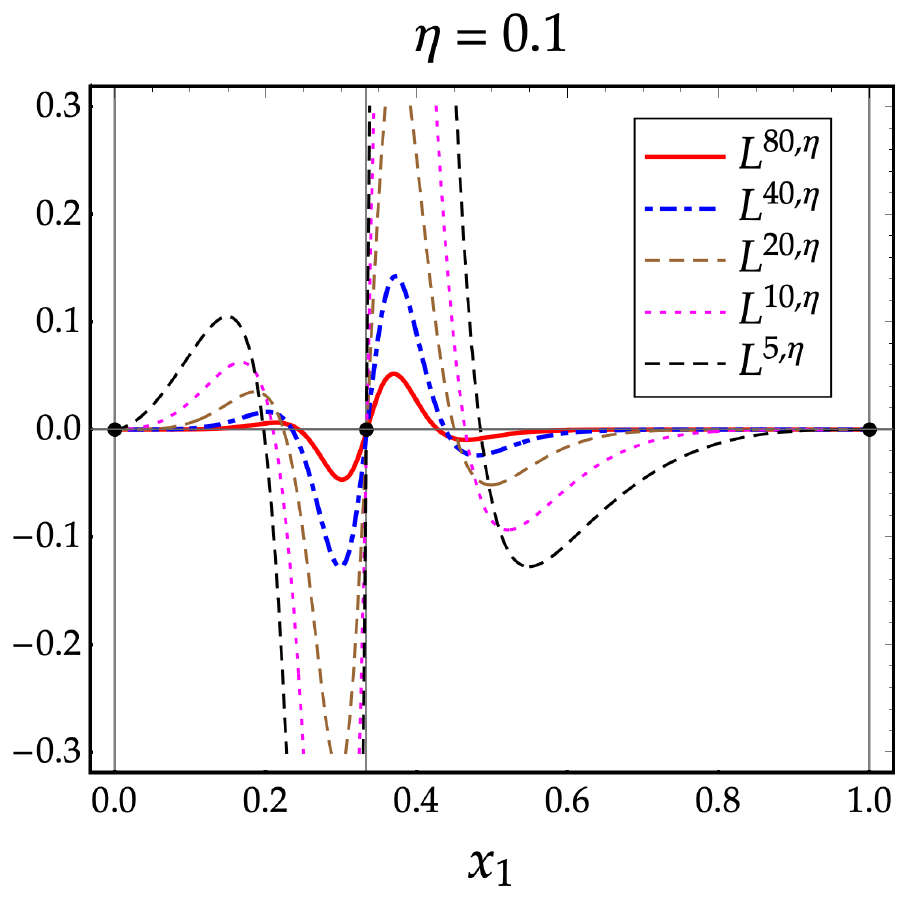}
    \caption{Approximation errors for the choice probability of action $1$ ($L_1 - \TlL_1$) under different $\eta$ and $k$. 
    The coordination game \eqref{eq:A.2x2.coord} with $(s,t) = (2,1)$ as considered in \cref{fig:choice-probs,fig:eta-cont} is assumed. 
    The black markers on the horizontal axis show the Nash equilibria.} 
    \label{fig:errors}
\end{figure}

\begin{proof}[Proof of \Cref{obs:TlL-interpretation}]
Set $Z(x) \Is \sum_{j \in S} \exp(\eta^{-1} F_j(x))$ and $\delta \Is \WH{v} + \WH{q}$. 
Then, 
\begin{align}
    \sum_{j \in S} (1 + \delta_j(x)) \exp \left(\eta^{-1} F_j(x)\right)
    & = 
    Z(x)
    \sum_{j \in S} (1 + \delta_j(x)) \frac{\exp \left(\eta^{-1} F_j(x)\right)}{Z(x)}
    \\ 
    & = 
    Z(x) 
    \sum_{j \in S} (1 + \delta_j(x)) P_j(x)
    = Z(x)
\end{align}
since $\sum_{j \in S} \delta_j(x) P_j(x) = 0$. 
Define $\TlF_i(x) \Is F_i(x) + \eta \log (1 + \delta_i(x))$ assuming large $k$ such that $\min_{i\in S} \min_{x\in X} \{1 + \delta_i(x)\} > 0$. 
Then, 
\begin{align}
    \frac
        {\exp(\eta^{-1} \TlF_i(x))}
        {\sum_{j \in S}\exp(\eta^{-1} \TlF_j(x)) }
    & 
    = 
    \frac
        {(1 + \delta_i(x)) \exp\left(\eta^{-1} F_i(x)\right)}
        {\sum_{j \in S} (1 + \delta_j(x)) \exp\left(\eta^{-1} F_j(x)\right) }\\
    & 
    = 
    \frac
        {(1 + \delta_i(x)) \exp\left(\eta^{-1} F_i(x)\right)}
        {Z(x)} 
    \\
    & 
    = 
    (1 + \delta_i(x)) P_i(x) 
    = \TlL_i(x). 
\end{align}
Thus, the condition $x = \TlL(x)$ is nothing but the $\eta$-logit equilibrium condition for the virtual payoff $\TlF$, showing part \cref{enum:virtual-game-1}. 
A direct application of known results for logit choice \citep[e.g.,][Appendix]{Hofbauer-Sandholm-JET2007} yields part \cref{enum:virtual-game-2}. 
\end{proof}

\begin{proof}[Proof of \cref{thm:equilibrium-distortion-general}]
For fixed $\eta$, let
$p_*\Is\min_{i\in S,\,w\in X}P_i(w)>0$.
Then, every $x\in\Fix(L)$ satisfies
$x_i=L_i(x)=\BbE[P_i(w)]\ge p_*$. Moreover,
$\TlL=P+O(k^{-1})$ uniformly on $X$, so
$\TlL_i(x)\ge p_*/2$ for every $i$ and $x$ when $k$ is sufficiently
large. Hence, all fixed points and the neighborhoods used below lie in the relative interior of $X$.

\UlHeading{Part \cref{enum:equilibrium-bound-1}} 
For a fixed norm, define the fixed-point error modulus
\begin{align}
    \Tlomega(t)
    \Is
    \max\left\{
    d\bigl(x,\Fix(\TlL)\bigr):
    x\in X,\ 
    \|x-\TlL(x)\|\le t
    \right\}.
    \label{eq:residual-modulus}
\end{align}
For any $x\in\Fix(L)$, $\|x-\TlL(x)\| =\|L(x)-\TlL(x)\| \le\Delta$.
The definition of $\Tlomega$ therefore gives
\cref{eq:one-sided-equilibrium-bound}. 
More importantly, $\lim_{t \Toz}\Tlomega(t) = 0$. 
If $\Tlomega(t)$ did not converge to zero as $t \Toz$, there would be an $a>0$, a sequence $t_j\Toz$, and corresponding $y^j\in X$ such that
\begin{align}
    \|y^j-\TlL(y^j)\|\le t_j
    \quad\text{and}\quad
    d\bigl(y^j,\Fix(\TlL)\bigr)\ge a.
\end{align}
Along a convergent subsequence, $y^j\to y\in X$.
From continuity, $y=\TlL(y)$ and thus $y \in \Fix(\TlL)$, which is a contradiction because 
$0 = d(y,\Fix(\TlL))\ge a > 0$.

\UlHeading{Part \cref{enum:equilibrium-bound-2}}  
Define the residual map $R(x)\Is x-\TlL(x)$ for the approximated fixed-point problem. 
Let $T\Is\{h\in\BbR^n:\sum_{i\in S}h_i=0\}$ be the tangent space of the affine hull of $X$.
Also, let 
\begin{align}
    R_T'(x) 
    \Is
    I-\TlL_T'(x), 
\end{align}
where $R_T'$ and $\TlL_T'$ denote the derivatives restricted to $T$. 
By regularity of every $\Tlx^r$, every $R_T'(\Tlx^r)$ is nonsingular, and 
\begin{align}
    \gamma
    \Is
    \min_{r\in\{1,2,\hdots, m\}}
    \ 
    \inf_{h\in T, \|h\|=1}
    \|R_T'(\Tlx^r) h\|
    >0. 
    \label{eq:equilibrium-conditioning}
\end{align}
Around each $\Tlx^r$, we have $R(\Tlx^r) = \Vt0$ by definition, and up to the first order, we have 
\begin{align}
    & 
    R(\Tlx^r+h)
    =
    R(\Tlx^r) + R_T'(\Tlx^r) h + o(\|h\|) 
    =
    R_T'(\Tlx^r) h + o(\|h\|)
    & 
    \forall h \in T.
    \label{eq:residual-linearization}
\end{align}
We can therefore choose (relative) open balls $\{B_r\}$ in $X$ that are pairwise disjoint, each $B_r$ centered at $\Tlx^r$, with closures in the relative interior of $X$, satisfying 
\begin{align}
    & \|R(x)\|\ge\frac{\gamma}{2}\|x-\Tlx^r\|
    & \forall x\in B_r. 
    \label{eq:local-residual-bound}
\end{align}

For completeness, we briefly recall the degree-theoretic facts used below. 
See, e.g., \citet[Chap.~1]{Deimling-Book1985} for a concise reference. 
After fixing coordinates on $T$, the \emph{Brouwer degree}
$\deg(f,U,\Vt0)$ is defined for any continuous map $f: \operatorname{cl}(U) \to T$ defined on a bounded open set $U\subset T$ such that $\Vt0\notin f(h)$ for any $h \in \partial U$, where $\operatorname{cl}(U)$ denotes the closure and $\partial U$ denotes the boundary.
If $f$ is continuously differentiable and has a unique zero $x^*$ in $U$ with nonsingular derivative, then
$\deg(f,U,\Vt0)=\sgn\det f'(x^*)$, which is the \emph{local Brouwer degree}, or index, of $x^*$. 
The degree is invariant under any continuous homotopy
that remains nonzero on $\partial U$ and a nonzero degree implies the existence of a zero in $U$.

In our problem, the local Brouwer degree for $R$ and $B_r$ satisfies
\begin{align}
    \deg(R,B_r,\Vt0) = \sgn\det R_T'(\Tlx^r) \ne0
\end{align}
for each $r\in\{1,2,\hdots, m\}$ because $R_T'(\Tlx^r)$ is nonsingular by assumption. 
Let
\begin{align}
    \overline\Delta\Is
    \min_{r\in\{1,2,\hdots, m\}}\min_{x\in\partial B_r}\|R(x)\|>0
\end{align}
where $\partial B_r$ denotes the boundary of $B_r$. 
If $\sup_{x \in X} \| \TlL(x) - L(x) \| = \Delta<\overline\Delta$, the homotopy
\begin{align}
    & R_s(x)\Is R(x)+s\left(\TlL(x)-L(x)\right) 
    \quad \text{for} \quad 
    s\in[0,1],
\end{align}
does not vanish on any boundary $\partial B_r$ because
$\|R_s(x)\|\ge \overline\Delta-\Delta>0$. 
By homotopy invariance of local Brouwer degree, this implies  
\begin{align}
    \deg(R_1,B_r,\Vt0)=\deg(R,B_r,\Vt0)\ne0 
\end{align}
by setting $s = 1$. 
Because $R_1(x) = x - L(x)$, $\deg(R_1,B_r,\Vt0)\ne 0$ shows that $L$ has a fixed point in every $B_r$. 
Choose one such point $x^r$ in each $B_r$. 
Then, these $m$ points are distinct because the balls are disjoint. 
Moreover, for every $r$, from \cref{eq:local-residual-bound}, 
\begin{align}
    \frac{\gamma}{2}\|x^r-\Tlx^r\|
    &\le \|R(x^r)\|
    =\|x^r-\TlL(x^r)\|
    =\|L(x^r)-\TlL(x^r)\|
    \le\Delta. 
\end{align}
This proves the first bound in part~\ref{enum:equilibrium-bound-2}. 
If furthermore $\Fix(L)$ has exactly $m$ elements, there is exactly one in each ball and none outside, so these points exhaust $\Fix(L)$ and the claimed Hausdorff bound follows.
\end{proof}

\begin{proof}[Proof of \cref{cor:2x2-delta}]
Let $A_1 = (a, b)^\top$ and $A_2 = (c,d)^\top$ and define $A_0 \Is A_1 - A_2 = (a - c, b - d)^\top$. 
For brevity, write $p_1 = P_1(x)$ and $p_2 = P_2(x)$. 
With $R_i \Is \WH{F_i'}(x)$, 
\begin{align}
    & R_1 
    = A_1 - (p_1 A_1 + p_2 A_2) 
= p_2 (A_1 - A_2) 
    = p_2 A_0, 
    \\
    & R_2 
    = A_2 - (p_1 A_1 + p_2 A_2) 
= - p_1 (A_1 - A_2) 
    = - p_1 A_0. 
\end{align} 
Let 
$\sigma_A(x) 
\Is A_0^\top \Sigma A_0 
= 
((a - c) - (b - d))^2 \cdot x_1 x_2 \ge 0$, 
with equality iff either $a - c = b - d$ or $x_1x_2 = 0$. 
Then, we confirm 
\begin{align}
    & \sigma_1 = R_1^\top \Sigma R_1 = p_2^2 A_0^\top \Sigma A_0 = p_2^2 \sigma_A,\\
    & \sigma_2 = R_2^\top \Sigma R_2 = p_1^2 A_0^\top \Sigma A_0 = p_1^2 \sigma_A,
    \\ 
    & 
    p_1 \sigma_1 + p_2 \sigma_2 
    = 
    p_1 p_2^2 \sigma_A + p_2 p_1^2 \sigma_A 
    = (p_1 + p_2) p_1 p_2 \sigma_A 
    = p_1 p_2 \sigma_A. 
\end{align}
Thus, 
$\WH{\sigma}_1(x) = p_2(p_2 - p_1) \sigma_A$ and $\;\WH{\sigma}_2(x) = p_1(p_1 - p_2) \sigma_A$. 
\end{proof}

\begin{proof}[Proof of \cref{prop:approx-coord}]
Write the state as $x=(y,1-y)$ with $y = x_1$, and use $F_i(y)$, $L_i(y)$, $\TlL_i(y)$, and $\sigma_A(y)$ as shorthand for the corresponding functions evaluated at $(y,1-y)$. 
We first express the fixed-point condition for the approximated rule $\TlL$ in a normalized local coordinate around $x^*$. 
Since $F_1(y)-F_2(y)=\lambda(y-x^*)$, set
\begin{align}
    u \Is \frac{\lambda(y-x^*)}{\eta},
\end{align}
so that $y(u)=x^*+\eta u/\lambda$ and we can use the normalized logit map for $u$: 
\begin{align}
    P_1(y(u)) = \pi(u) \Is \frac{\exp(u)}{1+\exp(u)}. 
\end{align}
As $F$ is linear, the curvature term $q$ in \cref{eq:def.delta}
vanishes. Moreover, \cref{cor:2x2-delta} gives $\sigma_A(y)=\lambda^2y(1-y)$ and
\begin{align}
    \widehat v_1(y(u))
    =\theta\sigma_A(y(u))
    (1-\pi(u))(1-2\pi(u)).
\end{align}
Thus, the fixed-point condition $y-\TlL_1(y)=0$ becomes
\begin{align}
    G(u,\eta,\theta)
    \Is
    y(u)
    -\pi(u)
    -\theta\sigma_A(y(u))
    \pi(u)(1-\pi(u))(1-2\pi(u))
    =0.
    \label{eq:proof-scaled-two-action-fp}
\end{align}
Take
$u_0\Is\pi^{-1}(x^*)=\log\frac{x^*}{1-x^*}$, so that
$\pi(u_0)=x^*$. At $(u,\eta,\theta)=(u_0,0,0)$,
\begin{align}
    G(u_0,0,0)
    & 
    = x^* - \pi(u_0) = 0
    \quad \text{and}
    \\
    G_u(u_0,0,0)
    & 
    =-\pi'(u_0)
    =-x^*(1-x^*)\ne0. 
\end{align}
The implicit function theorem therefore yields a unique smooth function
$u(\eta,\theta)$ in a neighborhood of $(0,0)$ such that
$G(u(\eta,\theta),\eta,\theta)=0$ and $u(0,0)=u_0$.
Then, in the original coordinate,  
\begin{align}
    \Tlx(\eta,\theta)
    &=y\bigl(u(\eta,\theta)\bigr)
    =x^*+\frac{\eta}{\lambda}u(\eta,\theta)
    \label{eq:fixed-point-scaled-coordinate}
\end{align}
for $\eta>0$ and $\theta\ge0$. 
Because $x^*\in(0,1)$, this fixed point is interior for all sufficiently small $(\eta,\theta)$. 
It is locally unique because $u(\eta,\theta)$ is the unique solution near $u_0$. 
Formula \eqref{eq:fixed-point-scaled-coordinate} also defines a smooth extension to $\eta=0$ where $\Tlx(0,\theta)=x^*$.

Differentiating \cref{eq:fixed-point-scaled-coordinate} at the origin gives
\begin{align}
    \partial_\eta\Tlx(0,0)
    &=\frac{u(0,0)}{\lambda} = \frac{u_0}{\lambda} 
    =\frac{1}{\lambda}\log\frac{x^*}{1-x^*},
    \\
    \partial_\theta\partial_\eta\Tlx(0,0)
    &=\frac{u_\theta(0,0)}{\lambda}.
\end{align}
Implicit differentiation of $G(u(\eta,\theta),\eta,\theta)=0$ yields
$u_\theta(0,0) = -  G_\theta(u_0,0,0) / G_u(u_0,0,0)$. 
Since the required derivatives are
\begin{align}
    G_\theta(u_0,0,0)
    &=-\sigma_A(x^*)x^*(1-x^*)(1-2x^*),
    \\
    G_u(u_0,0,0)
    &=-x^*(1-x^*), 
\end{align}
it follows that
\begin{align}
    \partial_\theta\partial_\eta\Tlx(0,0)
    &= \frac{u_\theta(0,0)}{\lambda} = -\frac{\sigma_A(x^*)}{\lambda}(1-2x^*).
\end{align}
For the sampling effect, the smoothness of $u$ and \cref{eq:fixed-point-scaled-coordinate} imply
\begin{align}
    \frac{\Tlx(\eta,\theta)-\Tlx(\eta,0)}{\eta\theta}
    &=
    \frac{1}{\lambda}
    \int_0^1 u_\theta(\eta,t\theta)\,\mathrm{d}t
    \longrightarrow
    \frac{u_\theta(0,0)}{\lambda}
\end{align}
as $(\eta,\theta)\to(0,0)$ through positive pairs. 
This establishes \cref{eq:approx-coord-displacement}. 
Because $x^*<1/2$, $u_0<0$, and hence $u(\eta,\theta)<0$ for
all sufficiently small $(\eta,\theta)$. 
Thus, $\Tlx-x^*=\eta u(\eta,\theta)/\lambda$ has the sign opposite to $\lambda$. 

The discussion above characterizes the approximated fixed point $\Tlx$. 
To compare the approximated fixed point with a fixed point of $L$, we now restrict attention to admissible positive pairs for which
$k=(2\theta\eta^2)^{-1}\in\mathbb N$.
Define the scalar residuals
\begin{align}
    \TlR(y)\Is y-\TlL_1(y)
    \quad\text{and}\quad
    R(y)\Is y-L_1(y),
\end{align}
and write $\Tlu\Is u(\eta,\theta)$, so that
$\Tlx=y(\Tlu)$ and $\TlR(y(u))=G(u,\eta,\theta)$. At the approximated
fixed point 
$G_u(\Tlu,\eta,\theta)
\to G_u(u_0,0,0)
=-x^*(1-x^*)<0$ as $(\eta,\theta)\to(0,0)$. 
By continuity of $G_u$ and $u(\eta,\theta)$, there are
constants $c_0>0$ and $\epsilon>0$ such that 
\begin{align}
    G_u(s,\eta,\theta)\le -c_0
    \quad\text{whenever}\quad
    |s-\Tlu|\le\epsilon 
    \label{eq:proof-Gu-uniform-bound}
\end{align}
for all sufficiently small $(\eta,\theta)$. 

As we focus on small $\eta$, we may assume that $0<\eta\le1$. 
Applying \cref{eq:theorem-error-bound} with the maximum norm and substituting
$k=(2\theta\eta^2)^{-1}$ gives
\begin{align}
    \sup_{y\in[0,1]}|R(y)-\TlR(y)|
    &\le
    \sup_{x\in X}\|L(x)-\TlL(x)\|_\infty
    \le Ck^{-2}\eta^{-4}
    =4C\theta^2,
    \label{eq:proof-two-action-rule-transfer}
\end{align}
where $C$ is independent of $\eta$ and $\theta$. 
We use this uniform bound to bracket a zero of $R$, and hence a true SLE, around $\Tlx$.

Fix $c_1>4C/(c_0|\lambda|)$ and set
$y_\pm\Is\Tlx\pm c_1\eta\theta^2 = x^* + (\eta/\lambda) \Tlu \pm c_1\eta\theta^2$. 
Because $c_1$ is fixed, $\Tlx\to x^*$, $\eta\theta^2\to0$, and $\theta^2\to0$, we may restrict the parameter neighborhood for $(\eta,\theta)$ further so that
\begin{align}
    c_1|\lambda|\theta^2<\epsilon
    \quad\text{and}\quad
    |\Tlx-x^*|+c_1\eta\theta^2
    <\min\{x^*,1-x^*\}, 
\end{align}
where $\epsilon > 0$ is the same as in \cref{eq:proof-Gu-uniform-bound}. 
The second restriction implies $0<y_-<y_+<1$ and ensures that the implied states are in fact valid population states. 
In terms of $u$, we have 
\begin{align}
    u_\pm
    \Is\frac{\lambda(y_\pm-x^*)}{\eta}
    = \frac{\lambda(x^* + (\eta/\lambda) \Tlu \pm c_1 \eta \theta^2 - x^*)}{\eta}
    = \Tlu\pm c_1\lambda\theta^2.
\end{align}
The first restriction ensures that the entire segment from $\Tlu$ to either
$u_-$ or $u_+$ lies in the neighborhood from
\cref{eq:proof-Gu-uniform-bound}. Because $\TlR(\Tlx)=G(\Tlu,\eta,\theta)=0$,
the fundamental theorem of calculus gives
\begin{align}
    \TlR(y_\pm)-\TlR(\Tlx)
    &=G(u_\pm,\eta,\theta)-G(\Tlu,\eta,\theta)
    =\int_{\Tlu}^{u_\pm}G_u(s,\eta,\theta)\,\mathrm{d}s.
\end{align}
By \cref{eq:proof-Gu-uniform-bound}, the integrand has a constant sign, so
\begin{align}
    |\TlR(y_\pm)|
    &=\int_{\min\{\Tlu,u_\pm\}}^{\max\{\Tlu,u_\pm\}}
      |G_u(s,\eta,\theta)|\,\mathrm{d}s 
    \ge c_0|u_\pm-\Tlu|
    =c_0c_1|\lambda|\theta^2
    >4C\theta^2,  
    \label{eq:TlR-bound}
\end{align}
where the last inequality follows from the assumption that $c_1>4C/(c_0|\lambda|)$. 
Moreover, $u_--\Tlu=-(u_+-\Tlu)$, while $G_u$ has the same sign on both
integration segments. Hence $\TlR(y_-)$ and $\TlR(y_+)$ have opposite
signs. By \cref{eq:proof-two-action-rule-transfer},
\begin{align}
    |R(y_\pm)-\TlR(y_\pm)|
    \le
    \sup_{y\in[0,1]}|R(y)-\TlR(y)|
    \le 
    4C\theta^2
    <|\TlR(y_\pm)|,
\end{align}
so $R(y_\pm)$ must have the same sign as $\TlR(y_\pm)$. Thus, $R(y_-)$ and $R(y_+)$ have opposite signs. 
Continuity of $R$ then yields a zero $\Brx$ between $y_-$ and $y_+$. 
Equivalently, $\Brx$ is an interior fixed point of $L$. 
For each admissible parameter pair, choose any such zero. The construction gives $|\Brx-\Tlx| \le c_1\eta\theta^2$, and hence $|\Brx-\Tlx| / (\eta\theta) \le c_1\theta\Toz$.
This proves \cref{eq:approx-coord-true-distance}. 
Combining this estimate with \cref{eq:approx-coord-displacement} also shows that $\Brx$ has the same leading-order sampling displacement from $\Tlx(\eta,0)$.

It remains to locate the true fixed point relative to the mixed Nash equilibrium. 
For the approximate equilibrium $\Tlx$, since $\Tlu\to u_0 = \log \frac{x^*}{1 - x^*} \ne0$, for sufficiently small parameters, we have $|\Tlu| = |u(\eta,\theta)| \ge |u_0| - |\Tlu - u_0| > |u_0|/2$ and hence 
\begin{align}
    |\Tlx-x^*|
    =\frac{\eta}{|\lambda|}|\Tlu|
    \ge\frac{|u_0|}{2|\lambda|}\eta,
\end{align}
whereas $|\Brx-\Tlx|\le c_1\eta\theta^2=o(\eta)$ as discussed. 
Thus, $\Brx$ is eventually too close to $\Tlx$ to cross $x^*$ along the joint limit, so $\Brx-x^*$ has the same sign as
$\Tlx-x^*$. 
\end{proof}

\begin{proof}[Proof of \cref{cor:separable.2}] 
A general formula is available for a separable game of the form $F_i(x)=f_i(x_i)$ for every $i\in S$. 
$F'(x) = F'(x)^\top = \diag{(f'_i(x_i))_{i = 1}^n}$ for all $x \in X$. 
Also let $a_i(x)\Is d_i(x)e_i$ so that $F_i'(x)=a_i(x)$. 
We have 
$\OL{F'}(x)
=\sum_{l\in S}P_l(x)a_l(x)
=F'(x)P(x)$. 
Consequently, 
    $\WH{F_i'}(x)
    =a_i(x)-F'(x)P(x)
    =F'(x)(e_i-P(x))$. 
From this and \cref{eq:v-vector-form}, we show 
    $v_i(x)
    =\frac{1}{2k\eta^2}
    (e_i-P(x))^\top K(x)(e_i-P(x))
    $ where $K(x) \Is F'(x) \Sigma(x) F'(x)$. 

For the two-action case, let $a_i=F_i'(x_i)e_i$. 
We have $\WH{F_1'}(x)=P_2(x)(a_1-a_2)$ and 
$\WH{F_2'}(x)=-P_1(x)(a_1-a_2)$. 
We note 
\begin{align}
    (a_1-a_2)^\top\Sigma(x)(a_1-a_2)
    =\left(F_1'(x_1)+F_2'(x_2)\right)^2x_1x_2
    =\sigma_F(x).
\end{align}
It follows that
$v_1=(2k\eta^2)^{-1}P_2^2\sigma_F$ and
$v_2=(2k\eta^2)^{-1}P_1^2\sigma_F$. Hence,
\begin{align}
    \WH v_1
    &=P_2(v_1-v_2)
    =\frac{1}{2k\eta^2}P_2(P_2-P_1)\sigma
    =\frac{1}{2k\eta^2}P_2(1-2P_1)\sigma.
\end{align}
Also, \Cref{eq:delta-with-curvature} yields \cref{eq:2x2-curvature-premium} as we note $x_1(1-x_1)=x_2(1-x_2)=x_1x_2$. 
\end{proof}

\begin{proof}[Proof of \cref{lem:tau-lim.local-limit}]
Set $\tau_k\Is\sqrt{k}\eta_k$. Fix a compact set $K\subset T$.
As $x^*\in\Int{X}$, $x^k(h)=x^*+h/\sqrt{k}$ lies in $\Int{X}$ for every $h\in K$ when $k$ is sufficiently large.
For such $k$ and $h$, the sample $z$ follows $\Mult(k\!\mid\!x^k(h))$. 
Define the normalized random vector $U_{k,h} \in T$ by 
\begin{align}
    U_{k,h}
    \Is\frac{z-kx^k(h)}{\sqrt{k}}
    \quad\text{where}\quad 
    z \sim \Mult(k\!\mid\!x^k(h)). 
\end{align}
Then, with $w = z/k$, by the definitions of $\ClL^k$ and $L^{k,\eta_k}$, 
\begin{align}
    \ClL^k(h)
    &=
    \BbE\left[
        P^{\eta_k}\left(w\right)
    \right] 
    =
    \BbE\left[
        \Lambda^{\eta_k}\left(
            Aw
        \right)
    \right] 
    =
    \BbE\left[
        \Lambda^{\eta_k}
        \bigl(
            A\left(w-x^*\right)
        \bigr)
    \right]
    \\
    &=
    \BbE\left[
        \Lambda^{\tau_k}\bigl(
            \sqrt{k}\,A\left(w-x^*\right)
        \bigr)
    \right]
    =
    \BbE\left[
        \Lambda^{\tau_k}\bigl(A(h+U_{k,h})\bigr)
    \right].
    \label{eq:proof-critical-rescaled-rule}
\end{align}
The third equality uses $Ax^*=\Olu \Vt1$ and invariance of logit choice to a common payoff addition, the fourth uses $\tau_k=\sqrt{k}\eta_k$, and the last uses $\sqrt{k}(w - x^*)=h+U_{k,h}$.

For every sequence $h_k\to h$ in $K$, set $p_k=x^k(h_k)$ and write
$z=\sum_{r=1}^k\xi_{k,r}$, where the $\xi_{k,r}$ are independent
categorical vectors satisfying $\BbP(\xi_{k,r}=e_i)=p_{k,i}$. Then
\begin{align}
    U_{k,h_k}
    =\frac{1}{\sqrt{k}}
    \sum_{r=1}^k\bigl(\xi_{k,r}-p_k\bigr).
\end{align}
Because $p_k\to x^*$, the covariance of this sum satisfies
$\Sigma(p_k)\to\Sigma^*$. Moreover,
\begin{align}
    \left\|
        \frac{\xi_{k,r}-p_k}{\sqrt{k}}
    \right\|
    \le 
        \frac{\|\xi_{k,r}\| + \|p_k\|}{\sqrt{k}}
    \le\frac{2}{\sqrt{k}}. 
\end{align}
Thus, for every $\epsilon>0$, the Lindeberg indicators vanish for all sufficiently large $k$. 
The multivariate Lindeberg--Feller theorem \citep[e.g.,][Prop.~2.27]{vanderVarrt-Book2000} thus yields 
\begin{align}
    U_{k,h_k}\xrightarrow{d}Z^*,
    \label{eq:proof-critical-clt}
\end{align}
where $\xrightarrow{d}$ denotes convergence in distribution.
As $\tau_k\to\tau>0$, and $(t,y)\mapsto\Lambda^t(Ay) \in (0,1)^n$ is bounded and continuous for $t>0$, \Cref{eq:proof-critical-rescaled-rule,eq:proof-critical-clt} yield $\ClL^k(h_k)\to Q^\tau(h)$.
Continuity of $Q^\tau$ and sequential compactness of $K$ then imply $\ClL^k\to Q^\tau$ uniformly on $K$. 

It remains to show convergence of the derivatives. 
For any $u\in T$, differentiating the multinomial probabilities in the definition of $\ClL^k(h) = L^{k,\eta}(x^{k}(h))$ gives the score identity
\begin{align}
    D\ClL^k(h)u
    =
    \BbE\left[
        \Lambda^{\tau_k}\bigl(A(h+U_{k,h})\bigr)
        u^\top\diag{x^k(h)}^{-1}U_{k,h}
    \right].
    \label{eq:proof-critical-score}
\end{align}
Indeed, for a fixed sample $z$, the sum $h+U_{k,h}=z/\sqrt{k}-\sqrt{k}x^*$ is independent of $h$, so only the multinomial probability is differentiated in this calculation.
Since the multinomial coefficient is independent of $h$ and
$D_hx^k(h)u=u/\sqrt{k}$, we have
\begin{align}
    D_h\log M^k(z\!\mid\!x^k(h))u
    &=
    \frac{1}{\sqrt{k}}
    \sum_{i\in S}\frac{z_i u_i}{x_i^k(h)} 
    =
    \sum_{i\in S}
    \frac{(z_i-k x_i^k(h))u_i}
    {\sqrt{k}\, x_i^k(h)}
    +\sqrt{k}\sum_{i\in S}u_i
    \\
    &=
    \sum_{i\in S}
    \frac{(z_i-k x_i^k(h))u_i}
    {\sqrt{k}\, x_i^k(h)}=
    \sum_{i\in S}
    \frac{[U_{k,h}]_i u_i}
    { x_i^k(h)},
\end{align}
where the last equality uses $u\in T$, and hence $\sum_i u_i=0$.
The final expression equals
$u^\top\diag{x^k(h)}^{-1}U_{k,h}$, which gives the score factor in
\cref{eq:proof-critical-score}.
Uniformly over $h\in K$ and unit vectors $u\in T$, the score variables in \cref{eq:proof-critical-score} are bounded in $L^2$, and hence uniformly integrable, because $x^k(h)$ stays bounded away from the boundary of $X$ and $\BbE[\|U_{k,h}\|^2] = \trace[\Sigma(x^k(h))] \le1$. 
Consequently, for arbitrary sequences $h_k\to h$ in $K$ and $u_k\to u$ in $T$, the central limit theorem and uniform integrability imply that \cref{eq:proof-critical-score} converges to
\begin{align}
    \BbE\left[
        \Lambda^\tau\bigl(A(h+Z^*)\bigr)
        u^\top\diag{x^*}^{-1}Z^*
    \right].
    \label{eq:proof-critical-score-limit}
\end{align}
Gaussian integration by parts converts \cref{eq:proof-critical-score-limit} to $DQ^\tau(h)u$.
Let $b\Is\diag{x^*}^{-1}u$.
For every continuously differentiable scalar function $g$ with bounded derivative, Stein's Lemma gives 
$\BbE\bigl[g(Z^*)b^\top Z^*\bigr]
=
\BbE\bigl[Dg(Z^*)\Sigma^*b\bigr]$. 
In the present case,
\begin{align}
    \Sigma^*b
    =
    \Sigma^*\diag{x^*}^{-1}u
    = 
    (\diag{x^*} - x^* x^{*\top}) \diag{x^*}^{-1} u 
    =
    u-x^*(\Vt1^\top u)
    =
    u,
\end{align}
since $\Vt1^\top u = 0 $ for $u \in T$.
For each $i\in S$, apply Stein's Lemma to
$g_i(z)\Is\Lambda_i^\tau(A(h+z))$, whose derivative is
$Dg_i(z)=D\Lambda_i^\tau(A(h+z))A$, to obtain
\begin{align}
    \BbE\left[
        \Lambda_i^\tau\bigl(A(h+Z^*)\bigr)b^\top Z^*
    \right]
    & =
    \BbE\left[
        D\Lambda_i^\tau\bigl(A(h+Z^*)\bigr)A\Sigma^*b
    \right] 
    \\
    & =
    \BbE\left[
        D\Lambda_i^\tau\bigl(A(h+Z^*)\bigr)Au
    \right].
\end{align}
Stacking these componentwise identities gives
\begin{align}
    \eqref{eq:proof-critical-score-limit}
    =
    \BbE\left[
        D\Lambda^\tau\bigl(A(h+Z^*)\bigr)Au
    \right]
    =
    DQ^\tau(h)u.
\end{align}
Continuity of $DQ^\tau$, together with sequential compactness of $K$ and of the unit sphere in $T$, now gives uniform convergence of the derivatives on $K$.
Since $K$ was arbitrary, the lemma follows.
\end{proof}

\begin{proof}[Proof of \cref{thm:tau-lim.joint-limit}]
For the proof, let $\SfC \Is I-\frac{1}{n}\Vt1\Vt1^\top$ and define $J \Is \SfC A$. 
Here, $\SfC$ is a projection matrix onto $T$, and $J$ defines a linear map from $T$ to $T$. 
We note $\SfC F(x^* + h) = \SfC A (x^* + h) = \Olu \SfC \Vt1 + \SfC Ah = J h$.  
Condition \eqref{eq:tau-lim.regularity} ensures that $h \mapsto J h$ is one-to-one and hence invertible on $T$. 
We can thus interchangeably use the displacement of population state $h\in T$ and the (centered) payoff variation $u \in T$ induced by $h$ because $u = \SfC (F(x + h) - F(x)) = \SfC A h = Jh$. 
Let $V\Is JZ^*$ be the payoff variation induced by $Z^*$.
Common-payoff invariance of logit choice gives
\begin{align}
    Q^\tau(h)
    =
    \BbE\left[\Lambda^\tau(Jh+V)\right]
    =
    \BbE\left[\Lambda^\tau(u+V)\right].
\end{align}
We also define the expected surplus in terms of $u \in T$ as follows:
\begin{align}
    W^\tau(u)
    \Is
    \BbE\left[
        \tau\log\sum_{i\in S}
        \exp\left(\frac{u_i+V_i}{\tau}\right)
    \right].  
    \label{eq:proof-critical-surplus}
\end{align} 

\begin{remark}
The vector $\varepsilon$ in \cref{eq:tau-lim.random-utility} is independent of $V$.
Up to the additive constant $\tau\gamma$, where $\gamma$ is Euler's constant, $W^\tau$ is the surplus function of the additive random utility model with disturbance $V+\tau\varepsilon$:
\begin{align}
    W^\tau(u)+\tau\gamma
    =
    \BbE\left[\max_{i\in S}\{u_i+V_i+\tau\varepsilon_i\}\right].
\end{align}
The Williams--Daly--Zachary theorem therefore identifies its gradient with the choice-probability map $u\mapsto\BbE[\Lambda^\tau(u+V)]$.
Moreover, because the composite disturbance has a strictly positive smooth density, Theorem 2.1 of \citet{Hofbauer-Sandholm-ECTA2002} implies that this map restricts to a diffeomorphism from $T$ onto $\Int{X}$ and has a positive-definite derivative on $T$.
These conclusions concern the centered payoff coordinate $u$.
In the state coordinate $h$, the chain rule gives $\nabla_h W^\tau(Jh)=J^\top\BbE[\Lambda^\tau(Jh+V)]$, so $Q^\tau(h)$ is not itself a gradient of $W^\tau$ with respect to $h$. 
Below, we rely on the invertibility of $J$ rather than convexity in $h$.
The argument below verifies the required properties directly for completeness and derives intermediate formulas to be used later.
\end{remark}

\UlHeading{Part \cref{enum:tau-lim.root}}
We show existence and uniqueness of $h^\tau$. 
Differentiation gives
\begin{align}
    \nabla W^\tau(u)
    =\BbE[\ell]
    \quad \text{and} \quad 
    D^2W^\tau(u)
    =
    \frac{1}{\tau}
    \BbE\bigl[
        \diag{\ell}
        -\ell\ell^\top
    \bigr]
    =
    \frac{1}{\tau}
    \BbE\bigl[
        \Sigma(\ell)
    \bigr]
    \label{eq:proof-critical-surplus-derivatives}
\end{align}
with $\ell \Is \Lambda^\tau(u + V)$. 
The Hessian is positive definite on $T$. 
Indeed, for every $u\in T\setminus\{\Vt0\}$ and every positive probability vector $\ell$, 
\begin{align}
    u^\top\bigl(\diag{\ell}-\ell\ell^\top\bigr)u =\sum_{i\in S}\ell_i \bigl(u_i-\sum_{j\in S}\ell_j u_j\bigr)^2 >0.
    \label{eq:proof-sigma-PD}
\end{align}
For $u \in T$, define 
\begin{align}
    H^\tau(u)\Is W^\tau(u)-\la x^*,u\ra. 
\end{align}
Since the log-sum-exp function dominates the maximum and $\BbE[V_i]=0$ for every $i\in S$,
\begin{align}
    W^\tau(u)
    \ge
    \BbE\left[\max_{i\in S}(u_i+V_i)\right]
    \ge
    \max_{i\in S}u_i.
\end{align}
It follows that
\begin{align}
    H^\tau(u)
    \ge
    \max_{i\in S}u_i-\la x^*,u\ra.
    \label{eq:proof-critical-coercive-bound}
\end{align}
The right-hand side is coercive on $T$. 
To see this, its minimum over $\{u\in T:\|u\|=1\}$ is strictly positive. 
In fact, equality would require every $i \in S$, which has positive weight $x_i^*$, to attain the maximum, making $u$ a constant vector, which is impossible for a unit vector in $T$. 
Thus, $H^\tau$ is coercive and strictly convex, and it has a unique minimizer $u^\tau\in T$ for each $\tau\in (0,\infty)$. 
The first-order condition on $T$ is
\begin{align}
    \nabla W^\tau(u^\tau) = \BbE[\Lambda^\tau(u^\tau+V)] = x^*.
    \label{eq:proof-critical-surplus-foc}
\end{align}
The vectors on both sides of \cref{eq:proof-critical-surplus-foc} sum to one, so the first-order condition on $T$ indeed implies the vector equality.
By condition \eqref{eq:tau-lim.regularity}, $J$ has a well-defined inverse on $T$, so $h^\tau\Is J^{-1}u^\tau$ is the unique solution of $x^*=Q^\tau(h^\tau)$. 
This proves part \ref{enum:tau-lim.root}.

\UlHeading{Part \cref{enum:tau-lim.equilibrium}}
By \cref{lem:tau-lim.local-limit}, $\ClL^k\to Q^\tau$ in $C^1_{\mathrm{loc}}(T)$.
Define
\begin{align}
    R_k(h)
    \Is
    \ClL^k(h)-x^*-\frac{h}{\sqrt{k}},
    \quad\text{and}\quad 
    R(h)
    \Is
    Q^\tau(h)-x^*.
\end{align}
Thus, $R_k\to R$ in $C^1_{\mathrm{loc}}(T)$.
Moreover,
\begin{align}
    DQ^\tau(h)
    =
    \frac{1}{\tau}
    \BbE\left[
        \diag{\ell(h)}-\ell(h)\ell(h)^\top
    \right]J,
    \label{eq:proof-critical-choice-derivative}
\end{align}
where $\ell(h)\Is\Lambda^\tau(J(h+Z^*))$. 
The expected matrix preceding $J$ is positive definite on $T$ by \cref{eq:proof-sigma-PD}, and $J$ is invertible when restricted to $T$ by condition \eqref{eq:tau-lim.regularity}. 
Thus, $DR(h^\tau)$ is nonsingular.To show the existence of a unique zero of $R_k$ near $h^\tau$, we use the persistence of a nondegenerate zero under local perturbations.

For completeness, let $A_0\Is DR(h^\tau)$. 
Choose a closed ball $\overline B_r(h^\tau)\subset T$ small enough that $I-A_0^{-1}DR(h)$ has norm below $1/4$ throughout the ball. 
By $C^1$ convergence of $\ClL^k$ to $Q^\tau$, $R_k(h^\tau)\to\Vt0$ and, for all sufficiently large $k$, the map
\begin{align}
    \Phi_k(h)\Is h-A_0^{-1}R_k(h)
\end{align}
maps $\overline B_r(h^\tau)$ into itself and is a contraction there. 
It therefore has a unique fixed point $h_k$ in this ball. 
Equivalently, $R_k(h_k)=\Vt0$, and this is the only zero of $R_k$ in $\overline B_r(h^\tau)$.
The uniform convergence of $R_k$ and uniqueness of the zero of $R$ in the ball imply $h_k\to h^\tau$.
Thus, after increasing $k^*$ if necessary, $x^{k,\eta_k}=x^*+h_k/\sqrt{k}$ is the unique SLE in the neighborhood $x^*+\frac{1}{\sqrt{k}} \{h\in T:\|h-h^\tau\|<r\}$.
The global invertibility of $Q^\tau$ does not imply global uniqueness of finite-$k$ SLE, because the $C^1_{\mathrm{loc}}$ convergence controls $R_k$ only on fixed compact subsets of $T$.
This proves the limit and part \ref{enum:tau-lim.equilibrium}.
\end{proof}

\begin{proof}[Proof of \cref{prop:tau-lim.stability}]
As in the preceding proof, let $\SfC=I-\frac{1}{n}\Vt1\Vt1^\top$ and $J=\SfC A$ on $T$, and write the local SLE as $x^{k,\eta_k}=x^k(h_k)$ with $h_k\Is\sqrt{k}\bigl(x^{k,\eta_k}-x^*\bigr)\to h^\tau$.
Since $\sum_{i\in S}\bigl(L_i^{k,\eta_k}(x)-x_i\bigr)=0$, the dynamic \eqref{eq:sLD} leaves the affine hull of $X$ invariant, and local stability of the interior rest point $x^{k,\eta_k}$ is governed by the eigenvalues of the Jacobian of $x\mapsto L^{k,\eta_k}(x)-x$ restricted to $T$.

From $\ClL^k(h)=L^{k,\eta_k}(x^*+h/\sqrt{k})$, the chain rule gives $DL^{k,\eta_k}(x^k(h_k))u=\sqrt{k}\,D\ClL^k(h_k)u$ for $u\in T$.
The $C^1_{\mathrm{loc}}$ convergence in \cref{lem:tau-lim.local-limit}, together with $h_k\to h^\tau$ and continuity of $DQ^\tau$, yields $D\ClL^k(h_k)\to DQ^\tau(h^\tau)$.
By \cref{eq:proof-critical-choice-derivative},
\begin{align}
    DQ^\tau(h^\tau)=MJ,
    \qquad
    M\Is\frac{1}{\tau}\BbE\bigl[\diag{\ell}-\ell\ell^\top\bigr]
    \quad \text{with} \quad
    \ell = \Lambda^\tau\bigl(J(h^\tau+Z^*)\bigr),
\end{align}
and $M$ is symmetric and positive definite on $T$ by \cref{eq:proof-sigma-PD}.
The random utility property used below is that $M$ is symmetric and positive definite on $T$.
The relevant spectrum, however, is that of $MJ$, and the game-dependent product $MJ$ need be neither symmetric nor positive definite.
The sign of the real parts must therefore be established from the properties of the game $A$.
Define the approximation error for the derivative by
\begin{align}
    E_k\Is D\ClL^k(h_k)-MJ.
\end{align}
The convergence $D\ClL^k(h_k)\to MJ$ implies $\|E_k\|\to0$.
Thus, the Jacobian of the dynamic on $T$ may be written as 
\begin{align}
    B_k
    &=DL^{k,\eta_k}(x^k(h_k))-I
    =\sqrt{k}\bigl(MJ+E_k\bigr)-I.
\end{align}

We locate the spectrum of $MJ$ on $T$.
Let $\mu\in\BbC$ be an eigenvalue with eigenvector $v=a+\Rmi b\ne\Vt0$, where $a,b\in T$.
Positive definiteness makes $M$ invertible on $T$, so $MJv=\mu v$ implies $Jv=\mu M^{-1}v$, and multiplying by the conjugate transpose $v^*$ gives
\begin{align}
    v^*Jv=\mu\,v^*M^{-1}v
    \qquad \text{with} \quad v^*M^{-1}v>0.
\end{align}
For real $h\in T$, we have $\SfC h=h$ and hence $h^\top Jh=h^\top A h=\la h,Ah\ra$, so that
\begin{align}
    \operatorname{Re}(v^*Jv)
    =a^\top Ja+b^\top Jb
    =\la a,Aa\ra+\la b,Ab\ra.
\end{align}
Under the condition in part \ref{enum:tau-lim.stability-stable}, this quantity is negative, so every eigenvalue of $MJ$ on $T$ has a negative real part.
Under the condition in part \ref{enum:tau-lim.stability-unstable}, we instead use the trace.
On $T$, $MJ$ is similar to $M^{1/2}JM^{1/2}$, and the trace of $M^{1/2}KM^{1/2}$ vanishes for the skew-symmetric part $\frac{1}{2}(J - J^\top)$ of $J$.
Hence,
\begin{align}
    \trace[MJ]
    =\trace\Bigl[M^{1/2}\,\tfrac{J+J^\top}{2}\,M^{1/2}\Bigr]>0
\end{align}
because the symmetric part of $J$ is positive definite on $T$ by condition \ref{enum:tau-lim.stability-unstable}.
Since the eigenvalues of the real matrix $MJ$ occur in conjugate pairs, their real parts sum to $\trace[MJ]>0$, so at least one eigenvalue has a positive real part.

Finally, eigenvalues depend continuously on matrix entries, so there is $\delta>0$ such that, for all sufficiently large $k$, the eigenvalues $\mu_k$ of $MJ+E_k$ on $T$ satisfy $\max\operatorname{Re}\mu_k\le-\delta$ under condition \ref{enum:tau-lim.stability-stable} and $\max\operatorname{Re}\mu_k\ge\delta$ under condition \ref{enum:tau-lim.stability-unstable}.
The eigenvalues of $B_k$ are $\sqrt{k}\mu_k-1$.
In the first case, every real part is at most $-\sqrt{k}\delta-1<0$, and $x^{k,\eta_k}$ is locally asymptotically stable by the principle of linearized stability.
In the second case, some real part is at least $\sqrt{k}\delta-1>0$ for all large $k$, and the rest point is unstable.
\end{proof}

\begin{proof}[Proof of \cref{cor:tau-lim.endpoints}]
As in the preceding proof, let $\SfC=I-(1/n)\Vt1\Vt1^\top$, define $Jh=\SfC Ah$ for $h\in T$, and write $V=JZ^*$ and $u^\tau=Jh^\tau$. 

\UlHeading{Part \cref{enum:tau-lim.small-tau}}
We first identify the unique solution of the zero-noise limiting problem,
then show that $h^\tau$ converges to this solution, and finally establish
the pointwise convergence of $Q^\tau$ to $Q^0$.

Define, for $u\in T$,
\begin{align}
    W^0(u) 
    \Is
    \BbE\Bigl[\max_{i\in S}\{u_i+V_i\}\Bigr]
    \quad\text{and}\quad
    H^0(u)
    \Is
    W^0(u)-\la x^*,u\ra.
\end{align}
The function $W^0$ is the zero-noise counterpart of the smoothed surplus
$W^\tau$. 
As $\tau\Toz$, the independent Gumbel shocks disappear and the log-sum-exp becomes a maximum. Because the remaining Gaussian disturbance $V$ is supported on $T$, we verify the properties of $H^0$ directly.

The covariance matrix $J\Sigma^*J^\top$ is positive definite on $T$.
Hence, $V$ has full support on $T$, and pairwise payoff ties occur with
probability zero. The function $W^0$ is convex because it is the expectation
of a maximum of affine functions. It is in fact strictly convex on $T$.
To see this, take distinct $u,u'\in T$. Since $u'-u$ is not constant, there
are actions $i,j$ for which
$(u_i'-u_i)>(u_j'-u_j)$. There is then a nonempty open set of realizations
of $V$ on which $j$ uniquely maximizes $u+V$, whereas $i$ uniquely maximizes
$u'+V$. This set has positive Gaussian probability, making the convexity
inequality for the expected maximum strict. The coercivity argument in
\cref{eq:proof-critical-coercive-bound} also applies to $H^0$. Therefore,
$H^0$ has a unique minimizer $u^0\in T$.

Since ties occur with probability zero, differentiation under the expectation gives
\begin{align}
    \frac{\partial W^0(u)}{\partial u_i}
    =
    \BbP\left(u_i+V_i>\max_{j\ne i}\{u_j+V_j\}\right).
\end{align}
The first-order condition for minimizing $H^0$ on $T$ requires $\nabla W^0(u^0)-x^*$ to be orthogonal to $T$. This difference must therefore be a multiple of $\Vt1$. 
Since both $\nabla W^0(u^0)$ and $x^*$ sum to one, the multiple is zero, and hence $\nabla W^0(u^0)=x^*$. 
Setting $h^0\Is J^{-1}u^0$, \Cref{eq:tau-lim.gaussian-br} shows that this condition reduces to $Q^0(h^0)=x^*$. The uniqueness of $u^0$ and the invertibility of $J$ on $T$ give the uniqueness of $h^0$.

We next show that $h^\tau\to h^0$. For every $y\in\BbR^n$, the standard
log-sum-exp bound is
\begin{align}
    0
    \le
    \tau\log\sum_{i\in S}\exp(y_i/\tau)
    -\max_{i\in S}y_i
    \le
    \tau\log n.
    \label{eq:proof-critical-logsum-bound}
\end{align}
Indeed, after factoring out $\max_i y_i$, the remaining sum inside the
logarithm lies between $1$ and $n$. Applying the bound to $y=u+V$ and taking
expectations gives
\begin{align}
    0\le H^\tau(u)-H^0(u)\le\tau\log n
    \qquad \forall u\in T.
    \label{eq:proof-critical-logsum-bound-2}
\end{align}
Thus, $H^\tau\to H^0$ uniformly on all of $T$ as $\tau \Toz$.

Fix an arbitrary sequence $\{\tau_m\}$ in $(0,1]$ such that $\tau_m\Toz$.
To keep the corresponding minimizers in a common compact set, define the
coercive lower bound from \cref{eq:proof-critical-coercive-bound} by
\begin{align}
    g(u)\Is\max_{i\in S}u_i-\la x^*,u\ra.
\end{align}
For each $m$, the minimizing property of $u^{\tau_m}$ and
\cref{eq:proof-critical-logsum-bound-2} imply
\begin{align}
    g(u^{\tau_m})
    \le H^{\tau_m}(u^{\tau_m})
    \le H^{\tau_m}(\Vt0)
    \le H^0(\Vt0)+\log n.
\end{align}
Because $g$ is coercive on $T$, the sequence $\{u^{\tau_m}\}$ lies in a
compact subset of $T$.

Let $\{u^{\tau_{m_\ell}}\}$ be any convergent subsequence and suppose that it converges to $\bar u$. 
For every $u\in T$, optimality gives
\begin{align}
    H^{\tau_{m_\ell}}(u^{\tau_{m_\ell}})\le H^{\tau_{m_\ell}}(u).
\end{align}
Passing to the limit along the convergent subsequence, using uniform convergence of $H^{\tau_{m_\ell}}$ to $H^0$, yields
$H^0(\bar u)\le H^0(u)$ for every $u\in T$. Thus, $\bar u=u^0$ by the
uniqueness of the minimizer of $H^0$. 
This shows that every convergent subsequence has the
same limit, so compactness implies $u^{\tau_m}\to u^0$. Since the sequence $\{\tau_m\}$ was arbitrary, $u^\tau\to u^0$. Finally, continuity of $J^{-1}$ on $T$ gives $h^\tau=J^{-1}u^\tau\to h^0=J^{-1}u^0$.

It remains to show the convergence of the response maps. For each fixed
$h\in T$, pairwise ties among the components of
$J(h+Z^*)$ occur with probability zero. For every realization with a
unique maximizer, $\Lambda^\tau(J(h+Z^*))$ converges to the corresponding
unit vector as $\tau\Toz$. Since every component of the logit vector lies
in $[0,1]$, dominated convergence gives $Q^\tau(h)\to Q^0(h)$.
This proves part \ref{enum:tau-lim.small-tau}.

\UlHeading{Part \cref{enum:tau-lim.large-tau}}
Let $s\Is\tau^{-1}$, let $\Lambda\Is\Lambda^1$ denote the unit-noise
logit map, and set $\Omega\Is J\Sigma^*J^\top$. The scaling identity
$\Lambda^\tau(y)=\Lambda(y/\tau)$ converts the first-order condition
\cref{eq:proof-critical-surplus-foc} into
\begin{align}
    \BbE[\Lambda(b^\tau+sV)]
    =x^*,
    \qquad
    b^\tau\Is\frac{u^\tau}{\tau}\in T.
    \label{eq:proof-critical-large-tau-equation}
\end{align}
This rescaling separates the large deterministic displacement, contained
in $b^\tau$, from the Gaussian sampling term $sV$, which becomes small as
$\tau\to\infty$.

We first identify the solution when the sampling term is removed. The
identity $\Lambda(b)=x^*$ holds exactly when
$b=\log x^*+c\Vt1$ for some scalar $c$. Restricting to $T$ therefore
selects the unique solution
\begin{align}
    b^*\Is\SfC\log x^*.
\end{align}
Moreover,
    $D\Lambda(b^*)
    =\diag{x^*}-x^*x^{*\top}
    =\Sigma^*$, 
which is nonsingular when restricted to $T$. Applying the implicit
function theorem to
\begin{align}
    \Psi(b,s)\Is\BbE[\Lambda(b+sV)]-x^*
\end{align}
gives a unique smooth solution $b(s)$ near $(b^*,0)$. For $s>0$, this local solution equals $b^\tau$ because the critical equation was shown above to have a unique solution. Since $V$ and $-V$ have the same distribution, $\Psi(b,s)=\Psi(b,-s)$. 
Local uniqueness then implies $b(s)=b(-s)$. 
Thus, the expansion of $b(s)$ contains only even powers of $s$, and in particular $b(s)-b^*=O(s^2)$.

Expanding $\Psi(b(s),s)=\Vt0$ around $(b^*,0)$ now gives
\begin{align}
    \Vt0
    =
    \Sigma^*(b(s)-b^*)
    +
    \frac{s^2}{2}v^*
    +O(s^4) 
    \quad \text{where} \quad 
    v_i^*\Is\la D^2\Lambda_i(b^*),\Omega\ra.
    \label{eq:proof-critical-large-tau-taylor}
\end{align}
Here, the symmetry of the Gaussian distribution eliminates the odd powers of $s$. 
The bounded derivatives of the logit map and the finite Gaussian moments control the remainder. 
The Hessian term $v^*$ comes from the same calculation that
produces the variance premium for the approximated rule in
\cref{eq:v-vector-form}. 
In particular, specializing the general logit Hessian formula \cref{eq:P''} to the identity payoff map, unit noise, and $\Lambda(b^*)=x^*$ gives
\begin{align}
    D^2\Lambda_i(b^*)
    =x_i^*
    \Bigl(
        (e_i-x^*)(e_i-x^*)^\top
        -\sum_{l\in S}x_l^*(e_l-x^*)(e_l-x^*)^\top
    \Bigr).
\end{align}
Taking the inner product with $\Omega$ therefore yields
\begin{align}
    v_i^*
    =x_i^*
    \bigl(\,
        \rho_i^*-
        \sum_{l\in S}x_l^*\rho_l^*
    \bigr),
    \quad\text{with}\quad
    \rho_i^*\Is(e_i-x^*)^\top\Omega(e_i-x^*).
    \label{eq:proof-critical-hessian-contraction}
\end{align}
To identify $\rho_i^*$, note that $e_i-x^*\in T$ and $J=\SfC A$, so
\begin{align}
    J^\top(e_i-x^*)
    =A^\top(e_i-x^*)
    =a_i-\overline a^*.
\end{align}
Consequently, by the definition in
\cref{eq:tau-lim.sigma-star}, 
\begin{align}
    \rho_i^*
    &=(a_i-\overline a^*)^\top
      \Sigma^*(a_i-\overline a^*)
    =\sigma_i^*.  
\end{align}
Hence, in vector form, $v^*=\Sigma^*\sigma^*$.

Substituting this identity into
\cref{eq:proof-critical-large-tau-taylor} and using
$\Sigma^*\SfC\sigma^*=\Sigma^*\sigma^*$ gives
\begin{align}
    \Sigma^*
    \left(
        b(s)-b^*+\frac{s^2}{2}\SfC\sigma^*
    \right)
    =O(s^4).
\end{align}
The vector in parentheses lies in $T$, where $\Sigma^*$ is invertible.
Therefore, plugging $s=\tau^{-1}$,
\begin{align}
    b^\tau
    =\SfC\log x^*
    -\frac{1}{2\tau^2}\SfC\sigma^*
    +O(\tau^{-4}).
\end{align}
Multiplying by $\tau$ and recalling $\tau b^\tau = u^\tau=Jh^\tau$ gives the centered
payoff expansion
\begin{align}
    Jh^\tau
    =\tau\SfC\log x^*
    -\frac{1}{2\tau}\SfC\sigma^*
    +O(\tau^{-3}).
    \label{eq:proof-critical-large-tau-centered-payoff}
\end{align}

It remains to translate this centered identity into the two formulations
in the main text. By $J=\SfC A$,
\Cref{eq:proof-critical-large-tau-centered-payoff} is equivalent to
\begin{align}
    \SfC\left(
        Ah^\tau+\frac{1}{2\tau}\sigma^*
        -\tau\log x^*
    \right)
    =O(\tau^{-3}).
\end{align}
The vector in parentheses is the local virtual payoff
$\ClF^\tau(h^\tau)$ from \cref{eq:tau-lim.local-virtual-payoff}. Its
centered part is of order $O(\tau^{-3})$, so there is a scalar $c^\tau$
such that every component equals $c^\tau+O(\tau^{-3})$. This is
\cref{eq:tau-lim.virtual-indifference}, the local counterpart of the
virtual-payoff representation in
\cref{obs:TlL-interpretation}~\ref{enum:virtual-game-2}.
Finally, subtracting components $i$ and $j$ in
\cref{eq:proof-critical-large-tau-centered-payoff} gives
\cref{eq:tau-lim.large-tau-expansion}. This proves part
\ref{enum:tau-lim.large-tau}.
\end{proof}

\begin{proof}[Proof of \cref{prop:k=1}] 
We have 
$L^{1,\eta}(x) 
= 
\sum_{j\in S} M^1(e_j \mid x) \cdot P ^\eta (e_j)
= 
\sum_{j \in S} \Pi_{ij} x_j$ 
where $\Pi_{ij} \Is P_i^\eta(e_j) \in (0,1)$ is the $\eta$-logit choice probability of action $i\in S$ at $e_j$. 
The equilibrium condition $x = L^{1,\eta}(x)$ then reads $x_i = \sum_{j \in S} \Pi_{ij} x_j$ or $x = \Pi x$. 
Because $\Pi \in \BbRg^{n\times n}$ can be seen as a transition probability matrix of an irreducible Markov chain, there exists a unique $x^* \in X$ satisfying $x^* = \Pi x^*$, and $x^*$ is proportional to the positive eigenvector of $\Pi$ associated with the Perron--Frobenius eigenvalue $1$. 
Furthermore, \eqref{eq:sLD} reduces to a linear dynamical system $\Dtx = \Pi x - x = (\Pi - I) x$. 
It follows via standard arguments that $X$ is forward invariant under the dynamic and $x^*$ is globally asymptotically stable. 
\end{proof}

\begin{proof}[Proof of \cref{prop:k=2}]
Assume $k = 2$ and $n = 2$. 
Write $y = x_1$ and let $z \in\{0,1,2\}$ denote how many times an action-$1$ player is drawn in a sample of size $k = 2$. 
Then, $\BbP(z = 2) = x_1^2 = y^2$, 
$\BbP(z = 1) = 2 x_1 x_2 = 2y (1 - y)$, and  
$\BbP(z = 0) = x_2^2 = (1 - y)^2$. 
Let $\rho_z \in (0,1)$ be the choice probability of action $1$ for each realization of $z$. 
That is, 
$\rho_2 = P_1^\eta(e_1)$, 
$\rho_1 = P_1^\eta(\frac{1}{2}(e_1 + e_2))$, 
and 
$\rho_0 = P_1^\eta(e_2)$. 
Define 
\begin{align}
    f(y) 
    & 
    \Is 
    L_1^{k,\eta}(y, 1 - y) - y \\ 
    & = \rho_2 \times y^2 + 2 \rho_1 \times y(1 - y) + \rho_0 \times (1 - y)^2 
    - y
    \\
    & = (\rho_2 - 2 \rho_1 + \rho_0) y^2 + \left( 2(\rho_1 - \rho_0) - 1 \right) y + \rho_0. 
    \label{eq:k=2.f(x)}
\end{align}
Then, an SLE is a solution to $f(y) = 0$ satisfying $y \in (0,1)$. 
Also, $f(0) = \rho_0 > 0$ and $f(1) = \rho_2 - 1 < 0$. 
Because $f$ is a quadratic function, there is a unique $y^* \in (0,1)$ that solves $f(y^*) = 0$. 
Furthermore, $y^*$ is globally asymptotically stable because $\Dty = f(y) > 0$ for $y \in [0,y^*)$ and $\Dty = f(y) < 0$ for $y \in (y^*,1]$. 

\begin{remark}
While the sampling logit choice rule is a quadratic function for general $n \ge 2$ under $k = 2$, uniqueness does not follow in general for $n \ge 3$. 
See, e.g., \cite{Saburov-Yusof-2018} for an explicit construction of a quadratic positive stochastic operator with three fixed points on the simplex when $n = 3$. 
As a concrete example, the linear population game $A = \diag{(1,6,6)}$ can give rise to three fixed points. 
\end{remark}
\end{proof}

\begin{proof}[Proof of \cref{prop:global-uniqueness-in-s-t-coordination}]
Write $y=x_1$ and let $z\in\{0,\ldots,k\}$ be the number of action-$1$ observations in a sample of size $k$. 
Since $F_1(y) - F_2(y) = s y - t(1 - y) = (s + t) y - t$, for each $z$, the inferred payoff difference is 
\begin{align}
    d_z = (s+t) \frac z k - t \ge -t, 
\end{align}
which is increasing in $z$. 
The corresponding logit choice probability of action $1$ is  
\begin{align}
    \rho_z^\eta
    = p(d_z) 
    \quad \text{where}\quad 
     p(d) \Is \frac{1}{1+\exp(-d/\eta)}. 
     \label{eq:def.p.app}
\end{align}
With $\{\rho_z^\eta\}_{z = 0}^k$, the sampling logit choice probability is the Bernstein polynomial
\begin{align}
    \ell_\eta(y)
    \Is L_1^{k,\eta}(y,1-y)
    =\sum_{z=0}^k \binom{k}{z}
      y^z(1-y)^{k-z}\rho_z^\eta. 
    \label{eq:coord-Bernstein-map}
\end{align}
An SLE is a zero of $f_\eta(y)\Is\ell_\eta(y)-y$, and \eqref{eq:sLD} is $\Dty = f_\eta(y)$. 

For every $\eta>0$,
$f_\eta(0)=\rho_0^\eta>0$ and $f_\eta(1)=\rho_k^\eta-1<0$, implying all zeros of $f_\eta$ must lie in $(0,1)$, as mentioned in \cref{sec:model}. 

\UlHeading{The $k = 1$ case}
Condition \eqref{eq:global-uniqueness-s-t} holds automatically because $s > (k - 1) t = 0$. 
Solving $y = \Pi_{11} y + \Pi_{21} (1 - y) = \rho_1^\eta y + \rho_{0}^\eta (1 - y)$, 
the unique fixed point is
\begin{align}
    y_\eta^*=\frac{\rho_0^\eta}{\rho_0^\eta+1-\rho_1^\eta} 
    = \frac{1 + \Rme^{-s /\eta}}{1 + \Rme^{-(s - t)/\eta} + 2\Rme^{-s /\eta}} \in (0,1)
    ,
\end{align}
which is globally asymptotically stable and converges to $1$ as $\eta \Toz$ because $s>t$. 

\UlHeading{The $k = 2$ case}
Condition \eqref{eq:global-uniqueness-s-t} also holds automatically in this case: $s > (k - 1)t = t$. 
Existence, uniqueness, and global asymptotic stability follow from \cref{prop:k=2}. 
It remains to identify the zero-noise limit. 
Since $(d_0,d_1,d_2)=(-t,(s-t)/2,s)$, we have
$(\rho_0^\eta,\rho_1^\eta,\rho_2^\eta)\to(0,1,1)$ as $\eta\Toz$. 
Consequently, for every fixed $y\in(0,1)$, 
\begin{align}
    \lim_{\eta \Toz}
    f_\eta(y)
    = y^2+2y(1-y)-y
    = y(1-y)>0
\end{align}
from \cref{eq:k=2.f(x)}. 
Since $f_\eta$ is positive below its unique zero $y_\eta^*$, it follows that
$y<y_\eta^*$ for every fixed $y\in(0,1)$ and all sufficiently small $\eta>0$. 
Thus, $y_\eta^*\to1$. 

\UlHeading{The $k \ge 3$ case} 
Note that $f_\eta$ is concave if $\ell_\eta$ is concave. 
Together with its signs at $y \in \{0,1\}$, concavity of $f_\eta$ implies the uniqueness of its zero in $(0,1)$. 
We show the concavity of $\ell_\eta$ under condition \eqref{eq:global-uniqueness-s-t} below. 

Set $h=(s+t)/k$. Then, $h \ge (s + t) \cdot t / (s + t) = t$ from condition \eqref{eq:global-uniqueness-s-t}, and this implies $d_1 = h - t \ge0$ and $d_z \ge 0$ for all $z \ge 1$. 
For any $d\ge0$, 
$1/2 \le p(d) < 1$ and 
\begin{align}
    p''(d)
    =\frac{1}{\eta^2}p(d)
    \left(1-p(d)\right)
    \left(1-2p(d)\right)
    \le0.
    \label{eq:p''}
\end{align}
Thus, $p$ is concave on $[0,\infty)$. 
Since $d_z\ge0$ for $z\ge1$ and the points $\{d_z\}$ are equally spaced, this concavity implies
\begin{align}
    & \rho_{z+2}^\eta-2\rho_{z+1}^\eta+\rho_z^\eta\le0
    & \forall z=1,\ldots,k-2.
    \label{eq:discrete-concavity-q}
\end{align}
The remaining inequality for $z=0$ also holds.  
In fact, since $p'$ is even and decreasing on $[0,\infty)$, and $|d_1 - u|\le |d_1  + u| = d_1 +u$ for every $u\in[0,h]$, we have 
\begin{align}
    \rho_1^\eta-\rho_0^\eta
    =\int_0^h p'(d_1-u)\,\mathrm{d}u
    &
    =\int_0^h p'(|d_1-u|)\,\mathrm{d}u
    \\ 
    & 
    \ge \int_0^h p'(d_1+u)\,\mathrm{d}u
    =\rho_2^\eta-\rho_1^\eta 
\end{align}
as we have $d_1 = h-t \ge 0$, $d_1 - h = d_0$ and $d_1 + h = d_2$, implying condition \eqref{eq:discrete-concavity-q} for $z = 0$. 
The Bernstein differentiation formula then gives \citep[cf.][p.1750]{Salant-Cherry-ECTA2020}
\begin{align}
    \ell_\eta''(y)
    =k(k-1)\sum_{z=0}^{k-2}\binom{k-2}{z}
      y^z(1-y)^{k-2-z}
      \left(\rho_{z+2}^\eta-2\rho_{z+1}^\eta+\rho_z^\eta\right)
    \le0.
\end{align}
Thus, $f_\eta$ is concave, implying that it has exactly one zero $y_\eta^*\in(0,1)$. 
Moreover, $f_\eta$ is positive below this zero and negative above it, proving global asymptotic stability. 

\UlHeading{Zero-noise limit for the $k \ge 3$ case} 
It remains to identify the $\eta \Toz$ limit for $k \ge 3$. 
We first observe that there is no interior fixed point in this limit by showing that $f_0(y) = \ell_0(y) - y > 0$ for all $y\in (0,1)$. 
Specifically, if $s>(k-1)t$, then as $\eta\Toz$, $\rho_0^\eta\to0$ and $\rho_z^\eta\to1$ for all $z\ge1$, and hence, using the binomial theorem, 
\begin{align}
    & f_0(y) = \ell_0(y) - y =1-(1-y)^k - y = (1 - y)(1 - (1 - y)^{k - 1}) >0
    & \forall y\in(0,1).
\end{align}
If $s=(k-1)t$, then $\rho_0^\eta\to0$, $\rho_1^\eta=1/2$, and $\rho_z^\eta\to1$ for $z\ge2$. 
In this case, 
\begin{align}
    \ell_0(y)
    &=\frac{k}{2}y(1-y)^{k-1}
      +\sum_{z=2}^k\binom{k}{z}y^z(1-y)^{k-z} 
    =1-(1-y)^k-\frac{k}{2}y(1-y)^{k-1}
\end{align}
from the binomial theorem. 
Then, for all $y\in(0,1)$, setting $r=1-y$, we have 
\begin{align}
    f_0(y) = \ell_0(y)-y
    &=1-r^k-\frac{k}{2}(1-r)r^{k-1}-(1-r) \\
    &=r-r^k-\frac{k}{2}(1-r)r^{k-1} \\
    &=r(1-r)\left(
        \sum_{j=0}^{k-2}r^j-\frac{k}{2}r^{k-2}
      \right) \\
    &>r(1-r)\left(k-1-\frac{k}{2}\right)r^{k-2}
    >0. 
    \label{eq:coord-boundary-limit-map}
\end{align}
The third equality uses
$r-r^k=r(1-r^{k-1})=r(1-r)\sum_{j=0}^{k-2}r^j$. 
The first strict inequality uses
$\sum_{j=0}^{k-2}r^j>(k-1)r^{k-2}$ for $r\in(0,1)$, and the second uses $k\ge3$. 
Thus, in either case $f_0(y) > 0$ for all $y \in (0,1)$, implying that every limit point of $y_\eta^*$ must belong to $\{0,1\}$. 

We now show that $\lim_{\eta\Toz} y_\eta^* = 0$ cannot happen, implying $\lim_{\eta\Toz} y_\eta^* = 1$. 
Since every term in \eqref{eq:coord-Bernstein-map} is nonnegative, the fixed-point equation $y_\eta^*=\ell_\eta(y_\eta^*)$ implies, by retaining only the term for $z=1$, that
\begin{align}
    y_\eta^*
    \ge k y_\eta^*(1-y_\eta^*)^{k-1}\rho_1^\eta.
\end{align}
Since $y_\eta^*>0$, dividing both sides by $y_\eta^*$ yields
\begin{align}
    1\ge k\rho_1^\eta(1-y_\eta^*)^{k-1}.
    \label{eq:bound-y}
\end{align}
Note that $\rho_1^\eta = p(d_1)$ and $d_1=(s + t)/k - t = (s-(k-1)t)/k$.
If $s>(k-1)t$, then $d_1>0$ and hence $\rho_1^\eta=p(d_1)\to1$ as $\eta\Toz$. 
If $s=(k-1)t$, then $d_1=0$ and $\rho_1^\eta=p(0)=1/2$ for every $\eta>0$. 
If $y_\eta^*\to0$, the right-hand side of the preceding bound \eqref{eq:bound-y} therefore converges to $k>1$ in the strict case and to $k/2 \ge 3/2 > 1$ in the equality case. 
Both conclusions contradict the bound \eqref{eq:bound-y}, so $y_\eta^*\not\to0$. 
Consequently, $y_\eta^*\to1$. 

\UlHeading{Nonuniqueness for the $s < (k- 1)t$ case} 
Finally, suppose that condition \cref{eq:global-uniqueness-s-t} does not hold: $s<(k-1)t$. 
Then, we must have $k\ge3$ since $s>t$, as well as $d_1<0$ and $d_{k-1}>0$. 
In the limit as $\eta\Toz$, therefore, $\rho_0^\eta,\rho_1^\eta\to0$ and $\rho_{k-1}^\eta,\rho_k^\eta\to1$. 
Let $\ell_0$ denote the pointwise limit of \eqref{eq:coord-Bernstein-map}. 
Writing $\rho_z^0\Is\lim_{\eta\Toz}\rho_z^\eta$, we have $0\le\rho_z^0\le1$, $\rho_0^0=\rho_1^0=0$, and $\rho_{k-1}^0=\rho_k^0=1$. 
Since $y^z\le y^2$ for $z\ge2$ and $(1-y)^{k-z}\le(1-y)^2$ for $z\le k-2$, the binomial theorem gives, for every $y\in[0,1]$,
\begin{align}
    \ell_0(y)
    &=\sum_{z=2}^k\binom{k}{z}y^z(1-y)^{k-z}\rho_z^0 
    \le y^2\sum_{z=2}^k\binom{k}{z} 
    =Cy^2, \\
    1-\ell_0(y)
    &=\sum_{z=0}^{k-2}\binom{k}{z}y^z(1-y)^{k-z}\left(1-\rho_z^0\right) 
    \le (1-y)^2\sum_{z=0}^{k-2}\binom{k}{z} 
    =C(1-y)^2,
\end{align}
where both truncated binomial sums equal $C\Is2^k-k-1>0$. 
Choose $0<\epsilon<\min\{1/2,1/C\}$. 
Setting $a=\epsilon$ and $b=1-\epsilon$, we have $0<a<b<1$ and
\begin{align}
    f_0(a) &= \ell_0(a)-a \le Ca^2-a<0, \\
    f_0(b) &= \ell_0(b)-b \ge (1-b)-C(1-b)^2>0.
\end{align}
Since $f_\eta(y)=\ell_\eta(y)-y$ converges pointwise to $f_0(y)=\ell_0(y)-y$, the strict inequalities $f_\eta(a)<0<f_\eta(b)$ hold for all sufficiently small $\eta>0$. 
Moreover, for every $\eta>0$, $f_\eta(0)=\rho_0^\eta>0$ and $f_\eta(1)=\rho_k^\eta-1<0$. 
The intermediate value theorem therefore yields one zero of $f_\eta$ in each of $(0,a)$, $(a,b)$, and $(b,1)$. 

It remains to show that $f_\eta$ has no other zeros. 
The identity 
\begin{align}
    y
    =\sum_{z=0}^k \binom{k}{z}
      y^z(1-y)^{k-z} \frac{z}{k},
\end{align}
allows us to rewrite
\begin{align}
    f_\eta(y)
    =\sum_{z=0}^k \binom{k}{z}
      y^z(1-y)^{k-z}c_z^\eta
    \quad \text{where} \quad 
    c_z^\eta\Is \rho_z^\eta-\frac{z}{k}.
\end{align}
For $y\in(0,1)$, set $r=y/(1-y)>0$. Then
\begin{align}
    f_\eta(y)
    =(1-y)^k \Tlf(r)
    \quad 
    \text{where}
    \quad
    \Tlf(r) \Is \sum_{z=0}^k\binom{k}{z}c_z^\eta r^z.
    \label{eq:tlf}
\end{align}
Since $y\mapsto r=y/(1-y)$ is a bijection from $(0,1)$ to $(0,\infty)$ and $(1-y)^k>0$, the zeros of $f_\eta$ in $(0,1)$ correspond to the positive roots of $\Tlf$, with the same multiplicities. 

We resort to Descartes' rule of signs to bound the number of positive roots of $\Tlf$. To examine the signs of the coefficients $\{c_z^\eta\}_{z = 0}^k$ in $\Tlf$, define, using $p$ from \cref{eq:def.p.app},
\begin{align}
    g_\eta(u)\Is p\bigl((s+t)u-t\bigr)-u,
\end{align}
so that $c_z^\eta=g_\eta(z/k)$. 
Its second derivative is
\begin{align}
    g_\eta''(u)
    =(s+t)^2p''\bigl((s+t)u-t\bigr).
\end{align}
Because $p''(d)>0$ for $d<0$ and $p''(d)<0$ for $d>0$, $g_\eta'$ is strictly increasing on $(0,u^*)$ and strictly decreasing on $(u^*,1)$, where $u^*=t/(s+t)$. 
$g_\eta'$ therefore has at most two zeros. 
If $g_\eta$ had four distinct zeros in $[0,1]$, Rolle's theorem would give a distinct zero of $g_\eta'$ between each pair of consecutive zeros, for a total of three, which is impossible. 
Thus, $g_\eta$ has at most three distinct zeros. 
The coefficients $\{c_z^\eta\}_{z=0}^k=\{g_\eta(z/k)\}_{z=0}^k$ can therefore have at most three sign changes (after zero terms are omitted), because otherwise continuity would force $g_\eta$ to have at least four distinct zeros. 
Since the factors $\binom{k}{z}$ are positive, the coefficients of $\Tlf$ have the same sign changes as $\{c_z^\eta\}_{z=0}^k$. 
Descartes' rule of signs therefore bounds the number of positive roots of $\Tlf$ by three, counting multiplicities, and the same bound applies to the zeros of $f_\eta$ in $(0,1)$. 
As the preceding argument gives three distinct zeros for all sufficiently small $\eta>0$, there are exactly three SLE for such $\eta$. 
\end{proof}

\section{Bilingual game}
\label{app:bilingual} 

Depending on the game, noise can change the geometry of the phase portrait in a substantial way. 
We take the \emph{bilingual game} \citep{Galesloot-Goyal-JME1997,Goyal-Janssen-JET1997} as an example. 
The game models language coordination with a bilingual option, and has been an important subject of study in the literature on learning and contagion in games \citep[e.g.,][]{Immorlica-etal-EC2007,Oyama-Takahashi-JET2015,Arigapudi-JME2020,Arigapudi-JME2024,Arigapudi-JEDC2024}. 

Consider the linear population game with base payoff matrix 
\begin{align}
    & 
        A = 
        \begin{bmatrix}
            1 + g & 0 & 1 + g \\ 1 & 1 & 1 \\ 1 + g - c & 1 - c & 1 + g - c
        \end{bmatrix}
    & 
        \left(0 < g < 1,\ \  0 < c < \tfrac{g}{1 + g} \right). 
        \label{eq:A.BL}
\end{align}
The game \eqref{eq:A.BL} is a variant of the bilingual game in which agents choose among three actions: adopting technology $1$, adopting technology $2$, or using a compatible interface that allows interaction with both technologies. 
Technology $1$ yields a higher coordination payoff than technology $2$: when two agents using technology $1$ meet, they obtain $1+g$, whereas coordination on technology $2$ yields payoff $1$, with $0<g<1$. 
The compatibility option enables interaction with either technology but incurs a cost $c>0$. 
Under the parameter restriction $0<c<\tfrac{g}{1+g}$, the compatibility option is viable but strictly costly. 
It allows agents to hedge against coordination risk while remaining dominated by successful coordination on the superior technology~$1$.

\Cref{fig:BL} depicts the phase portraits of the four dynamics. 
Under the best response dynamic \eqref{eq:BRD}, trajectories eventually reach the superior pure state $e_1$ (\cref{fig:BL-BR}), except for those starting close to $e_2$. 
In particular, states with a substantial share of the compatibility action tend to move first toward higher use of action $3$, and subsequently toward action $1$, reflecting the transitional role of the compatibility option. 
Sampling noise yields a sharp selection (\cref{fig:BL-SBR}), just as in Young's game. 
On the other hand, both \eqref{eq:LD} and \eqref{eq:sLD} admit interior stationary states (\cref{fig:BL-LD,fig:BL-SLD}), a feature that changes the geometry of the flow more substantially and contrasts with \eqref{eq:BRD} and \eqref{eq:sBRD}. 
Developing general conditions distinguishing these two forms of behavior is an interesting direction for future research.

\begin{figure}
    \centering
    \begin{subfigure}[c]{.49\textwidth}
        \includegraphics[width=.85\hsize]{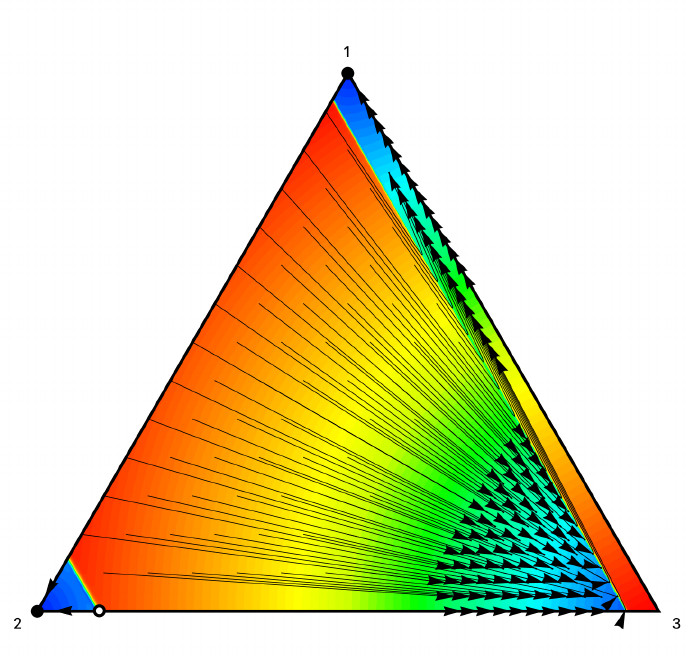}
        \caption{Best response}
        \label{fig:BL-BR}
    \end{subfigure}
    \begin{subfigure}[c]{.49\textwidth}
        \includegraphics[width=.85\hsize]{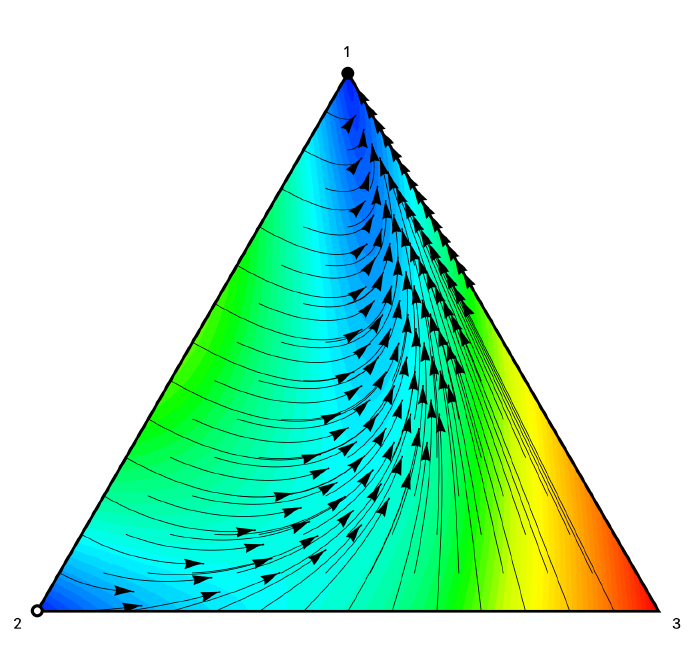}
        \caption{Sampling best response ($k = 2$)}
        \label{fig:BL-SBR}
    \end{subfigure}

    \begin{subfigure}[c]{.49\textwidth}
        \includegraphics[width=.85\hsize]{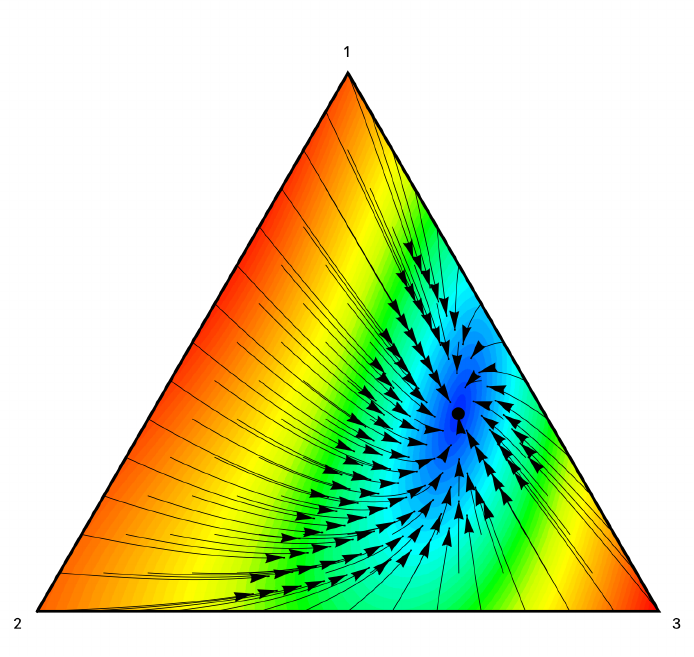}
        \caption{Logit ($\eta = 0.3$)}
        \label{fig:BL-LD}
    \end{subfigure}
    \begin{subfigure}[c]{.49\textwidth}
        \includegraphics[width=.85\hsize]{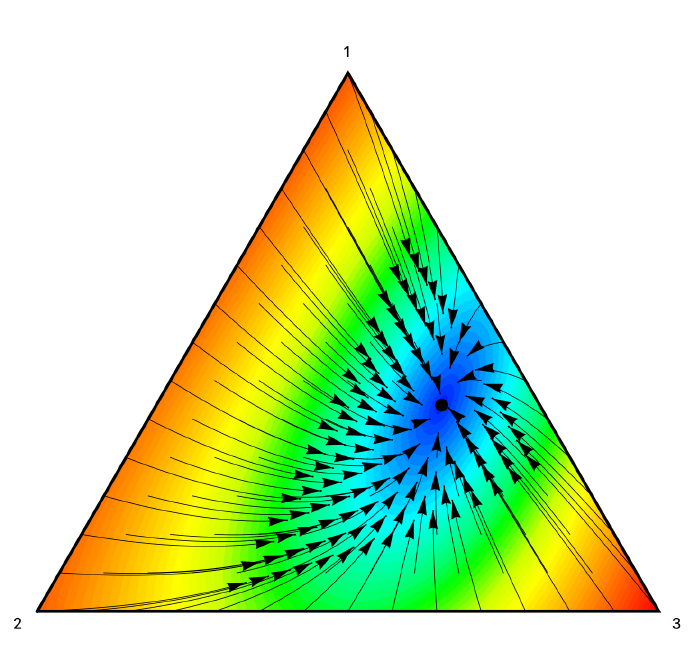}
        \caption{Sampling logit ($k = 2$, $\eta = 0.3$)}
        \label{fig:BL-SLD}
    \end{subfigure}

    \caption{Phase diagrams of the four dynamics in the bilingual game \eqref{eq:A.BL} for the case $g = 0.5$ and $c = 0.05$. This figure is generated with the \texttt{Dynamo} software \citep{Franchetti-Sandholm-BT2013}.} 
    \label{fig:BL}
\end{figure}

\begin{figure}
    \centering
    \begin{subfigure}[c]{.49\textwidth}
        \includegraphics[width=.95\hsize]{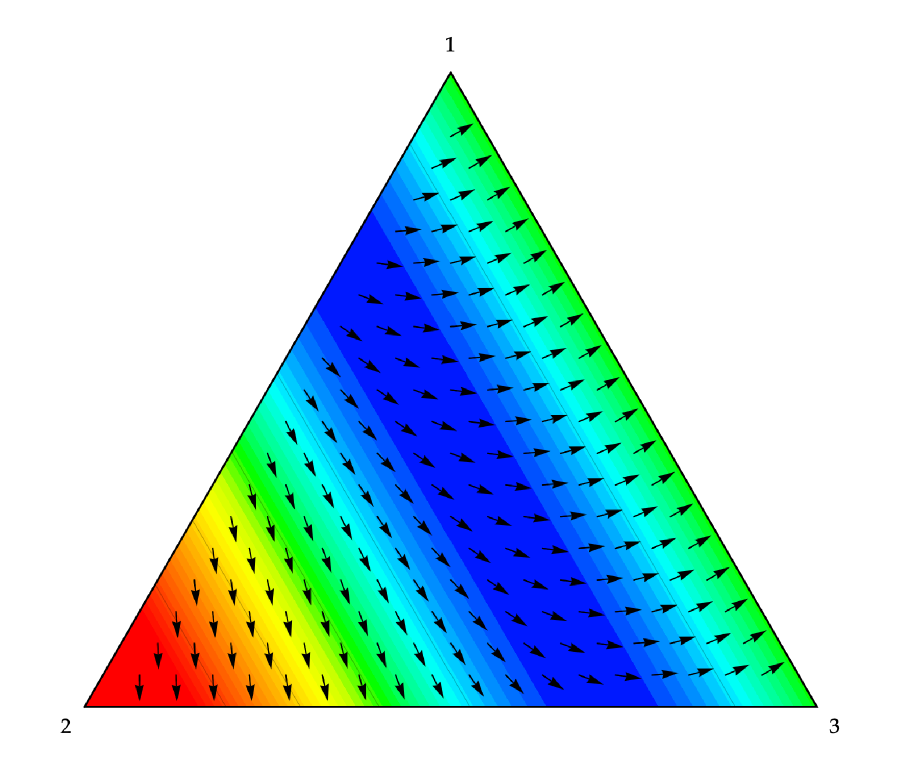}
        \caption{Payoff $F$}
        \label{fig:BL-F}
    \end{subfigure}
	\begin{subfigure}[c]{.49\textwidth}
        \includegraphics[width=.95\hsize]{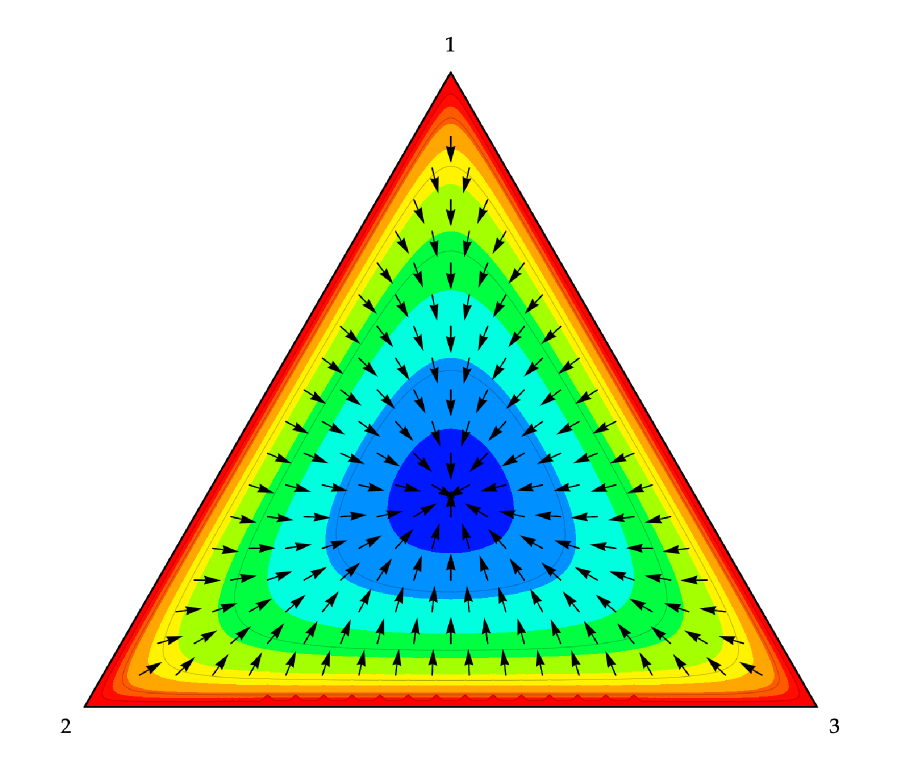}
        \caption{Logit virtual payoff $H$ ($\eta = 0.3$)}
        \label{fig:BL-H}
    \end{subfigure}

    \begin{subfigure}[c]{.49\textwidth}
        \includegraphics[width=.95\hsize]{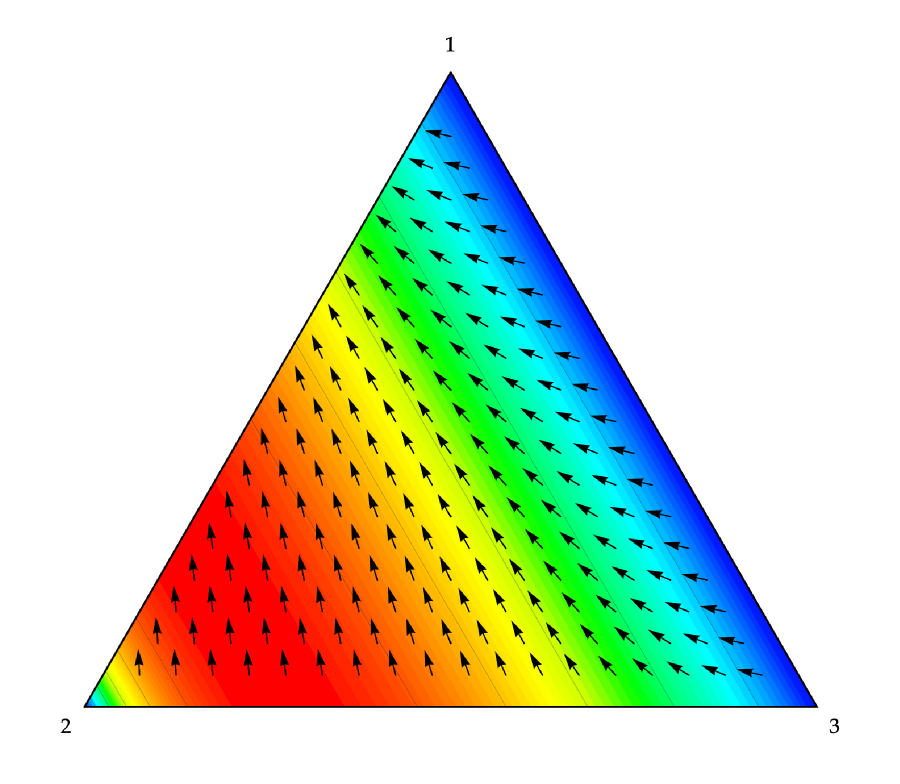}
        \caption{Exact payoff distortion $G$ ($\eta = 0.3$)}
        \label{fig:BL-SL-G}
    \end{subfigure}
    \begin{subfigure}[c]{.49\textwidth}
        \includegraphics[width=.95\hsize]{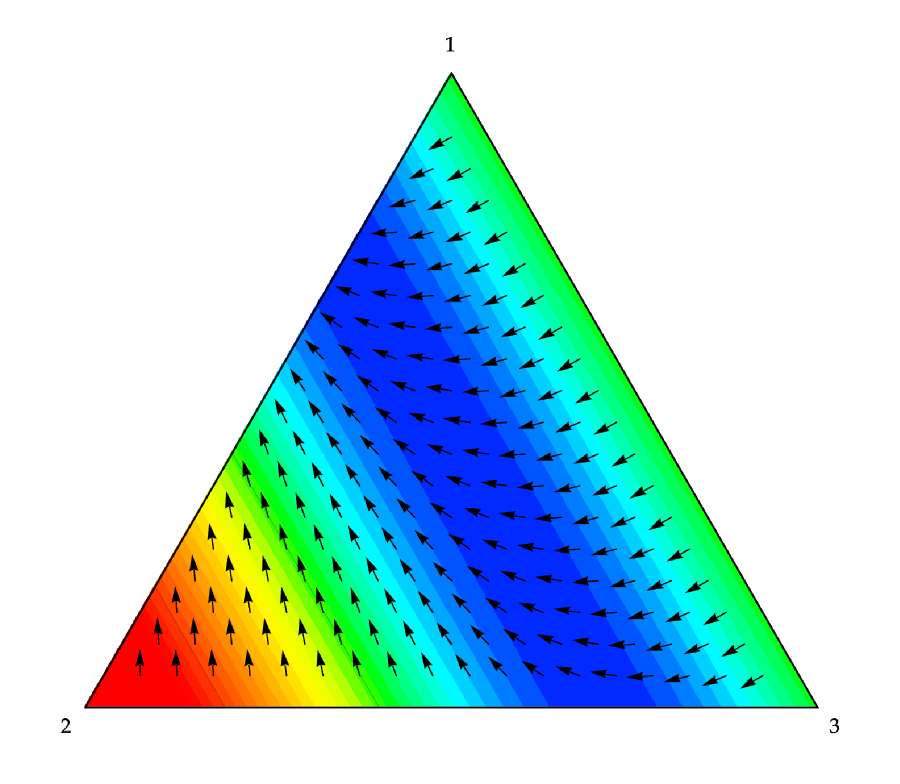}
        \caption{Exact payoff distortion $G$ ($\eta = 0.001$)}
        \label{fig:BL-SBR-G}
    \end{subfigure}
    \caption{Projected payoff fields in the bilingual game \eqref{eq:A.BL} with $g = 0.5$ and $c = 0.05$. Panels~(A) and~(B) show the payoff $F$ and the logit virtual payoff $H$ at $\eta = 0.3$, respectively. Panels~(C) and~(D) show the exact payoff distortion $G$ for $k = 2$ at $\eta = 0.3$ and $\eta = 0.001$, respectively. Arrows show the normalized direction of each field's projection onto the tangent space of $X$, and background contours show its magnitude. The near-zero noise level in panel~(D) approximates the payoff distortion associated with the sampling best response.}
    \label{fig:exact-BL-G} 
\end{figure}

\end{document}